\newtheorem{theorem}{Theorem}
\newtheorem{lemma}[theorem]{Lemma}
\newtheorem{corollary}[theorem]{Corollary}
\theoremstyle{definition}
\newcommand{\qedclaim}{\hfill $\diamond$ \medskip}
\begin{document}

\title{The Game of Cops and Eternal Robbers}\thanks{The first and second authors were supported by NSERC. The research of the fourth author was supported by ANR projects GAG and DISTANCIA (ANR-17-CE40-0015 and ANR-14-CE25-0006, respectively).}

\author[A.\ Bonato]{Anthony Bonato}
\author[M.\ Huggan]{Melissa Huggan}
\author[T.\ Marbach]{Trent Marbach}
\address[A1, A2, A3]{Ryerson University, Toronto, Canada}
\email[A1]{(A1) abonato@ryerson.ca}
\email[A2]{(A2) melissa.huggan@ryerson.ca}
\email[A3]{(A3) trent.marbach@ryerson.ca}
\author[F.\ Mc~Inerney]{Fionn Mc~Inerney}
\address[A4]{Laboratoire d'Informatique et Syst\`emes, Aix-Marseille Universit\'e, CNRS, and Universit\'e de Toulon Facult\'e des Sciences de Luminy, Marseille, France}
\email[A4]{(A4) fmcinern@gmail.com}

\begin{abstract}
We introduce the game of Cops and Eternal Robbers played on graphs, where there are infinitely many robbers that appear sequentially over distinct plays of the game. A positive integer $t$ is fixed, and the cops are required to capture the robber in at most $t$ time-steps in each play. The associated optimization parameter is the eternal cop number, denoted by $c_t^{\infty},$ which equals the eternal domination number in the case $t=1,$ and the cop number for sufficiently large $t.$ We study the complexity of Cops and Eternal Robbers, and show that game is {\bf NP}-hard when $t$ is a fixed constant and {\bf EXPTIME}-complete for large values of $t$. We determine precise values of $c_t^{\infty}$ for paths and cycles. The eternal cop number is studied for retracts, and this approach is applied to give bounds for trees, as well as for strong and Cartesian grids.
\end{abstract}

\maketitle

\section{Introduction}

Graph searching is a rapidly growing area of study within graph theory, with Cops and Robbers and its variants among the most actively studied vertex-pursuit games. The cop number is the main graph parameter in this area, and its consideration has led to a number of deep open problems, such as Meyniel's conjecture on the largest asymptotic order of the cop number of a connected graph. See the books \cite{BN,B} for background on Cops and Robbers and graph searching, and the surveys \cite{BB,BM} for background on Meyniel's conjecture and topological directions, respectively.

In the game of Cops and Robbers, if the robber is caught, then the game finishes with a win for the cops. However, we may consider a game where there are an \emph{infinite} number of robbers appearing, which must be caught sequentially over plays of the game. An analogous game with such infinite attacks is the {\it Eternal Domination game}~\cite{GoddardHH05}. In this game played on a graph $G$, there is a set of guards and an attacker. The guards first occupy a set of vertices, and the players alternate turns, starting with the attacker choosing a vertex $v$ to attack. After an attack, each of the guards may move to an adjacent vertex or remain on their current vertex. The guards win if at the end of each of their turns, at least one guard occupies the attacked vertex $v$; otherwise, the attacker wins. The {\it eternal domination number} of a graph $G$, denoted by $\gamma^{\infty}_{\mathrm{all}}(G)$, is the minimum number of guards needed to win. Note that this is the variant of the Eternal Domination game where multiple guards may occupy the same vertex and every guard may move on the guards' turn. Given a graph $G$ and $k$ a positive integer, deciding whether $\gamma^{\infty}_{\mathrm{all}}(G)\leq k$ is {\bf NP}-hard~\cite{BDEMY17} but not known to be in {\bf NP}. For paths and cycles on $n$ vertices, denoted by $P_n$ and $C_n$, respectively, $\gamma^{\infty}_{\mathrm{all}}(P_n)=\left\lceil \frac{n}{2} \right\rceil$ and $\gamma^{\infty}_{\mathrm{all}}(C_n)=\left\lceil \frac{n}{3} \right\rceil$; see~\cite{GoddardHH05}. There is a linear-time algorithm for computing the eternal domination number of trees~\cite{KM09}. There have been several papers on eternal domination in grids; see, for example, \cite{BeatonFM13,Messinger17,LMS19,MNP19}.

We consider in the present work a new game called {\it Cops and Eternal Robbers}, which is both a new variant of the game of Cops and Robbers and a generalization of the Eternal Domination game. In this game, there are two players: a set of cops and a set of robbers. Movement of the cops and robbers is as in the classical game, where the players are on vertices and move vertex-to-vertex along edges (and may pass by remaining at their current vertex). The cops move first, and then moves alternate between the players. The cops \emph{capture} the robber if they land on his vertex. A {\it play} consists of one game of Cops and Robbers and finishes if the robber is captured. If a robber is captured, then the cops begin the next play on the vertices they occupied at the end of the previous play. In each play, a new robber appears on any unoccupied vertex. The cops \emph{win} if they capture the robber in every play; otherwise, the robber wins (that is, the robber wins if they avoid capture in any one play).

If the time taken to capture the robber is not a concern, then Cops and Eternal Robbers is equivalent to playing Cops and Robbers sequentially over an infinite set of plays, and the number of cops required to capture the robber will be consistent with the classical game. Note that whatever the optimal configuration to capture, the cops could regroup to that configuration on subsequent plays, then capture the robber in finitely many steps. Hence, our restriction on the cops is to capture the robber within a prescribed time in each play. A \emph{time-step} consists of both the turn of the cops and the turn of the robber. We allow a fixed amount of time, say $t$, for the cops to capture the robber on a graph $G$. For $t$ a positive integer, let $c_{t}(G)$ be the minimum number of cops required to capture the robber in at most $t$ time-steps for a single play. Define $c_t^{\infty}(G)$ to be the minimum number of cops needed to ensure capturing the robber in at most $t$ time-steps in each of the infinitely many plays. We refer to this graph parameter as the \emph{eternal cop number} (inspired by the eternal domination number).
Observe that $c_t^{\infty}(G)\geq c(G),$ and that $c_t^{\infty}(G)$ is well-defined since $c_t^{\infty}(G)\leq n$, where $n$ is the order of $G$. Interestingly, when $t=1$, the game of Cops and Eternal Robbers is equivalent to the Eternal Domination game. In particular, for a graph $G$, we have that $c_1^{\infty}(G)=\gamma_{\mathrm{all}}^{\infty}(G)$.

In~\cite{BonatoGHK09}, they defined $\mathrm{capt}_k(G)$ (simplified to $\mathrm{capt}(G)$ when $k=c(G)$) to be the minimum number of time-steps it takes for $k\geq c(G)$ cops to capture the robber in $G$. Observe that for all integers $k\geq c(G)$ we have that
\begin{equation}
c_{2\mathrm{capt}_k(G)}^{\infty}(G)\leq k \quad \text{and} \quad c_{2\mathrm{capt}(G)}^{\infty}(G) \leq c(G). \label{xx}
\end{equation}

Note that in Cops and Eternal Robbers, the plays are not independent: the outcome of the previous play determines the placement of the cops for the next play. The setting is similar to the concept of \emph{conjoined rulesets} for combining a finite number of combinatorial games played sequentially~\cite{HN19}. Given two rulesets $G$ and $H$, the conjoined ruleset of $G$ and $H$ is that players play the first phase of the game under ruleset $G$ and when play is no longer possible using ruleset $G$, they switch to the second phase and play under the ruleset of $H$. In other words, players play a finite number of games, sequentially, on the terminal board of the previous game. In the game of Cops and Eternal Robbers, the board state changes by the robber showing up again and again, as well as the possibility of infinite play. From conjoined games, we know that players may deliberately lose certain games to be the first to play in the next game (knowing they will win the next game). The idea parallels the robber player deliberately losing the first $k$ plays, so that they can win on play $k+1$ (sacrificing themselves for the greater good of them and their allies; see the proof of Theorem~\ref{thm:np-hard} in Section~\ref{sec:complexity}).

We now outline the results of this paper. In Section~\ref{sec:complexity}, we study the complexity of Cops and Eternal Robbers. In particular, we show that the game is {\bf NP}-hard when $t$ is a fixed constant and {\bf EXPTIME}-complete for large values of $t$. In Section~\ref{sec:paths&cycles}, we determine $c_t^{\infty}$ precisely for paths and cycles. In Section~\ref{sec:retracts}, we study $c_t^{\infty}$ using retracts and apply our techniques to bounding the eternal cop number on trees, and strong and Cartesian grids. In the final section, we conclude with further work and open problems.

Throughout this paper, all graphs are undirected, finite, with no multiple edges, and reflexive (the latter property allows players to pass). All logarithms considered throughout are to the base 2 unless otherwise stated. The symbol $\mathbb{N}^{*}$ refers to the set of positive integers. For a general reference on graph theory, the reader is directed to \cite{West}.

\section{Complexity of computing the eternal cop number}\label{sec:complexity}

Recall that {\bf EXPTIME} is the class of decision problems solvable in exponential time. Given a graph $G$ and $k$ a positive integer, deciding whether $c(G)\leq k$ is {\bf EXPTIME}-complete~\cite{Kinnersley15}. We have an immediate, analogous result for Cops and Eternal Robbers for some instances of $t$; namely, those where $t$ is at least as large as the total number of possible configurations for all of the agents.

\begin{theorem}\label{thm:conf}
Let $G$ be a graph on $n$ vertices and $k$ a positive integer. We then have $c_{n\binom{n+k-1}{k}}^{\infty}(G)\leq k$ if and only if $c(G)\leq k$.
\end{theorem}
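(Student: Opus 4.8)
The plan is to prove the two implications separately; the forward direction is essentially free, while the reverse direction carries all the content. For the forward direction, suppose $c_{n\binom{n+k-1}{k}}^{\infty}(G)\leq k$. Since the stated observation $c_t^{\infty}(G)\geq c(G)$ holds for every positive integer $t$, taking $t=n\binom{n+k-1}{k}$ immediately gives $c(G)\leq c_{n\binom{n+k-1}{k}}^{\infty}(G)\leq k$. (Concretely, the first play of the eternal game, in which the cops choose their starting vertices freely and then capture in finite time, is exactly a winning play of classical Cops and Robbers with $k$ cops.)

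For the reverse direction, assume $c(G)\leq k$, so there is a starting configuration $C^{*}$ (a multiset of $k$ vertices) from which $k$ cops can capture any robber in the classical game. The idea is to model a single play as a finite reachability game and bound its length by the number of game configurations. Define a configuration to be a pair $(C,r)$ with $C$ a cop multiset of size $k$ and $r$ a robber vertex; since $G$ has $n$ vertices, there are $\binom{n+k-1}{k}$ cop multisets (cops are indistinguishable and may be co-located) and $n$ robber positions, for a total of $N:=n\binom{n+k-1}{k}$ configurations with the cops to move. This count is precisely the time bound in the statement.

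First I would show that, no matter which configuration is inherited from the previous play, the cops can force a capture in the current play. Starting from an arbitrary configuration, the cops ignore the robber and route each cop along a shortest path to its assigned vertex of $C^{*}$; because the robber cannot prevent a cop from moving onto a vertex (and moving onto the robber is itself a capture), after at most $\mathrm{diam}(G)$ time-steps the cops either capture the robber en route or reassemble at $C^{*}$ with the cops to move, whereupon the classical winning strategy finishes the capture. Hence every configuration lies in the cops' winning region $W$ of the single-play pursuit game. Then, by the standard attractor argument for finite reachability games, I would set $\rho(s)$ to be the optimal worst-case number of time-steps to capture from $s\in W$ and consider the sets $A_i=\{\, s\in W:\rho(s)\leq i \,\}$, where one iteration of the recursion (a cops' move followed by the robber's move) corresponds to exactly one time-step. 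This chain strictly increases until it stabilizes, and since it sits inside the set of $N$ cop-to-move configurations it stabilizes within $N$ iterations, so $\rho(s)\leq N$ for all $s\in W$. Note this bounds the combined optimal strategy at once, so there is no need to add a re-homing cost to a capture cost. As $W$ is all of the configurations, the cops capture within $N=n\binom{n+k-1}{k}$ time-steps in every play, giving $c_{N}^{\infty}(G)\leq k$.

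The main obstacle is the reverse direction, and within it two points require care. For the re-homing step I must confirm that the cops can always reassemble at $C^{*}$ regardless of the robber's interference, using that vertices may be shared and that a cop moving onto the robber's vertex is a capture; if $G$ is disconnected this must be carried out component-by-component, since cops never change components and the optimal distribution places enough cops in each component. For the counting I must verify that a single attractor iteration equals one time-step, so that the bound is the number $N$ of cop-to-move configurations rather than twice that, and that the indistinguishability of cops together with the possibility of co-located cops is exactly what makes $\binom{n+k-1}{k}$ the correct multiset count.
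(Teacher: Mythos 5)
Your proof is correct, and both directions are sound; the core counting idea coincides with the paper's, but your formalization is genuinely different. The paper's entire argument is two sentences: in an optimal cop strategy no configuration (cop multiset, robber vertex) can repeat within a single play, since the moves between two occurrences of the same configuration would be pointless, and hence the capture time is bounded by the number $n\binom{n+k-1}{k}$ of configurations. Your attractor/rank-function argument is the rigorous counterpart of this: instead of appealing to an informal notion of ``optimal strategy does not repeat configurations,'' you build the sets $A_i$ and observe the chain stabilizes within $N$ iterations, with each iteration equal to one time-step, which yields the same bound $N=n\binom{n+k-1}{k}$ without having to justify that optimal strategies exist or are repetition-free. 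You also make explicit a step the paper leaves entirely implicit: that \emph{every} inherited configuration lies in the cops' winning region, which you establish by re-homing to the winning placement $C^{*}$ (correctly noting that the classical strategy from $C^{*}$ beats any robber placement, so the robber's free move before the chase starts is harmless, and handling disconnected $G$ component-by-component). In short, the paper buys brevity at the cost of rigor; your version is self-contained and would survive scrutiny on the two points the paper glosses over, at the cost of length.
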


\begin{proof}
In any optimal strategy for the cops, no configuration of both the cops and the robber is repeated in the same play, as otherwise, all of the moves in between the two occurrences of the same configuration were pointless and thus, not optimal. The proof then follows since there are a total of $n\binom{n+k-1}{k}$ possible configurations of both the cops and the robber.
\end{proof}

Theorem~\ref{thm:conf} implies the following complexity result for Cops and Eternal Robbers.

\begin{corollary}
Given a graph $G$ on $n$ vertices and $k$ a positive integer (not fixed), the problem of deciding whether $c_{n\binom{n+k-1}{k}}^{\infty}(G)\leq k$ is {\bf EXPTIME}-complete.
\end{corollary}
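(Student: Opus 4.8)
The plan is to read Theorem~\ref{thm:conf} as an exact equivalence between the decision problem in the statement and the classical cop number decision problem, and then to transfer the known {\bf EXPTIME}-completeness of the latter across this equivalence. Concretely, for an input $(G,k)$ the quantity $t = n\binom{n+k-1}{k}$ is completely determined by $G$ and $k$, so the language $\{(G,k) : c_{n\binom{n+k-1}{k}}^{\infty}(G)\leq k\}$ and the language $\{(G,k) : c(G)\leq k\}$ have exactly the same set of instances. By Theorem~\ref{thm:conf}, a pair $(G,k)$ belongs to the first language if and only if it belongs to the second, so the two languages in fact coincide. All the work is then in invoking this coincidence in both directions.

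For membership in {\bf EXPTIME}, I would simply appeal to the equivalence: to decide whether $c_{n\binom{n+k-1}{k}}^{\infty}(G)\leq k$, run the {\bf EXPTIME} algorithm for deciding whether $c(G)\leq k$ guaranteed by~\cite{Kinnersley15}. Since the two problems agree on every instance, this algorithm is correct, and it already runs within the required time bound. For {\bf EXPTIME}-hardness, I would take the identity map on instances as the reduction: given an instance $(G,k)$ of the {\bf EXPTIME}-complete problem ``is $c(G)\leq k$?'', output the same pair $(G,k)$, now interpreted as an instance of our problem. This map is trivially computable in polynomial time, and by Theorem~\ref{thm:conf} it preserves yes/no answers. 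Combining membership and hardness yields {\bf EXPTIME}-completeness.

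The step I expect to require the most care is membership, not because the argument is difficult but because it is where a naive approach would break down. A direct attempt to decide the game --- enumerating cop configurations and searching the game tree over the $t$ time-steps --- would have to contend with the fact that $t = n\binom{n+k-1}{k}$ is exponentially large in the input size, so bounding such a search even by exponential time is not immediate. The force of Theorem~\ref{thm:conf} is precisely that it lets us bypass any reasoning about this enormous time horizon and reduce instead to the time-unbounded cop number, whose complexity is already settled in~\cite{Kinnersley15}. Beyond flagging this point, the corollary follows immediately from Theorem~\ref{thm:conf} together with the {\bf EXPTIME}-completeness of computing the cop number.
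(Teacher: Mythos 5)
Your proof is correct, and on the hardness side it coincides exactly with the paper's: both apply Theorem~\ref{thm:conf} together with the {\bf EXPTIME}-hardness of deciding $c(G)\leq k$ from~\cite{Kinnersley15}, via what amounts to the identity map on instances. Where you genuinely diverge is membership. The paper does not route membership through Theorem~\ref{thm:conf} at all: it observes that the game has only $n\binom{n+k-1}{k}=O(n^{k+1})$ configurations of the cops and the robber, which is at most exponential in the input size (one may assume $k\leq n$, since $c_t^{\infty}(G)\leq n$ always), so the eternal game itself can be solved by direct analysis of the configuration graph in exponential time. You instead use Theorem~\ref{thm:conf} a second time, noting that the language $\{(G,k): c_{n\binom{n+k-1}{k}}^{\infty}(G)\leq k\}$ literally coincides with $\{(G,k): c(G)\leq k\}$, and then invoke the membership half of Kinnersley's {\bf EXPTIME}-completeness result. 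Both routes are valid. Yours is shorter and neatly sidesteps the pitfall you flag (the time horizon $t$ being exponential in the input), though that pitfall is also defused by the paper's count: by the no-repetition argument in the proof of Theorem~\ref{thm:conf}, optimal play never lasts longer than the number of configurations, so the direct search stays within {\bf EXPTIME}. What the paper's membership argument buys is independence from the algorithmic half of~\cite{Kinnersley15} and a configuration-graph viewpoint that it reuses immediately afterwards to show polynomial-time solvability when $k$ is fixed; what yours buys is economy, making the corollary an immediate transfer of {\bf EXPTIME}-completeness across the coincidence of languages established by Theorem~\ref{thm:conf}.
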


\begin{proof}
Since there are a total of $n\binom{n+k-1}{k}=O(n^{k+1})$ possible configurations of both the cops and the robber, the problem is in {\bf EXPTIME}. The {\bf EXPTIME}-hardness of the problem comes from Theorem~\ref{thm:conf} and the fact that given a graph $G$ and a positive integer $k$, deciding whether $c(G)\leq k$ is {\bf EXPTIME}-hard~\cite{Kinnersley15}.
\end{proof}

We now turn to the case where $t$ is a fixed constant. The following lemma will be used in the proof of the next theorem. To simplify the lemma and its proof, we first define the following. A {\it sum-decreasing sequence} has the property that for any term $n$ in the sequence, the term $n$ is a positive integer and the sum of all of the terms after $n$ in the sequence is strictly less than $n$. Let $\mathrm{maxseq}(t)$ be the maximum length of a sum-decreasing sequence with $t$ as its first term.

\begin{lemma}\label{lem:sequence}
Given $t$ a positive integer, we have $\mathrm{maxseq}(t)=\left\lfloor\log{t}\right\rfloor+1$.
\end{lemma}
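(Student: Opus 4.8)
The plan is to prove the two inequalities $\mathrm{maxseq}(t) \le \lfloor \log t \rfloor + 1$ and $\mathrm{maxseq}(t) \ge \lfloor \log t \rfloor + 1$ separately, after first rewriting the defining condition in a more usable form. Write a candidate sum-decreasing sequence as $a_1, a_2, \dots, a_m$ with $a_1 = t$, each $a_i \in \mathbb{N}^*$, and $\sum_{j > i} a_j < a_i$ for every $i$. The key observation I would start from is that, on reversing the sequence, this condition says exactly that each term strictly exceeds the sum of all the terms preceding it; equivalently, in terms of the suffix sums $P_i = \sum_{j \ge i} a_j$, the condition $\sum_{j>i} a_j < a_i$ is the same as $P_{i+1} < P_i - P_{i+1}$, that is, $P_i > 2 P_{i+1}$. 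Because all terms are positive integers, this sharpens to $P_i \ge 2 P_{i+1} + 1$, which is the engine for both bounds.

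For the upper bound I would induct downward on $i$ to show $P_i \ge 2^{m-i+1} - 1$: the base case $P_m = a_m \ge 1 = 2^1 - 1$ holds, and the step follows from $P_i \ge 2P_{i+1} + 1 \ge 2(2^{m-i} - 1) + 1 = 2^{m-i+1} - 1$. Applying this at $i = 2$ gives $P_2 \ge 2^{m-1} - 1$, and since $t = a_1 > \sum_{j>1} a_j = P_2$, integrality forces $t \ge 2^{m-1}$. Taking logarithms yields $m - 1 \le \log t$, hence $m \le \lfloor \log t \rfloor + 1$; since this holds for every valid sequence, it bounds $\mathrm{maxseq}(t)$.

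For the matching lower bound I would exhibit an explicit sequence of length $m = \lfloor \log t \rfloor + 1$, namely $a_1 = t$ and $a_i = 2^{m-i}$ for $2 \le i \le m$ (so the tail is $2^{m-2}, 2^{m-3}, \dots, 2, 1$). For $i \ge 2$ the condition reads $\sum_{j>i} a_j = 2^{m-i} - 1 < 2^{m-i} = a_i$, which is automatic, while the decisive check at $i = 1$ is $\sum_{j>1} a_j = 2^{m-1} - 1 < t$; this uses precisely that $m - 1 \le \log t$ gives $2^{m-1} \le t$. Thus a valid sequence of the claimed length exists, and the two bounds together give the equality.

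The routine parts are the induction and the verification of the construction; the one place that needs genuine care, and which I expect to be the main obstacle, is threading the strict-versus-nonstrict inequalities through the integrality so that the two bounds meet exactly at the floor. Concretely, the defining inequality is strict, but a sum-decreasing sequence lives in the integers, so each strict step gains an extra $+1$; keeping track of these $+1$ terms is what turns a naive estimate into the sharp bound $t \ge 2^{m-1}$ and what makes the construction tight rather than off by one. I would also verify the degenerate case $t = 1$, $m = 1$ separately to confirm that the formula $\lfloor \log 1 \rfloor + 1 = 1$ is correct there.
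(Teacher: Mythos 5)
Your proof is correct, but it takes a genuinely different route from the paper's. The paper argues by induction on $t$: it first proves a monotonicity claim ($\mathrm{maxseq}$ is weakly increasing in its first term), obtains the lower bound from the inductive hypothesis together with the halving sequence $t, t/2, t/4, \ldots, 1$ in the case where $t$ is a power of $2$, and obtains the upper bound by term-by-term tracking — assuming a longer sequence exists, the second term must exceed $t/2$, the third $t/4$, and so on, until the $(\lfloor\log t\rfloor+2)^{th}$ term is forced below $1$, a contradiction. You dispense with the induction on $t$ entirely: your observation that the defining condition is equivalent to the suffix-sum recursion $P_i \ge 2P_{i+1}+1$ (equivalently, that a reversed sum-decreasing sequence is superincreasing) packages the paper's term-by-term doubling into one clean downward induction on the index, giving $P_2 \ge 2^{m-1}-1$ and hence the sharp bound $t \ge 2^{m-1}$; and your explicit sequence $t, 2^{m-2}, \ldots, 2, 1$ witnesses the lower bound uniformly for all $t$, with no case split on powers of $2$ and no monotonicity claim. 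What your formulation buys is brevity and robustness: the paper's inductive step requires delicate bookkeeping (the auxiliary quantities $x$, $y$, and the minimal power $\ell$ with $\lfloor\log(k+1)\rfloor = \log\ell$), precisely the place where strict-versus-nonstrict inequalities are easy to fumble, whereas in your version the integrality gain of $+1$ is isolated in a single line and propagated mechanically. Your closing remarks identify the genuinely sharp points correctly — converting $t > 2^{m-1}-1$ into $t \ge 2^{m-1}$ via integrality, and checking $t=1$, $m=1$ — and both are handled soundly, so the proposal stands as a complete and arguably cleaner proof of the lemma.
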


\begin{proof}
The proof is done by induction on $t$. The result is immediate for $t=1$. Suppose, for all $k\geq 0$, that $\mathrm{maxseq}(k)=\left\lfloor\log{k}\right\rfloor+1$. We will make use of the following claim (the maximum length of a sum-decreasing sequence is monotonically decreasing in terms of its first term $t$) to prove this lemma.

\smallskip

\noindent \emph{Claim:}
For positive integers $t_0,t_1$ such that $t_1>t_0$, $\mathrm{maxseq}(t_1)\geq \mathrm{maxseq}(t_0)$.

\smallskip

\noindent For the proof of the claim and for the sake of contradiction, assume $t_1>t_0$ and $\mathrm{maxseq}(t_0) > \mathrm{maxseq}(t_1)$. Taking the sequence with $t_0$ as its first term and replacing $t_0$ by $t_1$ gives a sum-decreasing sequence (of the same length) with $t_1$ as its first term. The proof of the claim follows from this contradiction.

\smallskip

We first show that $\mathrm{maxseq}(k+1)\geq \lfloor \log{(k+1)} \rfloor +1$. By the Claim and the inductive hypothesis, $\mathrm{maxseq}(k+1)\geq \lfloor \log{k} \rfloor +1$. If $\lfloor \log{(k+1)} \rfloor = \lfloor \log{k} \rfloor$, then we are done. Otherwise, $\lfloor \log{(k+1)} \rfloor \neq \lfloor \log{k} \rfloor$, and thus, $\log{(k+1)} \in \mathbb{N^*}$ and the sequence $k+1,\frac{k+1}{2},\frac{k+1}{4},\ldots,1$ is a sum-decreasing sequence of length $\lfloor \log{(k+1)} \rfloor+1$.

We next show that $\mathrm{maxseq}(k+1)\leq \lfloor \log{(k+1)} \rfloor +1$. Assume, for purpose of contradiction, that $\mathrm{maxseq}(k+1)> \lfloor \log{(k+1)} \rfloor +1$. By the inductive hypothesis, $\mathrm{maxseq}(k+1)\leq 1+ \lfloor \log{k} \rfloor +1=\lfloor \log{k} \rfloor +2$. If $k+1$ is a power of $2$, then $\lfloor \log{(k+1)} \rfloor=\lfloor \log{k} \rfloor +1$, in which case we are done. Otherwise, $k+1$ is not a power of 2, and so $\lfloor \log{(k+1)} \rfloor = \lfloor \log{k} \rfloor$.

Let $\ell$ be the smallest integer such that $\lfloor \log{(k+1)} \rfloor = \log{\ell}$. By the inductive hypothesis, any sum-decreasing sequence, with $k+1$ as its first term and having length strictly greater than $\lfloor \log{(k+1)} \rfloor +1=\log{\ell}+1$, must have an integer $x\geq \ell$ as a second term. We then have that, $x>\frac{k+1}{2}$ since $k+1<2\ell$. Let $y$ be the third term. We find that, $y<\frac{k+1}{2}$ since $x>\frac{k+1}{2}$. By the inductive hypothesis and the assumption that $\mathrm{maxseq}(k+1)> \lfloor \log{(k+1)} \rfloor +1$, we have that $y\geq \frac{\ell}{2}$ and so, $y>\frac{k+1}{4}$. The fourth term must be less than $\frac{k+1}{4}$ since $x+y > \frac{3(k+1)}{4}$. Continuing in this fashion, for all $i>2$, the $i^{th}$ term must then be less than $\frac{k+1}{2^{i-2}}$. Therefore, the $(\lfloor \log{(k+1)} \rfloor +2)^{th}$ term must be less than $\frac{k+1}{2^{\log{(k+1)}+2-2}}=1$, a contradiction.
\end{proof}

Now we can prove the main result of this section.

\begin{theorem}\label{thm:np-hard}
Given a graph $G$, $k\in \mathbb{N^*}$, and a fixed constant $t\in \mathbb{N^*}$, the problem of deciding whether $c_t^{\infty}(G)\leq k$ is {\bf NP}-hard.
\end{theorem}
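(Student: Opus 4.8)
The plan is to establish \textbf{NP}-hardness via a polynomial-time reduction from a known \textbf{NP}-hard problem, choosing a source whose structure mirrors the ``eternal'' nature of the game. Since the Eternal Domination game (the $t=1$ case, by the remark $c_1^\infty(G)=\gamma_{\mathrm{all}}^\infty(G)$) is already known to be \textbf{NP}-hard by~\cite{BDEMY17}, a natural first attempt is to reduce from eternal domination or, more robustly, from a classic combinatorial problem such as \textsc{Dominating Set} or \textsc{Vertex Cover}, which underlie those hardness results. The key design principle will be to build a gadget graph $G'$ from the input instance so that the cops, forced to recapture a fresh robber within a \emph{fixed} $t$ time-steps on every play, are effectively required to maintain a dominating-type configuration; the constant time bound $t$ prevents the cops from slowly regrouping and thereby collapses the eternal cop number onto a covering parameter of the instance.

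First I would fix the target number of cops $k$ and the constant $t$, then construct $G'$ so that capturing within $t$ steps on an arbitrary robber start vertex forces the cops to sit on (or near) a solution set of the source instance at the start of each play. The central mechanism is the one highlighted in the paper's introduction: because the plays are \emph{not} independent, the robber may \emph{deliberately lose} early plays to drive the cops into a disadvantageous configuration, and then exploit it on a later play. I would use this ``sacrifice'' idea to enforce that the cops cannot merely win one play ad hoc but must hold a \emph{persistently} good configuration valid against every possible robber appearance. Concretely, I would attach pendant or path-like gadgets of controlled length (calibrated to $t$) to each vertex of the source instance, so that a robber appearing deep in a gadget can only be caught in time if a cop is already stationed within distance roughly $t$; varying which gadget the robber attacks forces the cop set to dominate the relevant vertex set.

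The role of Lemma~\ref{lem:sequence} is where the argument becomes delicate, and I expect this to be the main obstacle. The sum-decreasing sequence bound $\mathrm{maxseq}(t)=\lfloor\log t\rfloor+1$ quantifies how many times, over successive plays, the robber can force a strictly decreasing ``budget'' on the cops before the budget is exhausted. I would use this to bound the number of plays in the robber's sacrificial sequence: each sacrifice must yield a strict improvement in some integer potential (e.g. the distance a cop must travel, or the size of the region the cops fail to cover), and the total number of such forced improvements before the cops are definitively broken is governed by $\mathrm{maxseq}(t)$. The hard part will be defining this integer potential precisely and proving both directions of the reduction: that a valid solution to the source instance yields a winning cop strategy holding for all plays within $t$ steps (soundness), and conversely that any winning cop strategy within $t$ steps, combined with the robber's logarithmically-many sacrifices bounded via Lemma~\ref{lem:sequence}, extracts a valid solution of size $k$ (completeness). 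Verifying that the gadget sizes and the time bound $t$ interact correctly so that the sum-decreasing structure is exactly what the robber can exploit — and that the whole construction remains polynomial for fixed $t$ — is the crux of the proof.
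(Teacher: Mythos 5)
You have anticipated the paper's architecture remarkably well: the actual proof is a polynomial-time reduction from set cover (a close relative of the dominating-set-type sources you name), it attaches paths of length calibrated to $t$ to the element vertices so that a robber appearing deep in a gadget is catchable in time only if a cop already sits on a covering subset vertex, it uses precisely the ``deliberate sacrifice'' mechanism from the introduction, and Lemma~\ref{lem:sequence} is invoked exactly as you predict, to cap the number of effective sacrifices at $\lfloor\log t\rfloor+1$. However, your proposal stops exactly where the proof begins --- you yourself flag the construction, the potential function, and both implications as ``the crux'' --- and the pieces you defer are not routine, so as it stands there is a genuine gap rather than a complete argument.

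Concretely, three load-bearing ideas are missing. First, the threshold offset: the equivalence cannot be ``cover of size $k$ iff $c_t^{\infty}(G')\leq k$,'' because the robber's sacrifices can \emph{always} pull some cops out of position temporarily, so no construction collapses $c_t^{\infty}$ exactly onto the covering number. The paper instead adds $\lfloor\log t\rfloor+1$ auxiliary paths of size $t$ whose element-endpoints are adjacent to all subset vertices, and proves the equivalence at the shifted threshold $k+\lfloor\log t\rfloor+1$; the extra cops absorb exactly the $\lfloor\log t\rfloor+1$ displacements that Lemma~\ref{lem:sequence} permits, and the robber wins if and only if he can displace one more. Your sketch fixes ``the target number of cops $k$'' and never confronts this offset. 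Second, soundness needs a fast reconfiguration mechanism: when a cop leaves to chase a robber, the remaining cops must restore a set cover in a single time-step, which the paper achieves by making the subset vertices a clique; with only pendant gadgets hung on the source instance, a returning cop can be out of position for up to $t$ steps and the next robber wins in the interim. Your proposal has no analogue of the clique. Third, your ``integer potential'' must be instantiated: the paper has the $i$th robber appear at distance $2^{\lfloor\log t\rfloor-i+1}-1$ from the element vertex of a fresh auxiliary path and pass until captured (threatening to run to the leaf otherwise), so the forced capture-plus-return costs realize a sum-decreasing sequence with first term about $t$, and Lemma~\ref{lem:sequence} bounds its length in both directions of the reduction. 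Relatedly, the element gadgets must have length exactly $t-1$, so that an uncovered element's path end lies at distance $t+1$ from every cop in the clique while covered elements remain reachable; ``roughly $t$'' is not enough for either implication. In short, your plan points at the right proof, but everything that makes it work is left unproved.
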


\begin{proof}
The reduction we use is from the well-known {\bf NP}-hard set cover problem~\cite{GJ79}. The set cover problem is stated as follows: given an integer $k>0$, a set of elements $B=\{b_1,\ldots,b_{\alpha}\}$ called the \emph{universe}, and subsets $S_1,\ldots,S_{\beta}$ of the universe whose union forms the universe, does there exist a collection of at most $k$ subsets from $S_1,\ldots,S_{\beta}$ such that their union forms the universe? Such a set would be called a {\it set cover} of size at most $k$. We construct, in polynomial time, a graph $G$ where there exists a set cover of size at most $k$ if and only if $c_t^{\infty}(G)\leq k+\left\lfloor \log{t} \right\rfloor+1$.

To construct $G$, take the typical graph associated to set cover; that is, there is a vertex for each subset $S_1,\ldots,S_{\beta}$ and for each element $b_1,\ldots, b_{\alpha}$. A vertex associated with an element is adjacent to a vertex associated with a subset if and only if that element is contained in the subset. From each element vertex attach a path of size $t-1$ (the distance from an element vertex to the other end of the path attached to it is then $t-1$) which will be called an \emph{attached} path. In addition, add $\lfloor \log{t} \rfloor + 1$ paths of size $t$ with exactly one end of each path being considered as an element vertex. The path of size $t-1$, obtained by removing the element vertex of any of these paths, will be referred to as an attached path. For each of these additional $\lfloor \log{t} \rfloor + 1$ paths of size $t$, make its element vertex adjacent to all of the other subset vertices. Finally, form a clique from all of the subset vertices. See Figure~\ref{fig:np-hard}.
\begin{figure}
\centering
\includegraphics[width=\textwidth]{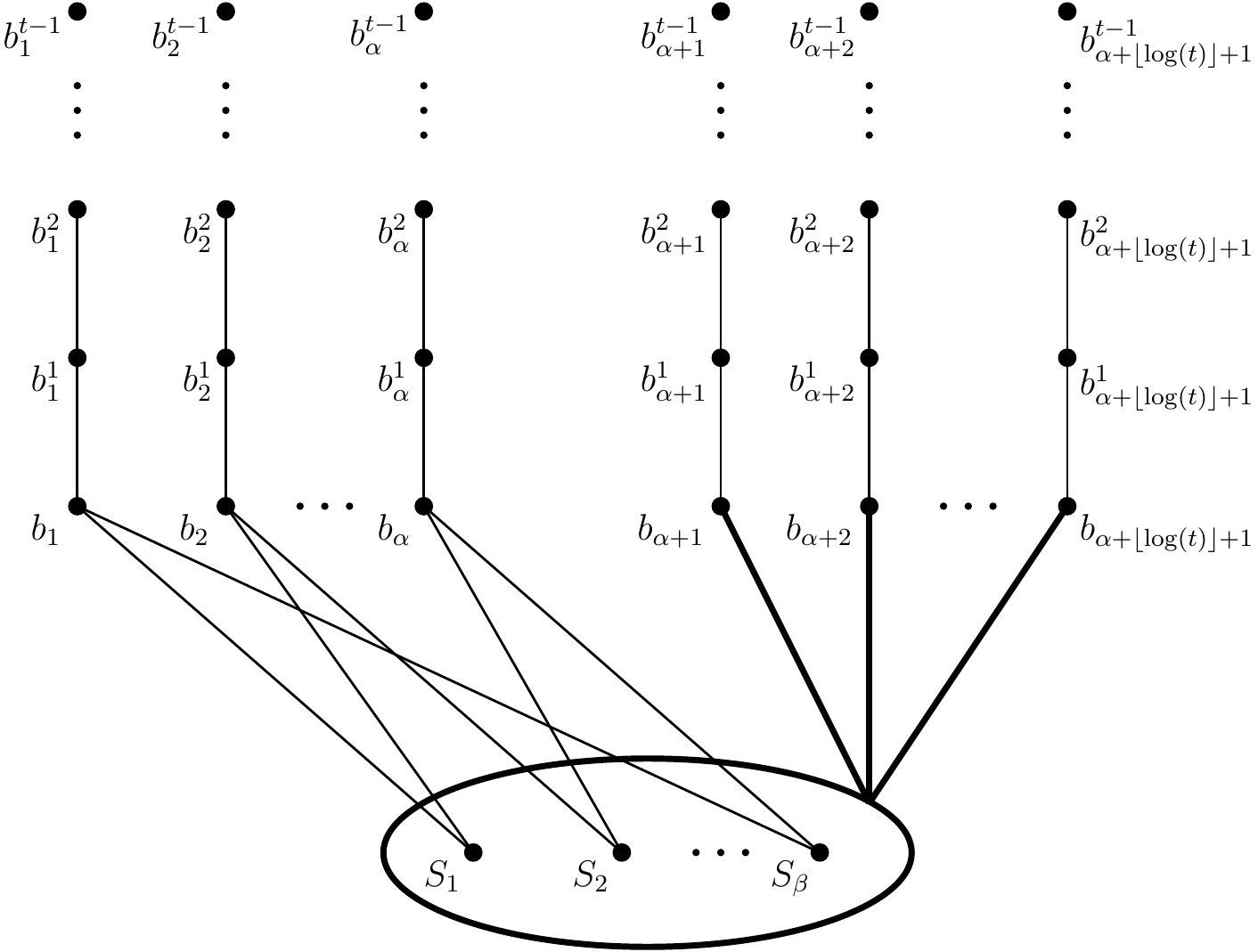}
\caption{The construction of $G$ from an instance of set cover. The vertices inside the oval form a clique. The bolded edges to the clique indicate adjacency to all of the vertices in the clique.}\label{fig:np-hard}
\end{figure}

We first prove the forward direction. In particular, we show that if there exists a set cover of size at most $k$, then $c_t^{\infty}(G)\leq k+\left\lfloor \log{t} \right\rfloor+1$. Assume there exists a set cover of size at most $k$. Note that, by the construction of $G$, any subset vertex covers the $\left\lfloor \log{t} \right\rfloor+1$ additional paths of size $t$. Place a cop on each of the subset vertices corresponding to such a set cover. Place the remaining at least $\left\lfloor \log{t} \right\rfloor+1$ cops on any subset vertex. Note that the cops can maintain that they occupy the subset vertices that correspond to a set cover as long as at least $k$ cops occupy the subset vertices since they form a clique. It is straightforward to see that as long as the cops occupy the subset vertices corresponding to a set cover, they can capture the robber in at most $t$ time-steps, no matter where he places himself. One of the closest cops simply follows a shortest path to the robber. Note that this cop is at distance at most $t$ from the robber since the cops occupy a set cover and thus, if the robber does not occupy a subset vertex (in which case he is immediately captured), then the cop moves to an element vertex, and so either the robber occupies that element vertex and is thus, captured, or the robber occupies the path attached to that element vertex and therefore, the cop separates the robber from the rest of the graph.

The cops follow this strategy to capture the robber. When a cop leaves a subset vertex, the remaining cops in the subset vertex clique move so that they occupy a set cover again. When a cop that does not occupy a subset vertex has captured the robber, that cop moves back to the subset clique. Note that this takes at most $t$ time-steps. Therefore, the robber can only win if he manages to force $\left\lfloor \log{t} \right\rfloor+2$ cops out of the subset clique (or equivalently, force $\left\lfloor \log{t} \right\rfloor+1$ cops out of the subset clique and force one cop to occupy a subset vertex such that no set cover of size at most $k$ can be formed including this subset).

By Lemma~\ref{lem:sequence} and the cops' strategy, this is not possible. To illustrate why this argument is valid, let us consider the case where $t$ is a power of $2$. By Lemma~\ref{lem:sequence}, at best, the robber could attack the attached paths and force himself to be captured in $t$ time-steps, then $\frac{t}{2}$, then $\frac{t}{4}$ time-steps, all the way down to one time-step. From Lemma~\ref{lem:sequence}, we find that this would only force $\left\lfloor \log{t} \right\rfloor+1$ cops out of the subset clique, which is not enough as mentioned in the paragraph above.

For the reverse direction, we show that if there is no set cover of size at most $k$, then $c_t^{\infty}(G)> k+\left\lfloor \log{t} \right\rfloor+1$. The robber places himself at a leaf of one of the additional $\lfloor \log{t} \rfloor + 1$ paths of size $t$ and passes until he is captured by a cop. The second robber places himself at the vertex distance $2^{\left\lfloor \log{t} \right\rfloor-1}-1$ from the element vertex in another one of these additional paths that contains no cop at a vertex distance greater than or equal to $2^{\left\lfloor \log{t} \right\rfloor-1}-1$ from the element vertex in the same path. Note that if no such additional path exists, then there are at most $k$ cops occupying vertices in the subset clique and the robber wins by placing himself at the end of an attached path such that its element is not covered by any of the $k$ cops in the subset clique. This exists since there is no set cover of size at most $k$ by assumption. The robber wins in this case, since there is no cop within distance $t$ of him. Thus, we can assume that such an additional path exists. The robber can easily force a cop to capture him in at most $2^{\left\lfloor \log{t} \right\rfloor-1}$ time-steps at his initial position. If he passes his first $2^{\left\lfloor \log{t} \right\rfloor-1}-1$ turns and no cop captures him, then the robber can just move to the leaf of the path, guaranteeing being captured in at least $t+1$ time-steps and so winning. Thus, assume that the second robber is captured at his initial position after at most $2^{\left\lfloor \log{t} \right\rfloor-1}$ time-steps in this play. In general, for $1\leq i \leq \left\lfloor \log{t} \right\rfloor +1$, the $i^{th}$ robber places himself at the vertex distance $2^{\left\lfloor \log{t} \right\rfloor-i+1}-1$ from the element vertex in another one of these additional paths that contains no cop at a vertex distance greater than or equal to his position from the element vertex in the same path. Again, as above, if no such path exists, then the robber wins and so, we assume that such a path exists. The $i^{th}$ robber forces himself to be captured at his initial position in $2^{\left\lfloor \log{t} \right\rfloor-i+1}$ time-steps in the $i^{th}$ play. By Lemma~\ref{lem:sequence}, after the $(\left\lfloor \log{t} \right\rfloor + 1)^{th}$ robber has been captured, there are at least $\left\lfloor \log{t} \right\rfloor + 1$ cops not in the subset clique. If there are at most $k+\left\lfloor \log{t} \right\rfloor+1$ cops, then the next robber wins by placing himself at the end of an attached path such that its element is not covered by any of the at most $k$ cops in the subset clique. This exists since there is no set cover of size at most $k$ by assumption. The robber wins in this case, since there is no cop within distance $t$ of him. Hence, $c_t^{\infty}(G)> k+\left\lfloor \log{t} \right\rfloor+1$.
\end{proof}

We finish the section by noting that the eternal cop number can be computed in polynomial time when $k$ is fixed.

\begin{theorem}
Given a graph $G$, $t\in \mathbb{N^*}$, and a fixed constant $k\in \mathbb{N^*}$, the problem of deciding whether $c_t^{\infty}(G)\leq k$ is polynomial-time solvable.
\end{theorem}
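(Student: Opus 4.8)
The plan is to exploit that, with $k$ fixed, the number of possible cop configurations is polynomial in $n$, and to recast the eternal game as a finite two-player safety game solvable by fixed-point iteration. Call a \emph{configuration} a multiset of $k$ vertices recording the cops' positions; there are $\binom{n+k-1}{k}=O(n^k)$ of them, which is polynomial for fixed $k$. I would first declare a configuration $C$ to be \emph{defensible with respect to a set $W$ of configurations} if, for every choice of the robber's starting vertex, the cops starting from $C$ can guarantee to capture the robber within $t$ time-steps \emph{and} to occupy a configuration belonging to $W$ at the moment of capture. Writing $F(W)$ for the set of configurations defensible with respect to $W$, the operator $F$ is monotone, and the set $W^{*}$ of configurations from which the cops can defend forever is its greatest fixed point. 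Since the cops choose their initial configuration, $c_t^{\infty}(G)\le k$ holds if and only if $W^{*}\neq\varnothing$; the correctness of this characterization follows from determinacy of these finite games, with the complement of $W^{*}$ being the robber's attractor toward a ``cops failed'' position.

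Second, I would compute $W^{*}$ by the decreasing iteration $W_0=\{\text{all configurations}\}$ and $W_{i+1}=F(W_i)$, which stabilizes after at most $O(n^k)$ steps since the sequence is nested and there are only $O(n^k)$ configurations. Each iteration requires, for each configuration $C$, testing whether $C\in F(W_i)$, that is, solving a single \emph{play}: a bounded-horizon reachability game in which the cops try to reach a capture state whose configuration lies in $W_i$, while the robber tries to prevent this. This sub-game has state set (configuration, robber vertex, whose turn), of size $M=O(n^{k+1})$, and is solved by backward induction. The ``deliberate loss'' behaviour flagged in the introduction is handled automatically here, since the robber is adversarial in the sub-game and may steer the cops toward a capture configuration lying outside $W_i$.

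The main obstacle is that $t$ may be enormous (for instance, given in binary), so a naive backward induction over $t$ levels would not be polynomial. I would resolve this by the standard attractor observation: the sets of sub-game positions from which capture-into-$W_i$ is forced within $h$ time-steps are nested and stabilize after at most $M=O(n^{k+1})$ rounds, so running the backward induction with horizon $\min(t,M)$ yields exactly the same truth value as horizon $t$. Thus each sub-game is solved in time polynomial in $n$, independent of the magnitude of $t$. Combining the pieces, the whole computation — $O(n^k)$ fixed-point iterations, each solving $O(n^k)$ sub-games of size $O(n^{k+1})$ under horizon $\min(t,M)$ — runs in time polynomial in $n$ for fixed $k$, and answers ``yes'' exactly when $W^{*}\neq\varnothing$. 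The two points needing careful verification are the greatest-fixed-point characterization of the cops' winning configurations and the horizon-capping argument that removes any dependence on the size of $t$.
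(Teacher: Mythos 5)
Your proposal is correct, and it travels the same basic route as the paper --- for fixed $k$ the configuration space is polynomial, so the game can be solved on it --- but the paper's own proof is a single sentence: there are $n\binom{n+k-1}{k}=O(n^{k+1})$ configurations of cops and robber, hence ``the configurations graph is of polynomial size,'' with everything else left implicit. Your write-up supplies two pieces of genuine content that the paper omits. First, the greatest-fixed-point characterization: defining $F(W)$ as the set of cop configurations from which capture within $t$ time-steps can be forced with the cops occupying a configuration in $W$ at the moment of capture, and answering yes if and only if $\nu F\neq\varnothing$, is exactly the right formalization of ``eternal,'' and it correctly absorbs the robber's deliberate-loss tactic (in the sub-game the adversarial robber may prefer a fast capture in a configuration outside $W_i$), which the paper discusses only informally in the context of its NP-hardness reduction. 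Second, and more substantively, your horizon cap $\min(t,M)$ via stabilization of the nested attractor sets repairs a point the paper's one-liner silently skates over: $t$ is part of the input and may be given in binary, so a naive backward induction over $t$ levels, or a product of the configuration graph with a step counter, is only pseudo-polynomial in $t$. (The paper could alternatively have disposed of large $t$ by combining its Theorem~1 with the monotonicity of $c_t^\infty$ in $t$ --- for $t\geq n\binom{n+k-1}{k}$ the question reduces to deciding $c(G)\leq k$, which is polynomial for fixed $k$ --- but it does not say so.) Minor points to tidy: the robber appears on an \emph{unoccupied} vertex (immaterial, since appearing on a cop would mean instant capture); playing with exactly $k$ cops is without loss of generality, as surplus cops can idle or duplicate; and you need not solve a separate sub-game for each configuration $C$ --- a single backward induction over the full $O(n^{k+1})$ state space per fixed-point iteration decides membership in $F(W_i)$ for all $C$ at once, though your per-$C$ accounting is already polynomial for fixed $k$.
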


\begin{proof}
There are at most $n\binom{n+k-1}{k}=O(n^{k+1})$ possible configurations of the cops and the robber, and, since $k$ is fixed, the configurations graph is therefore of polynomial size.
\end{proof}

\section{The eternal cop number of paths and cycles}\label{sec:paths&cycles}

Since computing the eternal cop number of general graphs is \textbf{NP}-hard, we focus on its value for various graph families.  In this section, we give precise values of the eternal cop number on paths and cycles.
\begin{theorem} \label{thm:paths}
For all $t\in \mathbb{N^*}$, we have that $c_t^{\infty}(P_n)=\left\lceil\frac{n}{t+1}\right\rceil$.
\end{theorem}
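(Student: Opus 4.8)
The plan is to prove the two inequalities $c_t^{\infty}(P_n)\le \left\lceil\frac{n}{t+1}\right\rceil$ and $c_t^{\infty}(P_n)\ge \left\lceil\frac{n}{t+1}\right\rceil$ separately, after first isolating two elementary capture-time facts on a path that will be used throughout. Label the vertices $1,\dots,n$ and let the cops occupy positions $q_1\le \dots \le q_m$. The first fact is that a robber placed strictly between two cops whose positions differ by $D$ survives exactly $\left\lfloor D/2\right\rfloor$ time-steps: the two flanking cops step inward each turn, the robber cannot cross either of them, and the free interval shrinks by two per time-step. The second fact is that a robber placed beyond the leftmost cop $q_1$ (respectively, beyond the rightmost cop $q_m$) survives exactly $q_1-1$ (respectively, $n-q_m$) time-steps by fleeing to the nearer wall, since the chasing cop must traverse the whole distance to that wall before it can pin the robber. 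Together these give a clean one-play criterion: a configuration lets the cops catch every robber within $t$ steps if and only if $q_1\le t+1$, $q_m\ge n-t$, and $q_{i+1}-q_i\le 2t+1$ for all $i$; call such a configuration \emph{safe}. A basic building block for the upper bound is that a \emph{single} cop eternally guards any $t+1$ consecutive vertices: such a sub-path has diameter $t$, so from wherever the cop sits it can reach any robber confined to the sub-path (even one fleeing to the far end) within $t$ steps, no matter where previous captures left it.

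For the upper bound I would take $m=\left\lceil\frac{n}{t+1}\right\rceil$ cops, start them evenly spaced at positions $1,\,t+2,\,2t+3,\dots$ (so all gaps equal $t+1$ and, by the choice of $m$, the right wall is covered), and argue that the cops can maintain a safe configuration forever. When a robber appears it lies between two consecutive cops at distance at most $2t+1$ (or beyond an end cop within $t$ of a wall), so by the capture-time facts it is caught within $t$ steps. The work is to show the cops can respond so that the \emph{resulting} configuration is again safe. A squeeze-capture in a gap collapses that gap but displaces the two participating cops inward, enlarging each neighbouring gap; the arithmetic $\tfrac{3}{2}(t+1)\le 2t+1$ (valid for $t\ge 1$) shows a single such step keeps all gaps within the safe bound, and the plan is to upgrade this to an invariant that survives arbitrarily many plays by exploiting the total slack $m(t+1)-n<t+1$ to rebalance the cops whenever the robber attacks, always keeping the end cops within $t$ of their walls.

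For the lower bound I would assume $m\le \left\lceil\frac{n}{t+1}\right\rceil-1$, which forces $m(t+1)\le n-1$, and design a robber strategy that drives the cops out of every safe configuration over successive plays. The robber repeatedly attacks inside the widest interior gap; each squeeze-capture there bunches the two flanking cops together and pushes the work of displacement onto their outer neighbours, so the robber can use the plays to compound the enlargement of a targeted gap past $2t+1$, or to force the leftmost (respectively, rightmost) cop strictly inside position $t+1$ (respectively, past $n-t$), after which the criterion above hands the robber a surviving vertex on the exposed side. The sacrificial, multi-play flavour here is exactly the phenomenon flagged after \eqref{xx} and in the remark preceding Theorem~\ref{thm:np-hard}: the robber loses several plays on purpose to degrade the cops' positions before winning.

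The main obstacle, in both directions, is the \emph{eternal} bookkeeping of displacement rather than any single play. A purely static count from the safe criterion only yields the distance-$t$ domination bound $\left\lceil\frac{n}{2t+1}\right\rceil$, which is far weaker than the claimed $\left\lceil\frac{n}{t+1}\right\rceil$; the entire gap between the per-cop coverage values $t+1$ and $2t+1$ comes from the fact that capturing a fleeing robber in a wide gap necessarily moves the cops inward, and one must quantify precisely how much this can be repaired between plays. I expect the cleanest way to control this is a potential function measuring the excess of the gaps (and of the end distances) over their safe thresholds, showing that with $\left\lceil\frac{n}{t+1}\right\rceil$ cops the defender can keep the potential bounded while responding within $t$ steps, whereas with fewer cops the robber can force the potential strictly upward until some gap exceeds $2t+1$ or an end is exposed. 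Nailing down this monovariant, together with the rebalancing moves that realise it on the cops' side, is the technical heart of the argument; the per-play capture bounds and the single-cop/$(t+1)$-segment observation are the easy inputs.
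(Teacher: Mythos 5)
Your proposal is not a proof as it stands: in both directions the decisive step is explicitly deferred. For the upper bound you reduce everything to maintaining a ``safe configuration'' (gaps at most $2t+1$, ends covered) across infinitely many plays via rebalancing and a potential function, and you say yourself that nailing down this monovariant is the technical heart --- but that heart is never supplied, and it is exactly where the difficulty of the eternal game lives. The irony is that you already state the fact that makes all of this bookkeeping unnecessary: a single cop eternally guards any $t+1$ consecutive vertices. The paper's upper bound is just that fact applied to a \emph{static partition}: split $P_n$ into $\left\lceil\frac{n}{t+1}\right\rceil$ segments of at most $t+1$ consecutive vertices, dedicate one cop to each segment, and have every cop move toward the robber on every turn \emph{without ever leaving her segment} (equivalently, pursue the robber's clamped image in her segment). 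If the robber sits in a segment he is caught within its diameter, at most $t$; if he migrates, the cop of the destination segment has been tracking him all along, and since she is confined to an interval of length at most $t$, the distance from her to the far end of her segment beyond the robber's shadow is at most $t$, so capture still occurs within $t$ time-steps. Because no cop ever leaves her segment, the guarantee is stateless from play to play: there is no displacement to repair, no regroup phase, no invariant to maintain. Your squeeze-capture scheme with gaps up to $2t+1$ creates the very inter-play debt you then struggle to pay off.

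The lower bound has the same structural problem: your widest-gap attack with a compounding potential is only a plan, and it is again more machinery than needed. The paper's robbers simply sit still at $v_0, v_{t+1}, v_{2(t+1)},\ldots$, one per play, each passing until captured. The cop who makes a capture ends that play \emph{on} the attack vertex, at distance exactly $t+1$ from the next attack site; and since every play lasts at most $t$ time-steps, an easy induction shows that every previously used cop is at distance at least $t+1$ from the next attack vertex at the start of each new play (a cop pinned at $v_{(j-1)(t+1)}$ after play $j$ can travel at most $t$ per subsequent play, and the attack sites recede by $t+1$ per play, so used cops fall strictly further behind). Hence each of the $\left\lceil\frac{n}{t+1}\right\rceil$ stationary robbers consumes a brand-new cop. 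This also dissolves your stated worry that static counting only yields the distance-$t$ domination bound $\left\lceil\frac{n}{2t+1}\right\rceil$: the eternal bookkeeping you anticipated reduces, on the path, to the one-line observation that a capture pins a cop at the attack site while the per-play time budget $t$ is one less than the spacing $t+1$. So your one-play ``safe configuration'' criterion and your two capture-time facts are fine as far as they go, but the arguments you build on them are both incomplete where completeness matters and substantially harder than the partition-and-pin arguments that actually close the theorem.
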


\begin{proof}
Let the vertices of the path $P_n$ be $v_0,v_1,\ldots,v_{n-1}$. For the lower bound, the first robber places himself at $v_0$ and then passes his turn until he is captured.
There must be a cop $C_1$ to capture the first robber.
Once the first robber is captured, the second robber places himself at $v_{t+1}$ and passes his turn until he is captured.
Since $v_0$ is at distance $t+1$ from $v_{t+1}$ and the robber is not moving, $C_1$ cannot capture the robber in this play, so there must be another cop $C_2$ that eventually moves to $v_{t+1}$ to capture the robber.
Continuing in this fashion, for $0\leq i \leq \left\lceil\frac{n}{t+1}\right\rceil-1$, the $i^{th}$ robber's strategy is to place himself at $v_{i(t+1)}$ and remain there until captured, which forces there to be $\left\lceil\frac{n}{t+1}\right\rceil$ cops.

For the upper bound, for every $1\leq i\leq \left\lceil\frac{n}{t+1}\right\rceil$, assign a cop $C_i$ to the vertices in the set $\{v_{(i-1)(t+1)}, \ldots,v_{i(t+1)-1} \}$. A cop never leaves her assigned subpath of diameter at most $t$ and all cops move on a shortest path to the robber on their turn. If the robber remains in the same subpath, then he will be captured in at most $t$ time-steps since the diameter of the subpath is at most $t$. If the robber leaves to another subpath, he will be captured in at most $t$ time-steps since all of the cops are moving towards the robber without leaving their subpaths of diameter at most $t$.
\end{proof}

The results for cycles are separated into three theorems, with the first one covering the case of small cycles.

\begin{theorem}
Let $k\in \{4,5,6\}$. For all $t \in \mathbb{N^*}$, we have that $c_t^{\infty}(C_k)=2$.
\end{theorem}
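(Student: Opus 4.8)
The plan is to show that $c_t^{\infty}(C_k)$ is squeezed between $2$ and $2$ for every $t \in \mathbb{N}^{*}$ and every $k \in \{4,5,6\}$, so that no explicit play-by-play strategy is needed. The lower bound comes for free from the general inequality $c_t^{\infty}(G) \geq c(G)$ noted in the introduction, together with the classical fact that $c(C_k) = 2$ whenever $k \geq 4$ (a single cop cannot trap the robber on a cycle of length at least four, since the robber always has an escape direction). Thus $c_t^{\infty}(C_k) \geq 2$ immediately.

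For the upper bound I would first record the monotonicity of the eternal cop number in its time parameter: if $t \geq t'$, then any cop strategy that guarantees capture within $t'$ time-steps in every play also guarantees capture within $t$ time-steps in every play, since enlarging the deadline only relaxes the cops' winning condition while leaving the robber's options unchanged. Hence $c_t^{\infty}(G) \leq c_{t'}^{\infty}(G)$ for all $t \geq t'$, and in particular $c_t^{\infty}(C_k) \leq c_1^{\infty}(C_k)$ for every $t \geq 1$. The upper bound then follows from facts already quoted in the introduction: we have $c_1^{\infty}(C_k) = \gamma_{\mathrm{all}}^{\infty}(C_k)$ and $\gamma_{\mathrm{all}}^{\infty}(C_n) = \lceil n/3 \rceil$, and evaluating at $k \in \{4,5,6\}$ gives $\lceil 4/3 \rceil = \lceil 5/3 \rceil = \lceil 6/3 \rceil = 2$. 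Chaining the inequalities yields $c_t^{\infty}(C_k) \leq c_1^{\infty}(C_k) = 2$, and combining with the lower bound gives $c_t^{\infty}(C_k) = 2$.

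The only step not handed to us verbatim by the excerpt is the monotonicity observation, and this is the place to be careful: one must check that a $t'$-bounded strategy really does transfer to the $t$-bounded game across the infinitely many plays, including the way the cops' positions at the end of one play seed the next. Since the transferred strategy makes exactly the same moves and simply happens to finish early, the inter-play bookkeeping is identical, so the argument is safe. If a fully self-contained proof were preferred over citing $\gamma_{\mathrm{all}}^{\infty}$, I would instead exhibit a two-cop eternal dominating strategy on each of $C_4$, $C_5$, and $C_6$ (moving both cops in response to each robber placement, for instance keeping the two cops at maximum mutual distance on $C_6$ so that after any response they again dominate the cycle), which captures within a single time-step in every play and hence settles all $t$ at once.
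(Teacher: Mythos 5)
Your proof is correct and takes essentially the same route as the paper, whose entire argument is the one-line squeeze $2 = c(C_k) \leq c_t^{\infty}(C_k) \leq \gamma^{\infty}_{\mathrm{all}}(C_k) = \left\lceil \frac{k}{3} \right\rceil = 2$, with the monotonicity $c_t^{\infty}(C_k) \leq c_1^{\infty}(C_k) = \gamma^{\infty}_{\mathrm{all}}(C_k)$ left implicit. Your explicit check that a $t'$-bounded strategy transfers across plays (since end-of-play positions are unchanged) is a sound elaboration of that implicit step, not a different approach.
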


\begin{proof}
For all $k\in \{4,5,6\}$, we have that $2=c(C_k)\leq c_t^{\infty}(C_k) \leq \gamma^{\infty}_{\mathrm{all}}(C_k)=\left\lceil \frac{k}{3} \right\rceil=2$.
\end{proof}

The next theorem covers the case of cycles when $t$ is large (around at least half the vertices).

\begin{theorem}
For $n\geq 7$ and for all integers $t \geq \left\lceil \frac{n}{2} \right\rceil-2$, we have that $c_t^{\infty}(C_n)=2$.
\end{theorem}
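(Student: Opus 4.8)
The plan is to prove the two inequalities separately, with essentially all the content in the upper bound. For the lower bound, note that since $n\geq 7$ the cop number satisfies $c(C_n)=2$, and as observed in the introduction $c_t^{\infty}(C_n)\geq c(C_n)=2$; this already gives $c_t^{\infty}(C_n)\geq 2$. For the upper bound I would exhibit an explicit strategy for two cops that, for every $t\geq \lceil n/2\rceil-2$, captures each robber within $t$ time-steps in infinitely many plays. Two cops always split $C_n$ into two arcs, and the robber always lies in one of them, so no robber can be evaded in a single play; the entire difficulty is the eternal constraint of keeping a usable formation from one play to the next. The invariant I would maintain is that at the start of every play the two arcs determined by the cops both have length at most $2t+1$; equivalently, writing $d$ for the distance between the cops, that $n-2t-1\leq d\leq 2t+1$. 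This range is nonempty precisely because $t\geq \lceil n/2\rceil-2$, and for $t=\lceil n/2\rceil-2$ it reduces to $3\leq d\leq n-3$ when $n$ is even and to $2\leq d\leq n-2$ when $n$ is odd.

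Assuming the invariant holds at the start of a play, the robber appears in an arc of length $L\leq 2t+1$. The cops capture by the arc-pincer: each cop walks along that arc toward the robber on every turn, so the number of vertices available to the trapped robber drops by two each time-step, and capture occurs after at most $\lfloor L/2\rfloor\leq\lfloor(2t+1)/2\rfloor=t$ time-steps. This settles the capture-time requirement for every robber position, the extremal case being a motionless robber at the midpoint of the longer arc, which is caught in exactly $\lfloor(n-3)/2\rfloor=t$ steps when $n$ is even (respectively $\lfloor(n-2)/2\rfloor=t$ when $n$ is odd); in particular the time bound is tight and leaves no slack in the worst case.

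The crux is to re-establish the invariant at the instant of capture, since a naive pincer brings the two cops within distance $1$ or $2$, and for even $n$ a terminal distance of $2$ is inadmissible (its longer arc has length $n-2>2t+1$). The key steps are: first, on the final capturing turn only one cop must land on the robber, so the other cop is free to take a restoring step \emph{away} from the capture vertex; second, the cops' distance should be kept at the smallest admissible value ($3$ for even $n$, $2$ for odd $n$), which for even $n$ forces \emph{both} arcs to have odd length. When the robber's arc has odd length the pincer finishes asymmetrically, with the non-capturing cop trailing at distance $2$, so its restoring step reaches distance $3$ and the even-$n$ invariant is preserved; when $n$ is odd the natural terminal distance $2$ already lies in the admissible range. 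I would complete the argument with a short case analysis over the robber's capture position (near an end of the arc versus the interior, and which cop makes the capture), confirming that the idle cop can always be steered to a vertex at the required distance while still containing the robber until the capturing turn. The main obstacle is exactly this simultaneous demand: because a midpoint-sitting robber consumes all $t$ steps, the restoring displacement must be produced entirely within the forced pincer rather than afterward, and the even-$n$ case is tight, relying on the parity of the maximal arc length $n-3$ being odd. I therefore expect the proof to require tracking both cops' actual positions throughout the chase, not merely their distance, with the parity of $n$ governing whether the terminal configuration is admissible on its own or needs the extra restoring step.
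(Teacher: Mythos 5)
Your proposal is correct and takes essentially the same approach as the paper: a two-cop pincer on the cycle governed by a separation invariant maintained from play to play, where your condition that both arcs have length at most $2t+1$ (i.e., $n-2t-1\leq d\leq 2t+1$) coincides at the threshold $t=\left\lceil \frac{n}{2}\right\rceil-2$ with the paper's invariant that the cops stay at distance at least $3$. Your parity bookkeeping and the trailing cop's restoring step on the capturing turn are just a more explicit rendering of the endgame the paper dismisses as ``straightforward to check'' (capture in $1$ or $2$ moves from distance $3$ or $4$ while keeping the cops at distance at least $3$).
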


\begin{proof}
At least two cops are necessary since $c(C_n)=2$. Suppose we have two cops, and assume the cops are initially at a distance of at least three from each other, which is possible since $n \geq 7$. When the robber appears, each cop chases the robber along the unobstructed (not containing the other cop) subpath of vertices between herself and the robber.
Supposing that the robber is not captured during these moves, the cops continue making these moves until they are at distance $3$ or $4$ from each other. This will occur within at most $t-1$ time-steps for $n$ even, and within at most $t-2$ time-steps for $n$ odd, by the assumption that $t \geq \left\lceil \frac{n}{2} \right\rceil-2$.
It is straightforward to check that the cops can now capture the robber in either 1 move (distance 3 apart) or $2$ moves (distance 4 apart) while maintaining a distance of at least $3$ between the cops.
The condition that the two cops are at distance at least $3$ from each other therefore holds when the robber appears, and after the robber is captured. Hence, the cops can capture the robber in every play.
\end{proof}

The last theorem for cycles deals with the case when $t$ is ``small"; that is, approximately at most half the vertices. A set $S\subseteq V(G)$ \emph{distance $t$-dominates $G$} if for each vertex $v\in V(G)$, there exists a vertex $u\in S$ such that $dist(v,u)\leq t$. In the following theorem, the cops' strategy is to $t$-distance dominate the graph at all times.

\begin{theorem}
For $n\geq 7$ and for all integers $1 \leq t \leq \left\lceil \frac{n}{2} \right\rceil-3$, we have that $c_t^{\infty}(C_n)=\left\lceil\frac{n-3}{2t+1}\right\rceil+1$.
\end{theorem}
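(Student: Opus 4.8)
The plan is to build on the remark preceding the statement: the cops maintain a set that distance $t$-dominates $C_n$ at \emph{every} moment, with the two cops bounding a gap acting as a mobile pair that squeezes a fleeing robber while the remaining cops rotate around the cycle to keep the dominating set intact. The quantitative fact I would use throughout is the capture time inside a gap: if two cops sit at cyclic distance $g$ with the robber trapped on the arc between them, then, since the cops move first and close in from both ends, the robber is caught in exactly $\lfloor g/2\rfloor$ time-steps, and at the moment of capture the two cops are within distance one of each other near the middle of the arc (unless a further cop is close enough to assist). In particular a gap is defensible within $t$ time-steps precisely when $g\le 2t+1$, which recovers the requirement that the cops' positions distance $t$-dominate $C_n$, and shows that a passing robber is the degenerate case of a fleeing one.

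For the upper bound I would place $m=\left\lceil\frac{n-3}{2t+1}\right\rceil+1$ cops with consecutive gaps as equal as possible and all at most $2t+1$, keeping one cop in reserve beyond a bare distance $t$-dominating set. When a robber appears in a gap, the two bounding cops squeeze him while every other cop shifts one ``slot'' toward the disturbance, so that the block vacated by the squeezing pair is refilled by the reserve and its neighbours by the instant of capture. The crucial refinement is that, when the affected gap is largest and the squeeze would otherwise force the pair to collapse to distance one, a \emph{third} nearby cop assists the capture so the bounding pair need not fully converge; this is exactly what lets the post-capture configuration still distance $t$-dominate $C_n$. The content of the argument is to verify that with exactly $m$ cops this conveyor-type motion always returns the formation to a distance $t$-dominating set at the moment the robber is caught, so the next play begins safely.

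For the lower bound I would argue in two layers. First, at the start of every play the cops must distance $t$-dominate $C_n$, for otherwise the incoming robber sits on an undominated vertex and passes, and no cop reaches him within $t$ time-steps; this forces all gaps to be at most $2t+1$. Second, to sharpen this I would have the robber repeatedly flee inside a largest gap, forcing its two bounding cops to converge to within distance one at capture (as quantified above) and thereby collapsing roughly a full cop's worth of coverage into one clustered block. Tracking the effective coverage after such a forced squeeze, one shows that with only $\left\lceil\frac{n-3}{2t+1}\right\rceil$ cops some vertex must be left at distance greater than $t$ from every cop at the start of the subsequent play, where the robber wins by passing. The robber's deliberate sacrifice of early plays to manoeuvre the cops into this exposed configuration is the same ``lose now to win later'' phenomenon exploited in Theorem~\ref{thm:np-hard}, and the base-case count of forced distinct cops parallels the argument for paths in Theorem~\ref{thm:paths}.

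The main obstacle, in both directions, is controlling the formation during the $t$ internal time-steps of a single play rather than only between plays: I must show the cops can re-establish a distance $t$-dominating set exactly at the moment of capture, and dually that the robber can always force this to fail with one fewer cop. Landing the threshold precisely at $\left\lceil\frac{n-3}{2t+1}\right\rceil+1$ — rather than the naive distance-domination count $\left\lceil\frac{n}{2t+1}\right\rceil$ — hinges on the fine accounting of where the squeezing pair ends up, how far the reserve cop can travel in $t$ steps, and when an assisting cop is available; this is where I expect the bulk of the casework, including the role of the hypothesis $t\le\left\lceil\frac{n}{2}\right\rceil-3$, which keeps every gap genuinely shorter than the cycle and prevents a single pair from sweeping all of $C_n$ at once.
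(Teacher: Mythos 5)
Your plan follows the same architecture as the paper's proof — an upper bound by maintaining an evenly spaced formation (all consecutive cop distances at most $2t+1$, i.e., distance $t$-domination) with a bounding pair squeezing the robber while the other cops conveyor-shift to restore spacing, and a lower bound in which a sacrificial robber forces two cops to cluster at capture and a coverage count then shows one extra cop is needed. However, your central quantitative lemma is false: the robber cannot force the bounding pair to end ``within distance one of each other.'' The paper's upper-bound strategy explicitly has the two squeezing cops maintain distance at least $3$ from each other at all times, so distance at most $1$ is not forceable; indeed, if it were, your own counting scheme would yield a lower bound of $\lceil (n-2t-2)/(2t+1)\rceil + 2 = \lceil (n-1)/(2t+1)\rceil + 1$, which for $n=27$, $t=2$ equals $7$, strictly exceeding the theorem's value $\lceil 24/5\rceil + 1 = 6$ and thus contradicting the upper bound you also assert. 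What the paper proves and uses is clustering to distance \emph{at most $3$} at the end of the capturing turn: two such cops can reach at most $2t+4$ vertices in the next play, so the remaining cops must distance $t$-dominate the rest of the cycle, requiring $\lceil (n-2t-4)/(2t+1)\rceil = \lceil (n-3)/(2t+1)\rceil - 1$ additional cops — exactly the stated threshold.

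Your mechanism for the tight case in the upper bound also fails concretely. After reducing by monotonicity of $c_t^{\infty}(C_n)$ in $n$ to the case $\frac{n-3}{2t+1}\in\mathbb{N}$ (a simplification you should adopt), the critical configuration has the bounding pair at distance exactly $3$ and every other consecutive gap exactly $2t+1$, with zero slack. There, every cop outside the pair is at distance at least $2t+2 > t$ from a robber trapped between the pair, so no ``third nearby cop'' can ever assist within the time bound. The paper's resolution is different and essential: \emph{all} cops rotate one step in the same, suitably chosen direction (the robber sits at distance $1$ from one of the pair, so one of the two directions captures him on the cops' move), which captures while preserving every gap exactly — the pair never drops below distance $3$. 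Finally, the step you flagged but deferred — that the conveyor always restores the invariant by the moment of capture — is the real content of the upper bound; the paper handles it by identifying the pivot cops $C_j$ and $C_{\ell}$ (the first cops, scanning from each side, at distance strictly less than $2t+1$ from their neighbour) and verifying the clockwise and counterclockwise shifting instructions never conflict. So your outline is the paper's, but both load-bearing quantitative steps (the squeeze lemma and the tight-case fix) must be replaced as above.
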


\begin{proof}
First, we prove the lower bound. We will show the robber wins against $\left\lceil\frac{n-3}{2t+1}\right\rceil$ cops. The robber places himself at any vertex non-adjacent to a cop if possible and if not, then there are at least two cops at distance at most 3 from each other. In either case, there exists a robber strategy such that, after a certain number of moves, either the cops have lost or a cop captures the robber and there is at least one other cop at distance at most 3 from this cop at the end of the cops' turn. The first robber employs this strategy.

Now these two cops that are distance at most 3 from each other, can only reach a vertex at distance at most $t$ from their current position in the next play. Therefore, they can reach at most $2t+4$ vertices in the next play. The remainder of the cops must distance $t$-dominate the cycle or else the second robber places himself at a vertex at distance at least $t+1$ from all of the cops and passes his turn until he wins. To distance $t$-dominate the rest of the cycle, at least $\left\lceil\frac{n-2t-4}{2t+1}\right\rceil=\left\lceil\frac{n-3}{2t+1}\right\rceil-1$ additional cops are required. That makes for a total of $\left\lceil\frac{n-3}{2t+1}\right\rceil+1$ cops but there are only $\left\lceil\frac{n-3}{2t+1}\right\rceil$ cops, therefore, the robber wins.

Now, we prove the upper bound. Since $c_t^{\infty}(C_n)$ monotonically increases with $n$, we can assume that $\frac{n-3}{2t+1}\in \mathbb{N}$, because if $k$ cops win on a cycle of length $p$, then $k$ cops will certainly win on a cycle of length $m$ where $m < p$.

Start the cops in any arbitrary configuration such that the distance between any two consecutive cops is at most $2t+1$ and at least 3 (any subpath of the cycle containing no cops has size at most $2t$ and at least 2) and no vertex is occupied by more than one cop. Such a configuration is possible since there are $\frac{n-3}{2t+1}+1$ cops and $t\geq 1$. Now, we will show that the cops can capture the robber in at most $t$ time-steps and at the end of the play, the cops are again in a configuration where the distance between any two consecutive cops is at most $2t+1$ and at least 3, and hence, the cops can always capture the robber in at most $t$ time-steps in any play.

Assume we are in the first play. The robber places himself at some non-occupied vertex (as otherwise he is already captured). The closest cop in the clockwise direction towards the robber and the closest cop in the counterclockwise direction towards the robber, move on a shortest path to the robber, while maintaining that they are at distance at least 3 from each other. One of these cops will capture the robber in at most $t$ time-steps since they are distance at most $2t+1$ apart.

Let $k=\frac{n-3}{2t+1}+1$ and let the closest cop in the clockwise (counterclockwise, respectively) direction towards the robber be $C_1$ ($C_k,$ respectively). For an integer $1 < j < k$ ($1 < \ell < k,$ respectively), let $C_j$ ($C_{\ell},$ respectively) be the first cop that is distance strictly less than $2t+1$ from the nearest cop in the clockwise (counterclockwise, respectively) direction from herself, starting from $C_1$ ($C_k$, respectively), on turn $1 \leq s \leq t$ of the first play.

If neither $C_j$ nor $C_{\ell}$ exists, then the cops occupy a configuration in which $C_1$ and $C_k$ are distance 3 apart and the distance between any two other consecutive cops is exactly $2t+1$. In this case, on their next turn, the cops all move in the same direction (clockwise or counterclockwise) such that the robber is captured. Thus, the cops would still be in a configuration such that the distance between any two consecutive cops is at most $2t+1$ and at least 3 and no vertex is occupied by more than one cop. We therefore assume this is not the case.

The remainder of the cops utilize the following strategy. At time-step $s$, for all $1 < i < j$, the cop $C_i$ moves clockwise. Also, at time-step $s$, for all $\ell < h < k$, the cop $C_h$ moves counterclockwise. The rest of the cops (of course excluding $C_1$ and $C_k$ since their strategies have already been given) skip their turn at time-step $s$.

All that is left to be shown is that a cop is not asked to move both clockwise and counterclockwise in the strategy given above, that is, that either $C_j$ and $C_{\ell}$ exist (even if $j=\ell$) or that only one of $C_j$ and $C_{\ell}$, say $C_j$, exists, which implies that $C_j=C_2$ (since otherwise $C_{\ell}=C_{j-1}$). Note that we have already dealt with the case that neither of $C_j$ and $C_{\ell}$ exist and so, at least one of them must exist and thus, we are done.
\end{proof}

\section{Retracts and the eternal cop number}\label{sec:retracts}

We explore techniques that decompose a graph into particular subgraphs where it is simpler to analyze the eternal cop number. We apply these techniques to three graph classes: trees, certain strong products of graphs, and the Cartesian grid.

\subsection{Retracts}

A \emph{retract} $H$ of a graph $G$ is a subgraph of $G$ such that there exists a homomorphism $f:G\rightarrow H$, where $f(v)=v$ for all $v\in V(H)$. An \emph{isometric path} $P$ in a graph $G$ satisfies $d_H(v,w)=d_G(v,w)$ for all $v,w \in V(P)$. Note that any isometric path is a retract of $G$. Retracts have been useful in analyzing the cop number in a variety of contexts, including in planar and cop-win graphs; for more on retracts and isometric paths in Cops and Robbers, see \cite{BN}. The idea behind the \emph{shadow strategy} is that the cops play in the retract $H$ and capture $f(R).$ If the robber enters $H$, then they are captured.  We refer to $f(R)$ as the \emph{shadow} of the robber.

Suppose $G = H_1 \cup \ldots \cup H_p$ is the disjoint union of retracts. We may place a sufficient number of cops to capture the robber in each retract $H$ in the decomposition. Once all shadows are captured, the robber is captured as $f(R)=R$ for exactly one retract $H$; hence, this approach gives an upper bound on the cop number of $G$. An analogous approach is taken when considering the \emph{isometric path number} or \emph{precinct number} \cite{BN}, where graphs are decomposed into a minimum set of isometric paths.

Our next lemma follows immediately from using the retract decomposition technique in the setting of Cops and Eternal Robbers.
\begin{lemma} \label{lem:disjointRetracts}
For all $t\in \mathbb{N^*}$, if $G = H_1 \cup \ldots \cup H_p$ is the union of retracts, then
\[
c_t^\infty(G) \leq \sum_{i=1}^p c_t^\infty(H_i).
\]
\end{lemma}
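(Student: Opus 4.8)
The plan is to formalize the shadow strategy already described in the surrounding text and to argue that the cop budgets needed in each retract combine additively. Write $f_i : G \to H_i$ for the retraction homomorphism onto $H_i$, so that $f_i(v)=v$ for all $v \in V(H_i)$. For each $i$, let the cops deploy $c_t^\infty(H_i)$ cops whose job is to play an optimal eternal-cop strategy \emph{inside} $H_i$, but against the \emph{shadow} $f_i(R)$ of the true robber $R$ rather than against $R$ itself. The total number of cops used is then $\sum_{i=1}^p c_t^\infty(H_i)$, which is exactly the bound we want, so it suffices to verify that this combined strategy captures the real robber within $t$ time-steps in every play.

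The key observation is that $f_i$ is a homomorphism of the \emph{reflexive} graph $G$ onto $H_i$, so it is distance-nonincreasing and maps a legal move of $R$ in $G$ (an edge or a pass) to a legal move of $f_i(R)$ in $H_i$ (an edge or a pass, using reflexivity to absorb the case $f_i(u)=f_i(v)$). Thus the shadow $f_i(R)$ is an admissible robber trajectory in $H_i$, and the group of $c_t^\infty(H_i)$ cops assigned to $H_i$ can treat it as a genuine robber and guarantee capture of the shadow within $t$ time-steps. First I would make precise that each cop physically occupies a vertex of the corresponding $H_i \subseteq G$ throughout, so all the moves prescribed by the within-$H_i$ strategy are legal moves in $G$.

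Next I would argue the crucial point: in any single play the real robber $R$ lives in exactly one of the retracts for at least his starting vertex, and more importantly, since the $H_i$ cover $V(G)$ (their union is $G$), the robber's current vertex $v$ satisfies $f_j(v)=v$ for whichever index $j$ has $v \in V(H_j)$. When the group of cops assigned to $H_j$ lands on the shadow $f_j(R)=v$, a cop occupies $v$ itself, and so $R$ is actually captured. Because each shadow is captured within $t$ time-steps, and the robber occupies \emph{some} $H_j$ at every time-step, the robber is caught in $G$ within $t$ time-steps. For the eternal aspect, I would note that after a capture each cop group ends in a legal configuration inside its own $H_i$, from which its optimal eternal strategy can continue into the next play; since the $p$ groups never interfere with one another (each moves only within its own retract), the strategies run in parallel indefinitely.

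The main obstacle is the interaction between the retract decomposition and the vertices shared between different $H_i$: if the $H_i$ are not vertex-disjoint, a single robber vertex has several shadows, and one must confirm that capture of \emph{any one} shadow at the robber's own vertex counts as a real capture, which it does precisely because $f_i$ fixes $V(H_i)$ pointwise. A secondary subtlety is the timing: each group captures its shadow within $t$ steps independently, but we need the robber caught within $t$ steps overall, and this follows because the robber's actual vertex is always equal to its shadow under the retraction onto whichever $H_i$ currently contains it, so the relevant group captures the robber at the same time-step it captures that shadow. I do not expect the calculations to be heavy; the content is entirely in checking that the homomorphism property makes the shadow a legal robber and that the cover property converts shadow-capture into real capture, which is exactly the mechanism the paper has set up for Lemma~\ref{lem:disjointRetracts}.
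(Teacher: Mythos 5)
Your overall architecture is the same as the paper's: the paper proves this lemma by exactly the shadow-strategy mechanism you describe (retraction homomorphisms make each shadow $f_i(R)$ a legal robber in $H_i$, groups of $c_t^\infty(H_i)$ cops play their eternal strategies against the shadows, and capture of the shadow belonging to a retract that actually contains the robber is a real capture, since $f_j$ fixes $V(H_j)$ pointwise). However, your timing step has a genuine gap. You conclude that ``each shadow is captured within $t$ time-steps, and the robber occupies some $H_j$ at every time-step, so the robber is caught within $t$,'' and more pointedly that ``the relevant group captures the robber at the same time-step it captures that shadow.'' This does not follow. Group $j$ first lands on its shadow $f_j(R)$ at some time $t_j \leq t$, but nothing forces the robber to be inside $H_j$ at that moment: he may be sitting in some other $H_k$, in which case group $j$ has only caught a phantom, and he may re-enter $H_j$ afterwards, when (in your description) group $j$'s play against its shadow is already over. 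A robber who times his crossings between retracts so that he is never in $H_j$ at the instant $t_j$, for every $j$, defeats the argument as you have written it: no cop ever occupies his actual vertex.

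The missing ingredient --- implicit in the paper's phrasing ``once \emph{all} shadows are captured, the robber is captured'' --- is that each group must \emph{hold} its shadow after catching it. Since the shadow moves at speed at most one (again by the homomorphism property), the capturing cop can remain on $f_j(R)$ from time $t_j$ onward. Then at time $T = \max_j t_j \leq t$ (each shadow-play starts with the real play, so every $t_j \leq t$), every shadow is simultaneously occupied, and since the robber's vertex lies in some $H_j$ where it equals $f_j(R)$, he is caught by time $T$ at the latest. With this tracking convention your eternal handoff also needs restating: one should regard each group's shadow-play as ending only when the real play ends, with the capturing cop riding the shadow in the interim, so that the configurations from which the groups begin the next play are the ones their eternal strategies expect; your claim that ``each cop group ends in a legal configuration'' silently assumes this. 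The paper states the lemma as following immediately from the technique and does not belabor this bookkeeping either, but the per-shadow timing claim is the step in your write-up that, as stated, actually fails, and the hold-the-shadow observation is what repairs it.
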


\subsection{A Recurrent Attack Strategy}

In (\ref{xx}), if the capture time on $G$ with $k$ cops is at most $t/2$, then $c_t^\infty(G) \leq k$. This observation used what we refer to as an \emph{attack phase} of $t/2$ time-steps where the robber is captured by the $k$ cops, followed by a \emph{regroup phase} of $t/2$ time-steps where the cops return to their initial configuration. We generalize the idea behind this observation by placing more cops, and hence, allowing some cops to attack while other cops regroup.

To give an initial demonstration of this idea, consider a spider graph with three branches of length four, with two cops and $t=5$. Figures~2A~-~2D demonstrate a sequence of plays with two cops and one robber on a spider with three legs of length $4$, where $t=5$.
It can be shown that two cops win on the graph $G$ in this example, and so $c^\infty_5(G)=2$. 
Note that $c_5(G)=1$. Our example demonstrates the regrouping phase and attacking phases of play, which will be formalized in the corresponding theorem, Theorem~\ref{thm:infinity_from_timed}.
The cops are denoted by $C_{1}$ and $C_{2}$, and the robber is denoted by $R$.

In play 1 (see Figure~2A), the robber appears on a leaf.
Cop $C_1$ captures the robber in four steps.
In play 2, one time-step is spent regrouping (Figure~2B), and then cop $C_2$ attacks the robber (Figure~2C).
At the start of play 3 (Figure~2D), the cops are positioned such that $C_1$ can capture the robber while $C_2$ regroups back to the root. The robber can always be captured in this fashion, and in every subsequent play, the robber will likewise be captured.

If the graph was extended by appending one vertex to a leaf (Figure~2E), then neither cop could capture the robber in $t=5$ time-steps. This fact is independent of how the cops chose to play; the particular cop strategy we have displayed in this example can be shown to be optimal.
Thus, on the extended graph $G'$, $c^\infty_5(G')>2$.

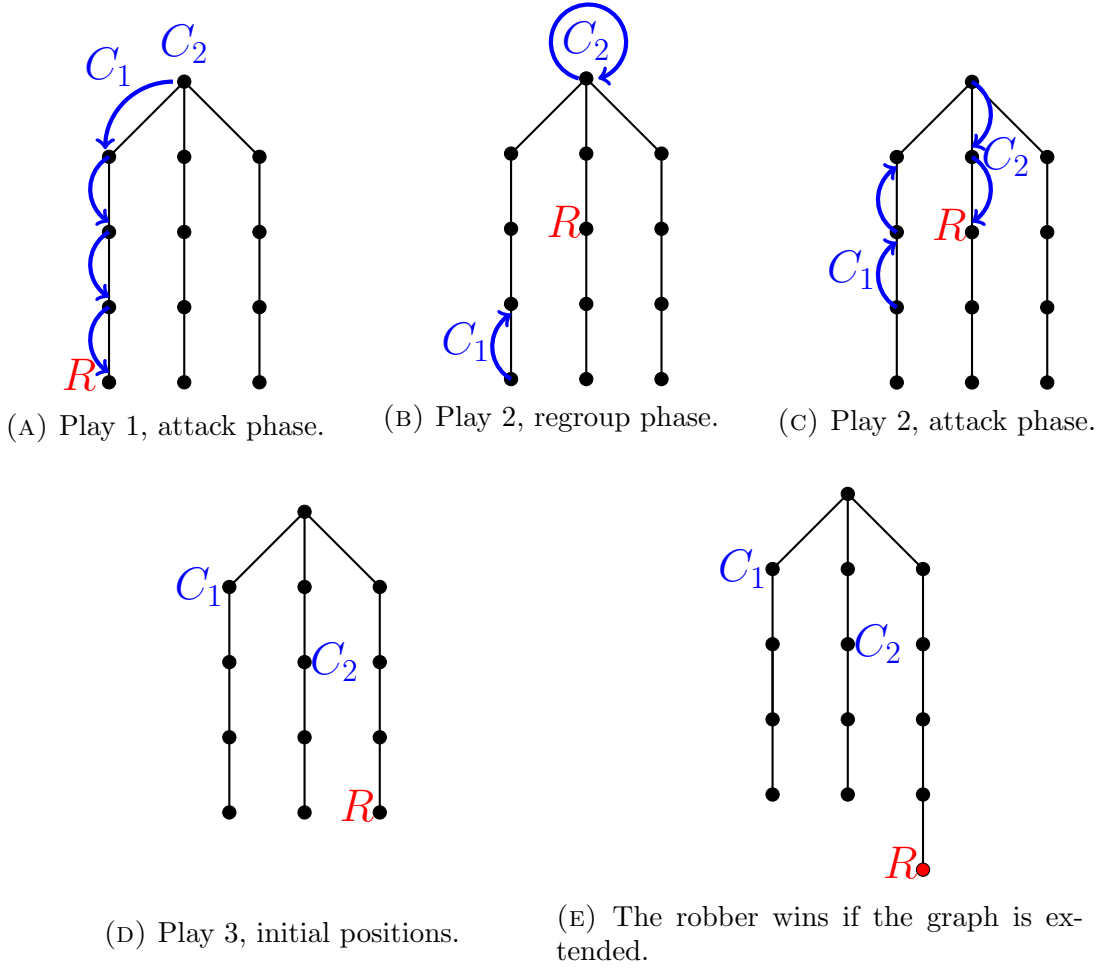
\begin{figure}
  \vspace{10pt}
\begin{subfigure}{.3\textwidth}
  \centering
  \begin{tikzpicture}
\draw [line width=0.8pt] (1,4)-- (0,3);
\draw [line width=0.8pt] (0,2)-- (0,3);
\draw [line width=0.8pt] (0,2)-- (0,1);
\draw [line width=0.8pt] (0,1)-- (0,0);
\draw [line width=0.8pt] (1,4)-- (1,3);
\draw [line width=0.8pt] (1,2)-- (1,3);
\draw [line width=0.8pt] (1,2)-- (1,1);
\draw [line width=0.8pt] (1,1)-- (1,0);
\draw [line width=0.8pt] (1,4)-- (2,3);
\draw [line width=0.8pt] (2,2)-- (2,3);
\draw [line width=0.8pt] (2,2)-- (2,1);
\draw [line width=0.8pt] (2,1)-- (2,0);
\begin{scriptsize}
\draw [fill=black] (1,4) circle (2.5pt);
\draw [fill=black] (0,3) circle (2.5pt);
\draw [fill=black] (0,2) circle (2.5pt);
\draw [fill=black] (0,1) circle (2.5pt);
\draw [fill=black] (0,0) circle (2.5pt);
\draw [fill=black] (1,3) circle (2.5pt);
\draw [fill=black] (1,2) circle (2.5pt);
\draw [fill=black] (1,1) circle (2.5pt);
\draw [fill=black] (1,0) circle (2.5pt);
\draw [fill=black] (2,3) circle (2.5pt);
\draw [fill=black] (2,2) circle (2.5pt);
\draw [fill=black] (2,1) circle (2.5pt);
\draw [fill=black] (2,0) circle (2.5pt);
\draw[color=blue] (1,4.5) node {\Large $C_{2}$};
\draw[color=blue] (0, 4.2) node {\Large $C_{1}$};
\draw[color=red] (-0.4,0.1) node {\Large $R$};
\draw[color=blue,ultra thick, ->] (0.85,4) arc (90:180:0.9);
\draw[color=blue,ultra thick, ->] (0,1) arc (120:240:0.5);
\draw[color=blue,ultra thick, ->] (0,2) arc (120:240:0.5);
\draw[color=blue,ultra thick, ->] (0,3) arc (120:240:0.5);
\end{scriptsize}
\end{tikzpicture}
\vspace{-0.05cm}
  \caption{
  Play 1, attack phase.}
  \label{fig:sfig1}
\end{subfigure}
\hspace{5pt}
\begin{subfigure}{.3\textwidth}
  \centering
  \vspace{-0.36cm}
  \begin{tikzpicture}
\draw [line width=0.8pt] (1,4)-- (0,3);
\draw [line width=0.8pt] (0,2)-- (0,3);
\draw [line width=0.8pt] (0,2)-- (0,1);
\draw [line width=0.8pt] (0,1)-- (0,0);
\draw [line width=0.8pt] (1,4)-- (1,3);
\draw [line width=0.8pt] (1,2)-- (1,3);
\draw [line width=0.8pt] (1,2)-- (1,1);
\draw [line width=0.8pt] (1,1)-- (1,0);
\draw [line width=0.8pt] (1,4)-- (2,3);
\draw [line width=0.8pt] (2,2)-- (2,3);
\draw [line width=0.8pt] (2,2)-- (2,1);
\draw [line width=0.8pt] (2,1)-- (2,0);
\begin{scriptsize}
\draw [fill=black] (1,4) circle (2.5pt);
\draw [fill=black] (0,3) circle (2.5pt);
\draw [fill=black] (0,2) circle (2.5pt);
\draw [fill=black] (0,1) circle (2.5pt);
\draw [fill=black] (0,0) circle (2.5pt);
\draw [fill=black] (1,3) circle (2.5pt);
\draw [fill=black] (1,2) circle (2.5pt);
\draw [fill=black] (1,1) circle (2.5pt);
\draw [fill=black] (1,0) circle (2.5pt);
\draw [fill=black] (2,3) circle (2.5pt);
\draw [fill=black] (2,2) circle (2.5pt);
\draw [fill=black] (2,1) circle (2.5pt);
\draw [fill=black] (2,0) circle (2.5pt);
\draw[color=blue] (1,4.5) node {\Large $C_{2}$};
\draw[color=red] (0.7, 2.1) node {\Large $R$};
\draw[color=blue] (-0.6,0.5) node {\Large $C_{1}$};
\draw[color=blue,ultra thick, ->] (0.9,4) arc (255:-75:0.5);
\draw[color=blue,ultra thick, ->] (0,0) arc (240:120:0.5);
\end{scriptsize}
\end{tikzpicture}
\caption{
  Play 2, regroup phase.}
  \label{fig:sfig2}
\end{subfigure}
\hspace{5pt}
\begin{subfigure}{.3\textwidth}
  \centering
  \vspace{-0.32cm}
  \begin{tikzpicture}
\draw [line width=0.8pt] (1,4)-- (0,3);
\draw [line width=0.8pt] (0,2)-- (0,3);
\draw [line width=0.8pt] (0,2)-- (0,1);
\draw [line width=0.8pt] (0,1)-- (0,0);
\draw [line width=0.8pt] (1,4)-- (1,3);
\draw [line width=0.8pt] (1,2)-- (1,3);
\draw [line width=0.8pt] (1,2)-- (1,1);
\draw [line width=0.8pt] (1,1)-- (1,0);
\draw [line width=0.8pt] (1,4)-- (2,3);
\draw [line width=0.8pt] (2,2)-- (2,3);
\draw [line width=0.8pt] (2,2)-- (2,1);
\draw [line width=0.8pt] (2,1)-- (2,0);
\begin{scriptsize}
\draw [fill=black] (1,4) circle (2.5pt);
\draw [fill=black] (0,3) circle (2.5pt);
\draw [fill=black] (0,2) circle (2.5pt);
\draw [fill=black] (0,1) circle (2.5pt);
\draw [fill=black] (0,0) circle (2.5pt);
\draw [fill=black] (1,3) circle (2.5pt);
\draw [fill=black] (1,2) circle (2.5pt);
\draw [fill=black] (1,1) circle (2.5pt);
\draw [fill=black] (1,0) circle (2.5pt);
\draw [fill=black] (2,3) circle (2.5pt);
\draw [fill=black] (2,2) circle (2.5pt);
\draw [fill=black] (2,1) circle (2.5pt);
\draw [fill=black] (2,0) circle (2.5pt);
\draw[color=red] (0.3, 4.8) node {\Large \phantom{$R$}};
\draw[color=red] (0.7, 2.1) node {\Large $R$};
\draw[color=blue] (1.45,3) node {\Large $C_{2}$};
\draw[color=blue] (-0.6,1.5) node {\Large $C_{1}$};
\draw[color=blue,ultra thick, ->] (0,1) arc (240:120:0.5);
\draw[color=blue,ultra thick, ->] (0,2) arc (240:120:0.5);
\draw[color=blue,ultra thick, ->] (1,3) arc (60:-60:0.5);
\draw[color=blue,ultra thick, ->] (1,4) arc (60:-60:0.5);
\end{scriptsize}
\end{tikzpicture}
\caption{
  Play 2, attack phase.}
  \label{fig:sfig3}
\end{subfigure}
\begin{subfigure}{.45\textwidth}
  \centering
  \begin{tikzpicture}
\draw [line width=0.8pt] (1,4)-- (0,3);
\draw [line width=0.8pt] (0,2)-- (0,3);
\draw [line width=0.8pt] (0,2)-- (0,1);
\draw [line width=0.8pt] (0,1)-- (0,0);
\draw [line width=0.8pt] (1,4)-- (1,3);
\draw [line width=0.8pt] (1,2)-- (1,3);
\draw [line width=0.8pt] (1,2)-- (1,1);
\draw [line width=0.8pt] (1,1)-- (1,0);
\draw [line width=0.8pt] (1,4)-- (2,3);
\draw [line width=0.8pt] (2,2)-- (2,3);
\draw [line width=0.8pt] (2,2)-- (2,1);
\draw [line width=0.8pt] (2,1)-- (2,0);
\begin{scriptsize}
\draw [fill=black] (1,4) circle (2.5pt);
\draw [fill=black] (0,3) circle (2.5pt);
\draw [fill=black] (0,2) circle (2.5pt);
\draw [fill=black] (0,1) circle (2.5pt);
\draw [fill=black] (0,0) circle (2.5pt);
\draw [fill=black] (1,3) circle (2.5pt);
\draw [fill=black] (1,2) circle (2.5pt);
\draw [fill=black] (1,1) circle (2.5pt);
\draw [fill=black] (1,0) circle (2.5pt);
\draw [fill=black] (2,3) circle (2.5pt);
\draw [fill=black] (2,2) circle (2.5pt);
\draw [fill=black] (2,1) circle (2.5pt);
\draw [fill=black] (2,0) circle (2.5pt);
\draw[color=red] (1.7, 0.1) node {\Large $R$};
\draw[color=red] (1.7, -1) node {\Large \phantom{$R$}};
\draw[color=red] (0.3, 4.3) node {\Large \phantom{$R$}};
\draw[color=blue] (1.4,2) node {\Large $C_{2}$};
\draw[color=blue] (-0.4,3) node {\Large $C_{1}$};

\end{scriptsize}
\end{tikzpicture}
\vspace{-0.12cm}
\caption{
  Play 3, initial positions.}
  \label{fig:sfig4}
\end{subfigure}
\begin{subfigure}{.45\textwidth}
  \centering
\begin{tikzpicture}
\draw [line width=0.8pt] (1,4)-- (0,3);
\draw [line width=0.8pt] (0,2)-- (0,3);
\draw [line width=0.8pt] (0,2)-- (0,1);
\draw [line width=0.8pt] (0,1)-- (0,0);
\draw [line width=0.8pt] (1,4)-- (1,3);
\draw [line width=0.8pt] (1,2)-- (1,3);
\draw [line width=0.8pt] (1,2)-- (1,1);
\draw [line width=0.8pt] (1,1)-- (1,0);
\draw [line width=0.8pt] (0,2)-- (0,1);
\draw [line width=0.8pt] (1,4)-- (2,3);
\draw [line width=0.8pt] (2,2)-- (2,3);
\draw [line width=0.8pt] (2,2)-- (2,1);
\draw [line width=0.8pt] (2,1)-- (2,0);
\draw [line width=0.8pt] (2,0)-- (2,-1);
\begin{scriptsize}
\draw [fill=black] (1,4) circle (2.5pt);
\draw [fill=black] (0,3) circle (2.5pt);
\draw [fill=black] (0,2) circle (2.5pt);
\draw [fill=black] (0,1) circle (2.5pt);
\draw [fill=black] (0,0) circle (2.5pt);
\draw [fill=black] (1,3) circle (2.5pt);
\draw [fill=black] (1,2) circle (2.5pt);
\draw [fill=black] (1,1) circle (2.5pt);
\draw [fill=black] (1,0) circle (2.5pt);
\draw [fill=black] (2,3) circle (2.5pt);
\draw [fill=black] (2,2) circle (2.5pt);
\draw [fill=black] (2,1) circle (2.5pt);
\draw [fill=black] (2,0) circle (2.5pt);
\draw [fill=red] (2,-1) circle (2.5pt);
\draw[color=red] (1.7, -0.9) node {\Large $R$};
\draw[color=blue] (1.4,2) node {\Large $C_{2}$};
\draw[color=blue] (-0.4,3) node {\Large $C_{1}$};
\draw[color=red] (0.3, 4.3) node {\Large \phantom{$R$}};
\end{scriptsize}
\end{tikzpicture}
\caption{
The robber wins if the graph is extended.}\label{fig:robber_escape}
\end{subfigure}
\caption{Recurrent attack strategy for the cops on a spider.}
\label{fig:fig}
\end{figure}

For all $i\in \mathbb{N^*}$, we define $\mathcal{G}_i^k$ to be the set of all graphs in which $k\in \mathbb{N^*}$ cops are sufficient to capture a robber in $\ell_i = \lceil (1-2^{-i})t -1/2 \rceil$
time-steps; that is, $\mathcal{G}_i^k$ consists of all graphs $G$ with $c_{\ell_i}(G) \leq k$.
We note that both $\ell_i$ and $\mathcal{G}_i^k$ are dependent on the value of $t$. However, we do not include this explicitly in the notation. Observe that $t/2-1/2 \leq \ell_i \leq t$ for all $i\geq 1$.

\begin{theorem} \label{thm:infinity_from_timed}
For all $t,i,k\in \mathbb{N^*}$ and all graphs $G\in \mathcal{G}_i^k$, $c_t^\infty(G) \leq i k$.
\end{theorem}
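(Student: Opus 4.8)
The plan is to partition the $ik$ cops into $i$ groups $D_1,\dots,D_i$ of $k$ cops each, and to assign to every group a common \emph{home configuration} from which, since $G\in\mathcal{G}_i^k$, the $k$ cops of that group capture any robber within $\ell_i$ time-steps. Writing $m_i = t-\ell_i$ for the \emph{slack} (a short computation gives $m_i=\lfloor 2^{-i}t+1/2\rfloor$, so that $m_{i+1}=m_i/2$ up to rounding), the strategy is a \emph{recurrent attack}: in each play exactly one group \emph{attacks} while the others \emph{regroup} toward home. The engine of the argument is a stalling device. An attacking group that sits at its home need not commit to the capture immediately: from the home configuration it is guaranteed to finish within $\ell_i$ steps against \emph{any} robber position, so it may wait in place for up to $m_i$ steps and only then chase, still capturing by step $m_i+\ell_i=t$. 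While it waits, and while it finally chases, every regrouping group walks one step closer to its home on each time-step.

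The quantity I would track is, for each group, its \emph{displacement} $d_j$, the number of steps it needs to reach home; a group is \emph{ready} when $d_j\le m_i$, since it can then regroup to home and still capture within the budget $t$. I would prove as an invariant that at the start of every play some group is ready, which together with the stalling device resolves each play within $t$ steps and hence gives $c_t^\infty(G)\le ik$. The base case $i=1$ is exactly observation~(\ref{xx}): a single group attacks in at most $\ell_1\approx t/2$ steps and regroups in at most $\ell_1$ steps, and $2\ell_1\le t$.

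The key dichotomy for maintaining readiness is governed by the play length $s$, which the robber controls by how long he evades, even sacrificing himself to end a play early. If the robber is captured while the attacker is still waiting at its home, then the attacker never moved and stays ready; more generally, an attacker that began the play at home finishes at displacement $\max(0,s-m_i)$, so it remains ready whenever $s\le 2m_i$. If instead $s>2m_i$, then every regrouping group has advanced by more than $2m_i$ steps during the play. I would use this to show that the displacements of the $i$ groups can be kept in a geometrically spaced pattern, bounded by the largest possible displacement $\ell_i=(2^i-1)m_i$ (up to rounding)$\,$---$\,$a sum-decreasing arrangement in the sense of Lemma~\ref{lem:sequence}$\,$---$\,$so that whenever the attacker is forced out to a large displacement, the next-closest group has already descended into readiness. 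It is precisely this geometric spacing that makes $i$ groups suffice and that fixes the budget at $\ell_i=(1-2^{-i})t$.

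The hard part will be this last bookkeeping step: because the robber dictates each play's length, no single play can be assumed to deliver a fixed amount of regroup time, and I must verify the readiness invariant against an adversary who interleaves long evasions with early sacrifices across all $i$ groups at once. I expect to organize it as an induction on $i$, where passing from $i$ to $i+1$ adds one group and halves the slack, $m_{i+1}=m_i/2$, exactly compensating for the longer regroup that the larger budget $\ell_{i+1}$ permits. The parity corrections hidden in the ceilings defining $\ell_i$ will have to be carried through the displacement estimates, but I expect these to be routine once the geometric invariant is set up correctly.
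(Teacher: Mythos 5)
Your setup coincides with the paper's own strategy in all essentials---$i$ units of $k$ cops, a home configuration from which $k$ cops capture within $\ell_i$ steps, a regroup phase of length $m_i = t-\ell_i$ (the paper's $\delta$) followed by an attack launched from home, and the readiness threshold $d_j \le m_i$---but your proof stops exactly where the theorem lives. The sentences ``I would prove as an invariant that at the start of every play some group is ready'' and ``I would use this to show that the displacements \ldots can be kept in a geometrically spaced pattern'' \emph{are} the content of the theorem; everything before them is the easy half. The paper closes this gap not by maintaining a forward invariant (which, as you correctly sense, is painful because the robber controls every play length) but by a backward contradiction: if the robber wins in play $P_f$, then every unit needs at least $m_i+1$ steps to reach home at the start of $P_f$; since the attacker of $P_{f-1}$ chased from home, its attack lasted at least $m_i+1$ rounds, so every other unit stood at displacement at least $2(m_i+1)+m_i = 3m_i+2$ at the start of $P_{f-1}$; iterating, the attacker of $P_{f-j}$ needed attack duration at least $(2^j-1)m_i + 2^{j-1}$, and at $j=i$ this exceeds $\ell_i$ precisely because $\ell_i = \lceil(1-2^{-i})t - 1/2\rceil < (1-2^{-i})t + 1/2$. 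Note the additive $2^{i-1}$: your bound ``largest displacement $\ell_i=(2^i-1)m_i$ up to rounding'' drops exactly this term, which arises from the strict ``$+1$'' at the base doubling through the recursion $a_{j+1}=2a_j+m_i$, $a_1=m_i+1$, and it is precisely what the $-1/2$ inside the ceiling defining $\ell_i$ is tuned to absorb. Treating it as a routine parity correction risks losing the contradiction when $t$ is not large relative to $2^i$.

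Two further cautions. First, the proposed induction on $i$ does not go through as stated: since $\ell_i$ is nondecreasing in $i$, the classes nest as $\mathcal{G}_i^k \subseteq \mathcal{G}_{i+1}^k$, so a graph $G \in \mathcal{G}_{i+1}^k$ need not lie in $\mathcal{G}_i^k$ and the inductive hypothesis cannot be invoked for $G$ itself; any induction must run over the strategy bookkeeping with a parameterized invariant, at which point the paper's single backward count over the last $i$ plays is strictly simpler. Second, the backward chain tacitly requires each large displacement to be traceable to a distinct earlier attacking play (non-attacking units only decrease displacement, and all units begin at home)---a detail worth making explicit in either formulation, and one your sketch's appeal to ``a sum-decreasing arrangement in the sense of Lemma~\ref{lem:sequence}'' gestures at but does not pin down (the paper uses that lemma only in the NP-hardness reduction, not here). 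In short: right strategy, right quantities up to the $2^{i-1}$ term, but the decisive verification is missing rather than merely deferred.
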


\begin{proof}
Since $G\in \mathcal{G}_i^k$, then $c_{\ell_i}(G)\leq k$ by definition. Suppose we are playing with $ik$ cops. For convenience, let $\delta=t-\ell_i$.

During the first play, $c_{\ell_i}(G)\leq k$ cops are used to capture the robber.
During the second play, a second set of $ c_{\ell_i}(G)\leq k$ cops is used to capture the robber.
In total, $i$ different sets of $c_{\ell_i}(G)\leq k$ cops will be required; the set of cops used in play $j$ will be referred to as \emph{unit} $j$.

Suppose that $(v_1, \ldots, v_k)$ (for all integers $1 \leq j \leq k$, $v_j\in V(G)$) is the initial configuration of the $k$ cops in a strategy that captures the robber in a single play in at most $\ell_i$ time-steps.
Initially, place $i$ cops at each vertex $v_j$, for $1 \leq j \leq k$, creating $i$ units, for a total of $ik$ cops.

The cop strategy is as follows. For each play:

\begin{enumerate}
\item \emph{Regroup phase}: During the first $\delta$ rounds, all cops move on a shortest path to their initial starting position in the initial configuration.
\item \emph{Attack phase}: Some unit $u_j$ that is in its initial configuration is selected as the attacking unit.
For the next $\ell_i = t-\delta$ rounds, the attacking unit captures the robber, while all the other units return to their initial configurations.
\end{enumerate}

The strategy above can only fail if there is no unit that has completely returned back to its initial configuration by the start of the attack phase. For a contradiction, assume that the robber wins say during play $P_f,$ where $f$ represents the final play.
This means that at the start of this play, it would have taken each of the units at least $\delta+1$ time-steps to return to their initial configuration (as from the initial configuration, the robber could be caught within $t-\delta$ time-steps by any unit in its initial configuration).

The previous play (say, play $P_{f-1}$) ended when a unit, say unit $u_1$, captured a robber. By the assumptions above, this implies that in play $P_{f-1}$, unit $u_1$ took at least $\delta+1$ rounds to capture the robber during the attack phase.
It then follows that it would take each of the other units at least $2(\delta+1)$ time-steps to return to their initial configuration at the start of the attack phase, and hence, it would take each of the other units at least $2(\delta+1)+\delta = 3\delta + 2$ time-steps to return to their initial configuration at the start of the play.

Continuing in this fashion, on play $P_{f-i}$, unit $u_i$ needs to have captured a robber such that it would take at least $(2^i -1) \delta + (2^{i-1})$ time-steps for the unit $u_i$ to return to its initial configuration. If this occurred, then $\ell_i \geq (2^i -1) \delta + (2^{i-1})$, which when expanded for $\delta=t-\ell_i$ and rearranged becomes $\ell_i \geq (1-2^{-i})t + \frac{1}{2}$, contradicting the definition of $\ell_i$.
Thus, our assumption was incorrect, and the robber does not win.
\end{proof}

In the next subsections, we will study various graph classes.
For part of this analysis, we will find retracts that exist as subgraphs within the graph classes and that are also in $\mathcal{G}_i^k$ for small values of $k$.
It will then be useful to apply Theorem~\ref{thm:infinity_from_timed} to the retracts rather than to the graph as a whole.

\begin{corollary} \label{cor:retracts_give_CER}
Let $t>0$ be a fixed integer.
Suppose we have a decomposition of $G$ into retracts, such that each retract $R$ in the decomposition can be assigned a parameter $(i,k)$ such that $R\in \mathcal{G}_i^k$.
Let $\alpha_i^k$ be the total number of retracts in the decomposition that have been assigned a parameter of $(i,k)$.
We then have that
\[
c_t^{\infty}(G) \leq \sum_{i,k} \alpha_i^k \cdot i \cdot k.
\]
\end{corollary}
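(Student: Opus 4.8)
The plan is to combine Lemma~\ref{lem:disjointRetracts} with Theorem~\ref{thm:infinity_from_timed}, since the corollary is essentially a bookkeeping consequence of applying the timed-to-eternal bound retract-by-retract. First I would invoke Lemma~\ref{lem:disjointRetracts} on the given decomposition $G = R_1 \cup \cdots \cup R_p$ into retracts, which yields $c_t^{\infty}(G) \leq \sum_{j=1}^p c_t^{\infty}(R_j)$. This reduces the task to bounding the eternal cop number of each individual retract in the decomposition separately, with the shadow strategy handling the way these pieces fit together.

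Next, for each retract $R_j$ I would use its assigned parameter $(i,k)$ together with the hypothesis that $R_j \in \mathcal{G}_i^k$. By Theorem~\ref{thm:infinity_from_timed}, applied with the fixed value of $t$, we obtain $c_t^{\infty}(R_j) \leq i \cdot k$, where $(i,k)$ is the parameter of $R_j$. Substituting these per-retract bounds into the inequality from the previous step gives $c_t^{\infty}(G) \leq \sum_{j=1}^p i_j k_j$, where $(i_j,k_j)$ denotes the parameter assigned to $R_j$.

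Finally, I would regroup the terms of this sum according to the assigned parameters. For each pair $(i,k)$ there are, by definition, exactly $\alpha_i^k$ retracts carrying that parameter, and each such retract contributes precisely $i \cdot k$ to the sum. Hence $\sum_{j=1}^p i_j k_j = \sum_{i,k} \alpha_i^k \cdot i \cdot k$, which is exactly the claimed bound.

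The only point requiring care — and the closest thing to an obstacle — is confirming that the hypotheses of the two cited results are genuinely met: the decomposition must be a union of retracts in the precise sense demanded by Lemma~\ref{lem:disjointRetracts}, so that the shadow strategy composes across the pieces without interfering timing, and each assigned parameter $(i,k)$ must satisfy $i,k \in \mathbb{N^*}$ with the \emph{same} fixed $t$ used throughout, so that Theorem~\ref{thm:infinity_from_timed} applies uniformly to every retract. Once these conditions are verified, no further game-theoretic argument is needed; the result follows purely by summation over the decomposition.
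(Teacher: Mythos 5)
Your proof is correct and is exactly the argument the paper intends: the corollary is stated without proof precisely because it follows from combining Lemma~\ref{lem:disjointRetracts} with Theorem~\ref{thm:infinity_from_timed} applied retract-by-retract, then regrouping terms by parameter, just as you do. Your closing check that each retract uses the same fixed $t$ is a sensible verification but raises no real obstacle.
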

We note that for each retract $R$, naturally we would want to select an $i$ and $k$ with $R\in \mathcal{G}_i^k$ such that $i \cdot k$ is minimized.

In Figure~\ref{fig:decomp_retracts}, we give a graph along with a decomposition into four retracts denoted using surrounding ellipses. When $t=4$, the first retract (denoted by a surrounding blue circle) is in  $\mathcal{G}_1^2$ and so we select $(i,k)=(1,2)$,  the second retract (denoted by a surrounding red circle) is in  $\mathcal{G}_4^1$ and so we select $(i,k)=(4,1)$, and the third and fourth (denoted by surrounding green circles) are in  $\mathcal{G}_1^1$ and so we select $(i,k)=(1,1)$. Hence, we have that $\alpha_1^2=1$, $\alpha_4^1=1$, and $\alpha_1^1=2$. Corollary~\ref{cor:retracts_give_CER} yields that $c_t^{\infty}(G) \leq (\alpha_1^2 \cdot 2) + (\alpha_4^1 \cdot 4) + (\alpha_1^1 \cdot 1) = 8$.

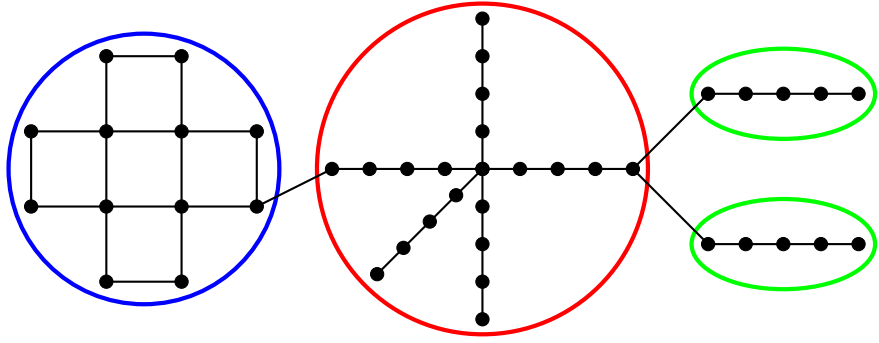
\begin{figure}
  \centering
  \begin{tikzpicture}
\draw [line width=0.8pt] (1,0)-- (4,0);
\draw [line width=0.8pt] (1,1)-- (4,1);
\draw [line width=0.8pt] (2,2)-- (3,2);
\draw [line width=0.8pt] (2,-1)-- (3,-1);
\draw [line width=0.8pt] (1,1)-- (1,0);
\draw [line width=0.8pt] (2,2)-- (2,-1);
\draw [line width=0.8pt] (3,2)-- (3,-1);
\draw [line width=0.8pt] (4,1)-- (4,0);
\begin{scriptsize}
\draw [fill=black] (1,0) circle (2.5pt);
\draw [fill=black] (1,1) circle (2.5pt);
\draw [fill=black] (2,0) circle (2.5pt);
\draw [fill=black] (2,1) circle (2.5pt);
\draw [fill=black] (3,0) circle (2.5pt);
\draw [fill=black] (3,1) circle (2.5pt);
\draw [fill=black] (4,0) circle (2.5pt);
\draw [fill=black] (4,1) circle (2.5pt);
\draw [fill=black] (3,-1) circle (2.5pt);
\draw [fill=black] (3,2) circle (2.5pt);
\draw [fill=black] (2,-1) circle (2.5pt);
\draw [fill=black] (2,2) circle (2.5pt);


\draw[color=blue,ultra thick] (2.5,0.5) ellipse (1.8cm and 1.8cm);
\end{scriptsize}

\draw [line width=0.8pt] (5,0.5)-- (9,0.5);
\draw [line width=0.8pt] (7,2.5)-- (7,-1.5);
\draw [line width=0.8pt] (7,0.5)-- (7-1.4,0.5-1.4);
\begin{scriptsize}
\draw [fill=black] (5,0.5) circle (2.5pt);
\draw [fill=black] (5.5,0.5) circle (2.5pt);
\draw [fill=black] (6,0.5) circle (2.5pt);
\draw [fill=black] (6.5,0.5) circle (2.5pt);
\draw [fill=black] (7,0.5) circle (2.5pt);
\draw [fill=black] (7.5,0.5) circle (2.5pt);
\draw [fill=black] (8,0.5) circle (2.5pt);
\draw [fill=black] (8.5,0.5) circle (2.5pt);
\draw [fill=black] (9,0.5) circle (2.5pt);
\draw [fill=black] (7,2.5) circle (2.5pt);
\draw [fill=black] (7,2) circle (2.5pt);
\draw [fill=black] (7,1.5) circle (2.5pt);
\draw [fill=black] (7,1) circle (2.5pt);
\draw [fill=black] (7,0.5) circle (2.5pt);
\draw [fill=black] (7,0) circle (2.5pt);
\draw [fill=black] (7,-0.5) circle (2.5pt);
\draw [fill=black] (7,-1) circle (2.5pt);
\draw [fill=black] (7,-1.5) circle (2.5pt);
\draw [fill=black] (7-0.35,0.5-0.35) circle (2.5pt);
\draw [fill=black] (7-0.7,0.5-0.7) circle (2.5pt);
\draw [fill=black] (7-1.05,0.5-1.05) circle (2.5pt);
\draw [fill=black] (7-1.4,0.5-1.4) circle (2.5pt);

\draw[color=red,ultra thick] (5-0.2,0.5) arc (-180:180:2.2);

\end{scriptsize}

\draw [line width=0.8pt] (10,1.5)-- (12,1.5);
\draw [line width=0.8pt] (10,-0.5)-- (12,-0.5);

\begin{scriptsize}
\draw [fill=black] (10,1.5) circle (2.5pt);
\draw [fill=black] (10.5,1.5) circle (2.5pt);
\draw [fill=black] (11,1.5) circle (2.5pt);
\draw [fill=black] (11.5,1.5) circle (2.5pt);
\draw [fill=black] (12,1.5) circle (2.5pt);
\draw[color=green,ultra thick] (11,1.5) ellipse (1.22cm and 0.6cm);

\draw [fill=black] (10,-0.5) circle (2.5pt);
\draw [fill=black] (10.5,-0.5) circle (2.5pt);
\draw [fill=black] (11,-0.5) circle (2.5pt);
\draw [fill=black] (11.5,-0.5) circle (2.5pt);
\draw [fill=black] (12,-0.5) circle (2.5pt);
\draw[color=green,ultra thick] (11,-0.5) ellipse (1.22cm and 0.6cm);
\end{scriptsize}

\draw [line width=0.8pt] (5,0.5)-- (4,0);
\draw [line width=0.8pt] (9,0.5)-- (9,0.5);
\draw [line width=0.8pt] (9,0.5)-- (10,1.5);
\draw [line width=0.8pt] (9,0.5)-- (10,-0.5);

\end{tikzpicture}
\vspace{-0.05cm}
  \caption{
  A graph decomposed into four retracts.
  }
  \label{fig:decomp_retracts}
\end{figure}

\subsection{Trees}\label{sec:trees}
A connected subtree of a tree is a retract. Further, every tree $T$ of radius $r \leq t$ has $c_r(T)=1$.
We note that the trees of radius at most $\ell_i = \lceil (1-2^{-i})t -1/2 \rceil$ are contained within $\mathcal{G}_i^1$.
We give examples of such trees in Figure~\ref{fig:examplesTrees}.
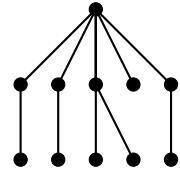
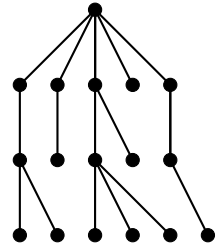
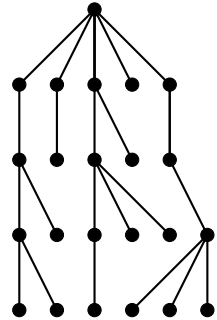
\begin{figure}
\begin{subfigure}{.31\textwidth}
  \vspace{1.98cm}
  \centering
  \begin{tikzpicture}
\draw [line width=0.8pt] (0,0)-- (-1,-1);
\draw [line width=0.8pt] (0,0)-- (0,-1);
\draw [line width=0.8pt] (0,0)-- (1,-1);
\draw [line width=0.8pt] (0,0)-- (-0.5,-1);
\draw [line width=0.8pt] (0,0)-- (0.5,-1);

\draw [line width=0.8pt] (-1,-2)-- (-1,-1);
\draw [line width=0.8pt] (-0.5,-2)-- (-0.5,-1);
\draw [line width=0.8pt] (0,0)-- (0,-1);
\draw [line width=0.8pt] (0,0)-- (0,-1);

\draw [line width=0.8pt] (0,-1)-- (0,-2);
\draw [line width=0.8pt] (0,-1)-- (0.5,-2);

\draw [line width=0.8pt] (1,-1)-- (1,-2);

\begin{scriptsize}
\draw [fill=black] (0,0) circle (2.5pt);
\draw [fill=black] (-1,-1) circle (2.5pt);
\draw [fill=black] (0,-1) circle (2.5pt);
\draw [fill=black] (1,-1) circle (2.5pt);
\draw [fill=black] (-0.5,-1) circle (2.5pt);
\draw [fill=black] (0.5,-1) circle (2.5pt);
\draw [fill=black] (-1,-2) circle (2.5pt);
\draw [fill=black] (0,-2) circle (2.5pt);
\draw [fill=black] (1,-2) circle (2.5pt);
\draw [fill=black] (-0.5,-2) circle (2.5pt);
\draw [fill=black] (0.5,-2) circle (2.5pt);
\end{scriptsize}
\end{tikzpicture}
\vspace{-0.05cm}
  \caption{
  $i=1$ and radius $\ell_1=2$.}
  \label{fig:treefig1}
\end{subfigure}

\hspace{5pt}
\begin{subfigure}{.31\textwidth}
  \vspace{1.38cm}
  \centering
  \begin{tikzpicture}
\draw [line width=0.8pt] (0,0)-- (-1,-1);
\draw [line width=0.8pt] (0,0)-- (0,-1);
\draw [line width=0.8pt] (0,0)-- (1,-1);
\draw [line width=0.8pt] (0,0)-- (-0.5,-1);
\draw [line width=0.8pt] (0,0)-- (0.5,-1);

\draw [line width=0.8pt] (-1,-2)-- (-1,-1);
\draw [line width=0.8pt] (-0.5,-2)-- (-0.5,-1);
\draw [line width=0.8pt] (0,0)-- (0,-1);
\draw [line width=0.8pt] (0,0)-- (0,-1);

\draw [line width=0.8pt] (0,-1)-- (0,-2);
\draw [line width=0.8pt] (0,-1)-- (0.5,-2);

\draw [line width=0.8pt] (1,-1)-- (1,-2);

\draw [line width=0.8pt] (1,-1)-- (1,-2);

\draw [line width=0.8pt] (-1,-2)-- (-1,-3);
\draw [line width=0.8pt] (-1,-2)-- (-0.5,-3);
\draw [line width=0.8pt] (0,-2)-- (0,-3);
\draw [line width=0.8pt] (0,-2)-- (0.5,-3);
\draw [line width=0.8pt] (0,-2)-- (1,-3);
\draw [line width=0.8pt] (1,-2)-- (1.5,-3);

\begin{scriptsize}
\draw [fill=black] (0,0) circle (2.5pt);
\draw [fill=black] (-1,-1) circle (2.5pt);
\draw [fill=black] (0,-1) circle (2.5pt);
\draw [fill=black] (1,-1) circle (2.5pt);
\draw [fill=black] (-0.5,-1) circle (2.5pt);
\draw [fill=black] (0.5,-1) circle (2.5pt);
\draw [fill=black] (-1,-2) circle (2.5pt);
\draw [fill=black] (0,-2) circle (2.5pt);
\draw [fill=black] (1,-2) circle (2.5pt);
\draw [fill=black] (-0.5,-2) circle (2.5pt);
\draw [fill=black] (0.5,-2) circle (2.5pt);

\draw [fill=black] (-1,-3) circle (2.5pt);
\draw [fill=black] (0,-3) circle (2.5pt);
\draw [fill=black] (1,-3) circle (2.5pt);
\draw [fill=black] (1.5,-3) circle (2.5pt);
\draw [fill=black] (-0.5,-3) circle (2.5pt);
\draw [fill=black] (0.5,-3) circle (2.5pt);

\end{scriptsize}
\end{tikzpicture}
\vspace{-0.05cm}
  \caption{
  $i=2$ and radius $\ell_2=3$. (Note, $\ell_2 = \ell_3$.)}
  \label{fig:treefig2}
\end{subfigure}
\hspace{5pt}
\begin{subfigure}{.31\textwidth}
  \centering
  \begin{tikzpicture}
\draw [line width=0.8pt] (0,0)-- (-1,-1);
\draw [line width=0.8pt] (0,0)-- (0,-1);
\draw [line width=0.8pt] (0,0)-- (1,-1);
\draw [line width=0.8pt] (0,0)-- (-0.5,-1);
\draw [line width=0.8pt] (0,0)-- (0.5,-1);

\draw [line width=0.8pt] (-1,-2)-- (-1,-1);
\draw [line width=0.8pt] (-0.5,-2)-- (-0.5,-1);
\draw [line width=0.8pt] (0,0)-- (0,-1);
\draw [line width=0.8pt] (0,0)-- (0,-1);
\draw [line width=0.8pt] (0,-1)-- (0,-2);
\draw [line width=0.8pt] (0,-1)-- (0.5,-2);
\draw [line width=0.8pt] (1,-1)-- (1,-2);
\draw [line width=0.8pt] (1,-1)-- (1,-2);
\draw [line width=0.8pt] (-1,-2)-- (-1,-3);
\draw [line width=0.8pt] (-1,-2)-- (-0.5,-3);
\draw [line width=0.8pt] (0,-2)-- (0,-3);
\draw [line width=0.8pt] (0,-2)-- (0.5,-3);
\draw [line width=0.8pt] (0,-2)-- (1,-3);
\draw [line width=0.8pt] (1,-2)-- (1.5,-3);
\draw [line width=0.8pt] (-1,-3)-- (-1,-4);
\draw [line width=0.8pt] (-1,-3)-- (-0.5,-4);
\draw [line width=0.8pt] (0,-3)-- (0,-4);
\draw [line width=0.8pt] (1.5,-3)-- (0.5,-4);
\draw [line width=0.8pt] (1.5,-3)-- (1,-4);
\draw [line width=0.8pt] (1.5,-3)-- (1.5,-4);

\begin{scriptsize}
\draw [fill=black] (0,0) circle (2.5pt);
\draw [fill=black] (-1,-1) circle (2.5pt);
\draw [fill=black] (0,-1) circle (2.5pt);
\draw [fill=black] (1,-1) circle (2.5pt);
\draw [fill=black] (-0.5,-1) circle (2.5pt);
\draw [fill=black] (0.5,-1) circle (2.5pt);
\draw [fill=black] (-1,-2) circle (2.5pt);
\draw [fill=black] (0,-2) circle (2.5pt);
\draw [fill=black] (1,-2) circle (2.5pt);
\draw [fill=black] (-0.5,-2) circle (2.5pt);
\draw [fill=black] (0.5,-2) circle (2.5pt);

\draw [fill=black] (-1,-3) circle (2.5pt);
\draw [fill=black] (0,-3) circle (2.5pt);
\draw [fill=black] (1,-3) circle (2.5pt);
\draw [fill=black] (1.5,-3) circle (2.5pt);
\draw [fill=black] (-0.5,-3) circle (2.5pt);
\draw [fill=black] (0.5,-3) circle (2.5pt);

\draw [fill=black] (-1,-4) circle (2.5pt);
\draw [fill=black] (0,-4) circle (2.5pt);
\draw [fill=black] (1,-4) circle (2.5pt);
\draw [fill=black] (1.5,-4) circle (2.5pt);
\draw [fill=black] (-0.5,-4) circle (2.5pt);
\draw [fill=black] (0.5,-4) circle (2.5pt);

\end{scriptsize}
\end{tikzpicture}
\vspace{-0.05cm}
  \caption{
  $i=4$ and radius $\ell_4=4$.}
  \label{fig:treefig3}
\end{subfigure}
\caption{For $t=4$, examples of trees that have radius at most $\ell_i$, for three different values of $i$.}
\label{fig:examplesTrees}
\end{figure}

Theorem~\ref{thm:infinity_from_timed} when applied to trees then gives that if $T$ has radius at most $\ell_i$, then $c_t^\infty(T) \leq i$.
When we apply Corollary~\ref{cor:retracts_give_CER} to trees and choose the optimal value of $i$, we obtain the following result.

\begin{corollary} \label{cor:treesupper}
For all $t\in \mathbb{N^*}$, if $T$ decomposes into subtrees such that, for each $i \in \mathbb{N}^{*}$, there are $\alpha_i$ subtrees in the decomposition that have radius at most $\ell_i$ and at least $\ell_{i-1}+1$, then
\begin{align} \label{eqn:decomp}
c_t^{\infty}(T) \leq \sum_i \alpha_i \cdot i.\end{align}
\end{corollary}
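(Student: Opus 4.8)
The plan is to obtain the statement as a direct specialization of Corollary~\ref{cor:retracts_give_CER} to the case in which every retract of the decomposition is a subtree captured by a single cop. All three ingredients are already available in the preceding text: a connected subtree of a tree is a retract; every tree of radius $r \leq t$ satisfies $c_r = 1$; and the trees of radius at most $\ell_i$ lie in $\mathcal{G}_i^1$. So the task reduces to checking that the hypothesized decomposition furnishes a legitimate assignment of parameters $(i,k)$ to feed into Corollary~\ref{cor:retracts_give_CER}.

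First I would fix a decomposition $T = T_1 \cup \cdots \cup T_m$ into connected subtrees as in the hypothesis, so that each $T_j$ is a retract of $T$. Writing $r_j$ for the radius of $T_j$, if $r_j \leq \ell_i$ then a single cop placed at a centre of $T_j$ captures the robber within $r_j \leq \ell_i$ time-steps, so $c_{\ell_i}(T_j) \leq 1$ and hence $T_j \in \mathcal{G}_i^1$. The aim is to assign to each $T_j$ a parameter $(i,1)$ with $i$ as small as possible, since when $k=1$ the quantity $i \cdot k$ is minimized precisely by minimizing $i$; this matches the optimization remark following Corollary~\ref{cor:retracts_give_CER}.

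The one point requiring care, and what I regard as the main (mild) obstacle, is verifying that this minimal-$i$ assignment coincides with the one described in the statement, given that $\ell_i = \lceil (1 - 2^{-i})t - 1/2 \rceil$ is only \emph{non-strictly} increasing in $i$. Here I would note that $\ell_0 = 0$, that $\ell_i = t$ for all sufficiently large $i$, and that the integer intervals $\{\,r : \ell_{i-1} + 1 \leq r \leq \ell_i\,\}$ therefore partition $\{1, \ldots, t\}$ as $i$ ranges over $\mathbb{N}^{*}$. When $\ell_{i-1} = \ell_i$ the corresponding interval is empty, so no subtree receives that label and $\alpha_i = 0$, which is consistent with the fact that advancing $i$ across a plateau of $\ell_i$ buys no additional radius. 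Consequently each $T_j$ of radius $r_j$ gets the well-defined label $i$ with $\ell_{i-1}+1 \leq r_j \leq \ell_i$, and this is exactly the smallest $i$ for which $r_j \leq \ell_i$. By hypothesis there are $\alpha_i$ subtrees carrying label $i$.

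Finally I would invoke Corollary~\ref{cor:retracts_give_CER} with this assignment. Because every retract is given a parameter of the form $(i,1)$, we have $\alpha_i^1 = \alpha_i$ and $\alpha_i^k = 0$ for all $k \geq 2$, so that
\[
c_t^{\infty}(T) \leq \sum_{i,k} \alpha_i^k \cdot i \cdot k = \sum_i \alpha_i \cdot i,
\]
which is the claimed bound in~(\ref{eqn:decomp}). The argument is routine once Corollary~\ref{cor:retracts_give_CER} is in hand; the only substantive bookkeeping is the partition argument above, which guarantees that the interval assignment is simultaneously exhaustive (every subtree of radius at most $t$ is labelled) and optimal (each subtree is labelled by its smallest admissible $i$).
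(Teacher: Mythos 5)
Your proposal is correct and follows essentially the same route as the paper: the paper also obtains the bound by noting that connected subtrees are retracts, that a tree of radius at most $\ell_i$ lies in $\mathcal{G}_i^1$, and then applying Corollary~\ref{cor:retracts_give_CER} with the parameter $(i,1)$ chosen minimally for each subtree. Your extra bookkeeping verifying that the intervals $\{\,r : \ell_{i-1}+1 \leq r \leq \ell_i\,\}$ partition $\{1,\ldots,t\}$ (with empty intervals on plateaus of $\ell_i$) merely makes explicit what the paper leaves implicit in the phrase ``choose the optimal value of $i$.''
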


We note the following lemma.

\begin{lemma}
There exists a polynomial-time algorithm to determine the decomposition of a tree into retracts such that the right hand side of (\ref{eqn:decomp}) is minimized.
\end{lemma}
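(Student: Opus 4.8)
The plan is to recast the minimization of the right-hand side of (\ref{eqn:decomp}) as a weighted connected-partition problem on $T$ and to solve it by a bottom-up dynamic program. Recall from Section~\ref{sec:trees} that a connected subtree is a retract of $T$, so any partition of $V(T)$ into connected subtrees is a legitimate decomposition for Corollary~\ref{cor:treesupper}; moreover, writing $\kappa(r) = \min\{\, i \in \mathbb{N}^* : r \le \ell_i \,\}$ for the cost of a piece of radius $r$ (so that such a piece lies in $\mathcal{G}_i^1$ and contributes $i$, and $\kappa(r)=\infty$ when $r>t$), the quantity $\sum_i \alpha_i \cdot i$ in (\ref{eqn:decomp}) equals $\sum_S \kappa(\mathrm{rad}(S))$, summed over the pieces $S$. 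Thus I want an algorithm returning a partition of $V(T)$ into connected subtrees that minimizes $\sum_S \kappa(\mathrm{rad}(S))$. Since a singleton has radius $0$ with $\kappa(0)=1$, a feasible partition always exists, and since $\mathrm{rad}(S)\le\mathrm{diam}(T)<n$, only $\kappa(0),\ldots,\kappa(n-1)$ are ever needed; each is precomputed by finding the least $i$ with $\ell_i$ at least the given radius.

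The main obstacle is that $\mathrm{rad}(S)$ is a center-based quantity: the center of a piece may lie in its interior and its value depends on distances across several branches, which does not compose cleanly under a rooted traversal. I sidestep this with the tree identity $\mathrm{rad}(S)=\lceil \mathrm{diam}(S)/2\rceil$, which rewrites the cost as $\widehat{\kappa}(\mathrm{diam}(S))$ with $\widehat{\kappa}(D)=\min\{\, i : D \le 2\ell_i \,\}$. Diameter does compose under rooting, since the diameter of a connected subtree is realised by a path whose highest vertex is the meeting point of its two longest downward branches inside the piece. Root $T$ arbitrarily and, for each vertex $v$, let $f_v(h,D)$ be the minimum cost of partitioning the subtree $T_v$ subject to the (still open, as it may extend above $v$) piece containing $v$ having downward height $h=\max_u d(v,u)$ over the part of that piece inside $T_v$, and partial diameter $D$ equal to the longest path lying entirely within that part.

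I would compute the tables $f_v$ from the leaves upward. At $v$, process its children one at a time, maintaining the running table for $v$'s open piece, initialised at the single vertex $v$ with height $0$ and diameter $0$. For each child $c$ there are two options: \emph{cut}, which closes $c$'s piece and adds the constant $\min_{h_c,D_c}\bigl(f_c(h_c,D_c)+\widehat{\kappa}(D_c)\bigr)$; or \emph{merge}, which attaches $c$'s open piece by updating $D \mapsto \max(D,\,D_c,\,h+h_c+1)$ and then $h \mapsto \max(h,\,h_c+1)$, adding $f_c(h_c,D_c)$. Here the term $h+h_c+1$ records the new longest path meeting at $v$, while $D_c$ carries over paths that meet at deeper vertices; since every path in a piece has a unique highest vertex, each is counted exactly once, so the diameter is tracked correctly. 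The optimum is $\min_{h,D}\bigl(f_{r_0}(h,D)+\widehat{\kappa}(D)\bigr)$ at the root $r_0$, whose piece must be closed, and retracing the recorded choices recovers the minimizing partition itself, as the lemma requires.

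Finally I would account for the running time. The states $(h,D)$ range over $O(n^2)$ values, evaluating $\widehat{\kappa}$ on $\{0,\ldots,n-1\}$ needs only the first few $\ell_i$ (those up to radius roughly $n$) and is polynomial in the input size, and the per-vertex child combination is a standard knapsack-style merge of the children's tables; summed over the tree this is polynomial in $n$, which establishes the lemma. The only step genuinely requiring care is the correctness of the merge rule, which follows from the tree identity $\mathrm{rad}=\lceil\mathrm{diam}/2\rceil$ together with the unique-highest-vertex observation above; the reduction to a vertex partition and the cost reindexing are routine.
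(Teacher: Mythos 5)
Your proof is correct, but it takes a genuinely different route from the paper's. The paper does not design an algorithm from scratch: it forms the family $\mathcal{F}=\{N^r_T[v] : 1\leq r\leq t,\ v\in V(T)\}$ of balls, observes that this family is \emph{fork-free}, assigns each ball the weight $i$ determined by its radius bin, and invokes the known polynomial-time algorithm of B\'ar\'any, Edmonds, and Wolsey~\cite{BEW86} for optimal weighted covering of a tree by a fork-free family of subtrees; it then converts the resulting cover into a decomposition by trimming pairwise intersections, using the fact that shrinking a subtree cannot increase its radius (hence its weight). You instead solve the problem directly by a bottom-up dynamic program over vertex partitions into connected subtrees, with states $(h,D)$ recording the downward height and partial diameter of the open piece, made possible by the tree identity $\mathrm{rad}(S)=\lceil \mathrm{diam}(S)/2\rceil$, which converts the center-based cost $\kappa(\mathrm{rad}(S))$ into the composable quantity $\widehat{\kappa}(\mathrm{diam}(S))$; your merge rule $D\mapsto\max(D,D_c,h+h_c+1)$, justified by the unique-highest-vertex observation, is the right one, and your cost reindexing $\sum_i \alpha_i\cdot i=\sum_S \kappa(\mathrm{rad}(S))$ with $\kappa(r)=\min\{i : r\leq \ell_i\}$ matches the binning in Corollary~\ref{cor:treesupper} (including the repeated-$\ell_i$ and singleton edge cases). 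What each approach buys: yours is self-contained and immediately constructive, avoiding both the fork-freeness verification and the cover-to-partition conversion step (where the paper implicitly also needs that an optimal partition induces a ball cover of no greater weight, via $S\subseteq N^{\mathrm{rad}(S)}_T[c]$ for the center $c$ of $S$); the paper's reduction is shorter on the page given the citation, and inherits the efficiency of the known covering algorithm, whereas your $O(n^2)$-state tables with pairwise table merges give a larger but still comfortably polynomial running time. One small presentational suggestion: make explicit, as the paper's framework requires, that restricting attention to partitions of $V(T)$ into connected subtrees loses no generality because a decomposition in the sense of (\ref{eqn:decomp}) is a disjoint union of subtrees and every connected subtree of a tree is a retract; you assert this, and it is exactly right, but it is the load-bearing link between your combinatorial optimization and the lemma as stated.
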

\begin{proof}
Let $T$ be the tree that we are trying to decompose.
We consider the family of subtrees $\mathcal{F} = \{N^r_T[v] : 1 \leq r \leq t \text{ and } v \in V(T)\}$, where $N^r_T[v]$ is the set of all vertices at distance at most $r$ from $v$ in $T$.
Notice that any two subtrees in this family $T_1,T_2$ have the property that if $T_1 \setminus T_2$ is disconnected, then $T_2 \setminus T_1$ is connected and similarly  if $T_2 \setminus T_1$ is disconnected, then $T_1 \setminus T_2$ is connected.
This follows from the definition of $\mathcal{F}$.
Hence, any path in $T$ that connects two vertices of $T_2 \setminus T_1$ is disjoint from any path in $T$ that connects two vertices of $T_1 \setminus T_2$. This property is known as being \emph{fork-free}.
A polynomial-time algorithm exists for optimally covering a tree with a given family of fork-free subtrees along with associated weights~\cite{BEW86}.

We give a subtree a weight of $i$ if it has  radius at most $\ell_i$ and at least $\ell_{i-1}+1$.
We apply this algorithm, which takes polynomial time.
This yields a cover of subtrees, and we want a decomposition of a tree into subtrees.
To construct a decomposition from the cover, we consider any two intersecting subtrees $T_1, T_2$ in the cover. If $T_1 \setminus T_2$ is connected, then we set $T'_1 =  T_1 \setminus T_2$ and $T'_2 = T_2$. Otherwise, $T_2 \setminus T_1$ is connected, then we set $T'_2 =  T_2 \setminus T_1$ and $T'_1 = T_1$. We replace subtrees $T_1, T_2$ in the cover with subtrees $T'_1, T'_2$. The weights of $T'_1$ and $T'_2$ are not greater than $T_1$ and $T_2$.
After all intersecting subtrees have been processed in this fashion, the resulting cover is also a decomposition.
Such processing takes linear time.
\end{proof}

We conjecture that the resulting upper bound obtained through Corollary~\ref{cor:treesupper} is close to being tight in the case that $T$ comes from an optimally chosen decomposition into subtrees.
The intuition behind this idea has two parts.

The first is that subtrees with small maximum degree and relatively large radius (close to $t$) can often be decomposed into a small number of subtrees with relatively small radius.
Say some subtree $P$ that occurs in a decomposition of a tree $T$ is a path of length $2\ell_i$, for any integer $i$ larger than $1$.
While $P$ has $c_t^\infty(P)=2$, the bound in Corollary~\ref{cor:treesupper} only yields  $c_t^\infty(P)\leq i$.
However, $P$ can be decomposed into two paths $P'$ and $P''$ of lengths $\ell_i$ and $\ell_i-1$, respectively, which have $c_t^\infty(P')=c_t^\infty(P'')=1$.
As such, an  optimally chosen decomposition would avoid this subtree with small maximum degree and relatively large radius.

The second is that the bound is tight for large families of subtrees with relatively small radius. Each of the subtrees $T'$ in the decomposition of Corollary~\ref{cor:treesupper} that have radius at most $\ell_i$ and at least $\ell_{i-1}+1$ are said to have the upper bound $c_t^\infty(T') \leq i$.
In the next two lemmas, we show that there are significant cases where this bound for $T'$ is tight.

\begin{lemma} \label{lem:simplerLowerBoundTrees}
For all $t,i, \ell \in \mathbb{N^*}$ with $\ell \geq (1 -2^{-i})t +  1-2^{-i}$, if $T$ is a rooted tree such that its root has degree $\beta >i$ and all leaves are distance $\ell$ from the root, then $c^{\infty}_t (T) >i$.
\end{lemma}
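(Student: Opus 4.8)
The plan is to exhibit a robber strategy that defeats $i$ cops, which yields $c_t^\infty(T) > i$. Write $r$ for the root and let $B_1,\ldots,B_\beta$ be the branches hanging off $r$, so that $\beta \geq i+1$. Since every leaf is at distance exactly $\ell$ from $r$, each $B_m$ contains a root-to-leaf path of length $\ell$, and in a tree a cop at depth $d$ in any branch needs exactly $d+\ell$ time-steps to reach a leaf of a \emph{different} branch, as it must pass through $r$. Set $\delta = t-\ell$; the hypothesis $\ell \geq (1-2^{-i})(t+1)$ forces $\ell > t/2$, hence $2\ell > t$ and $0 \leq \delta < \ell$. Call a cop \emph{committed} if its distance to $r$ exceeds $\delta$; a committed cop cannot reach any leaf of a branch it does not occupy within $t$ time-steps. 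The winning observation is then immediate: if at the start of some play all $i$ cops are committed, then at most $i$ distinct branches contain a cop, so at least one branch is cop-free, and the robber appears at its leaf and passes, with no cop reaching him in time.

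First I would set up the driving phase, mirroring the reverse direction of Theorem~\ref{thm:np-hard}. Over plays $j=1,\ldots,i$, the robber commits the cops one at a time into distinct branches. In play $j$ the robber appears at a prescribed depth $d_j$, with $\ell = d_1 > d_2 > \cdots > d_i > \delta$, inside a branch containing no cop at depth $\geq d_j$; if no such branch exists then all cops are already deep and the robber wins as above, so we may assume one does. As in Theorem~\ref{thm:np-hard}, the robber sits and passes, and either is captured at depth $d_j$ or escapes to the leaf (distance $\ell - d_j$ away) to guarantee survival past $t$ time-steps. This forces the capturing cop $C_j$ to finish at depth $d_j$, hence committed, and controls the capture time $\tau_j$ of play $j$ in terms of $d_j$. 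The depths are chosen so that the excess quantities $d_j - \delta$ form a sum-decreasing sequence in the sense of Lemma~\ref{lem:sequence}, with first term $d_1 - \delta = 2\ell - t$.

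The heart of the argument is the invariant that at the start of play $j+1$ the cops $C_1,\ldots,C_j$ occupy $j$ distinct branches and all remain committed. The only way this can fail is that an already-committed cop $C_m$ regroups toward $r$ while later cops are being committed: during each subsequent play $j>m$ it can decrease its depth by at most $\tau_j$, so it stays committed provided $d_m - \delta > \sum_{j>m}\tau_j$. This is precisely the sum-decreasing condition, and Lemma~\ref{lem:sequence} governs the maximal length of such a schedule in terms of its first term, which is where the threshold on $\ell$ enters. Once the invariant survives through play $i$, all $i$ cops are committed at the start of play $i+1$, and the first paragraph finishes the proof.

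I expect the main obstacle to be making the bookkeeping match the exact bound $\ell \geq (1-2^{-i})(t+1)$. The naive sum-decreasing estimate with $\tau_j \approx d_j - \delta$ only yields the weaker requirement $2\ell - t \geq 2^{i-1}$; the sharper constant comes from the fact that committing $C_j$ to depth $d_j$ with a defender starting near $r$ costs a capture time that compounds the ongoing recovery of the earlier cops, producing the doubling recurrence of the form $r_j = 2r_{j-1} + \delta + O(1)$ that already appears, run forwards, in the proof of Theorem~\ref{thm:infinity_from_timed}. Running that recurrence in reverse, the requirement that every depth $d_j$ stay in the admissible range $[\delta+1,\ell]$ for all $j \leq i$ is equivalent to $\ell \geq (1-2^{-i})(t+1)$; verifying this, together with the integrality of the $d_j$ (absorbed by the ceilings and the $+1$ in $t+1$), is the delicate part. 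Everything else reduces to the branch-counting argument of the first paragraph.
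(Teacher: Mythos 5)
Your overall plan is the paper's: over plays $1,\ldots,i$ the robber attacks at strictly decreasing depths, each capture drags a fresh cop deep into a branch where it stays committed, and the $(i+1)$st robber wins at a leaf of a cop-free branch, at distance at least $\ell+\delta+1=t+1$ from every cop (with your $\delta=t-\ell$). But two steps fail as written. First, your branch-selection rule is too weak. You only require the robber's branch to contain no cop at depth $\geq d_j$, with a fallback if no such branch exists; since $\beta>i$ guarantees a wholly cop-free branch, the fallback never fires, and the weak rule permits the robber to enter a branch already containing a committed cop $C_m$ at some depth in $(\delta,d_j)$. That cop can capture the stationary robber well within the deadline $t-(\ell-d_j)=\delta+d_j$, so play $j$ commits no \emph{new} cop: your invariant that $C_1,\ldots,C_j$ are distinct and committed does not advance, and after $i$ plays a cop may remain at depth at most $\delta$, from which it reaches any leaf within $\delta+\ell=t$ time-steps and defeats the final robber. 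The paper avoids this by placing every robber in a branch containing no cop at all, so the capturing cop must pass through the root and hence must start at depth at most $\delta$, i.e., must be an uncommitted cop. Your own observation $\beta>i$ makes this stronger choice always available, so the repair is immediate, but the invariant as you argue it is broken.

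Second, the quantitative heart of the lemma --- that the depths can be scheduled at all under the hypothesis on $\ell$ --- is exactly what you defer. You correctly note that Lemma~\ref{lem:sequence} with terms $d_j-\delta$ is the wrong bookkeeping (a play can last the full $\delta+d_j$ time-steps, since delaying capture only helps the other cops regroup), and you correctly guess the doubling-plus-$\delta$ recurrence, but you never exhibit the sequence or verify that it fits in $[\delta+1,\ell]$, calling this the ``delicate part.'' That deferred computation is the actual content of the lemma, and the paper settles it explicitly: it sets $r_j=(2^{i-j+1}-1)(t-\ell+1)$, for which the telescoping identity $r_{j-1}-(t-\ell+r_j)=r_j+1$ shows that every previously committed cop remains strictly deeper than $r_j$ throughout play $j$ (hence cannot cross the root, cannot enter the robber's cop-free branch, and cannot make the capture), while the constraint $r_1\leq\ell$ is precisely the hypothesis $\ell\geq(1-2^{-i})t+1-2^{-i}=(1-2^{-i})(t+1)$. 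This sequence satisfies $r_{j-1}=2r_j+(t-\ell)+1$, i.e., your guessed recurrence; note also that the minimal feasible schedule needs only roughly $\ell\geq(1-2^{-i})t+\tfrac{1}{2}$, so your claim that the scheduling requirement is ``equivalent'' to the stated threshold overstates --- the threshold is merely sufficient, which is all the lemma needs. In short: right strategy and right recurrence, but the invariant requires the cop-free branch choice, and the arithmetic you flagged as an obstacle is resolved by the explicit $r_j$ in two lines.
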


\begin{proof}
We show that for any cop strategy with $i$ cops, a robber will win within $i+1$ plays.
We first note that because we have assumed that the root of $G$ has degree $\beta$, removing this root leaves us with $\beta$ subtrees.
We call these $\beta$ subtrees the \emph{branches} of $T$. We will refer to the leaves of a branch, by which we mean the leaves of $T$ that are contained within the branch.

Note that on any play, there exists a branch that does not contain a cop, as there are more branches than cops since $\beta > i$.
On the first robber turn of play $j$, $1 \leq j \leq i$, the incoming robber places himself at a distance of $r_j=(2^{i-j+1}-1) (t-\ell) +(2^{i-j+1}-1)$ from the root, on a branch that does not contain a cop.
We note that this is possible since $r_1\leq (1 -2^{-i})t +  1-2^{-i} \leq \ell$  (with the left inequality holding since  the initial assumption on $\ell$ yields $t-\ell \leq t- (1 -2^{-i})t -  1+2^{-i}$)
and the sequence $(r_j)$ is weakly decreasing.
The distance from the robber to his closest leaf is $\ell - r_j$.
As such, if the robber is not captured within $t - (\ell - r_j)$ time-steps, then the robber moves unimpeded towards a leaf of his branch, finishing at the leaf on time-step $t$, hence, winning against the cops.
Otherwise, the robber stays at his vertex on each robber move.
On play $i+1$, the robber places himself at a leaf of an unoccupied branch, and does not move for each subsequent turn.

We now detail how any cop strategy will fail to capture the robber in at most $t$ time-steps on (or before) the $(i+1)^{th}$ play while the robber follows the above strategy.
At the end of the first play, if the robber has been caught, then one cop (say $C_1$) must have captured the robber, and so must be at a distance $r_1$ from the root.
During the second play, the robber must be captured by a cop (say cop $C_2$) within the first $t - ( \ell - r_2)$ time-steps, or else the robber wins.
This implies that at the end of the second play, cop $C_1$ will be at least distance $ r_1 -(t - \ell + r_2) = (2^{i-1}-1) (t-\ell) + 2^{i-1} = r_2+1$ from the root, meaning that cop $C_1$ is further from the root than $C_2$ at the end of the play.

We can continue this procedure for all plays.
During the $j^{th}$ play, the robber must be captured by cop $C_j$ within the first $t - ( \ell  -r_j)$ time-steps, or else the robber wins.
Hence, at the end of the play, all preceding cops will be at least distance $ r_{j-1} -(t - \ell + r_j)  = (2^{i-j+1}-1) (t-\ell) + 2^{i-j+1} = r_j+1$ from the root, meaning that all preceding cops are further from the root than $C_j$.

At the end of play $i$, cop $C_i$ is at a distance $r_i=t- \ell  +1$ from the root, and all other cops are at least as far from the root as $C_i$.
When the next robber appears on a leaf of an unoccupied branch, the robber is at a distance of at least $ \ell  + (t- \ell +1) = t+1$ from any cop. Thus, the robber wins.
\end{proof}

We have the following corollary.

\begin{corollary} \label{cor:lowerbound}
For all $t,i,\ell \in \mathbb{N^*}$ with $\ell \geq (1 -2^{-i})t +  1-2^{-i}$, if a tree $T$ contains a set $S$ of $i+1$ vertices such that the pairwise distance between vertices in $S$ is at least $2 \ell$, then $c^{\infty}_t (T) >i$.
\end{corollary}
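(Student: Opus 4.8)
The plan is to reuse the robber strategy from the proof of Lemma~\ref{lem:simplerLowerBoundTrees}, with the set $S=\{s_0,\dots,s_i\}$ of $i+1$ pairwise $2\ell$-separated vertices playing the role that the $i+1$ leaves (one per branch) played there. A reduction to the \emph{statement} of the lemma is hopeless: in a path $P_n$ one can choose three vertices pairwise at distance at least $2\ell$, yet no subtree of $P_n$ has a vertex of degree exceeding $2$, so no retract of $P_n$ can satisfy the degree hypothesis of Lemma~\ref{lem:simplerLowerBoundTrees}. Instead I would transplant the proof, using the separation $d(s_a,s_b)\ge 2\ell$ as the substitute for the fact (used implicitly in the lemma) that the path from the root to one leaf meets the path to another leaf only at the root.

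First I would fix the same decreasing schedule $r_j=(2^{\,i-j+1}-1)(t-\ell+1)$ and recall, exactly as in the lemma, that the hypothesis $\ell\ge(1-2^{-i})t+1-2^{-i}$ is equivalent to $r_1\le\ell$, so that every target position used below exists. Next I would record the pigeonhole fact that powers the argument: since $d(s_a,s_b)\ge 2\ell$, a single cop can be within distance $\ell$ of at most one element of $S$; hence with only $i$ cops, on every play there is a \emph{fresh} element $s_{\sigma(j)}$ with no cop within distance $\ell$, and these choices are forced to be distinct across plays $1,\dots,i$ because a cop committed to $s_{\sigma(j)}$ stays within distance $\ell$ of it.

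The heart of the argument is the per-play step. On play $j$ (for $1\le j\le i$) the incoming robber is placed at the vertex at distance $\ell-r_j$ from $s_{\sigma(j)}$, along the path leading from $s_{\sigma(j)}$ towards the rest of $S$, and threatens to retreat to $s_{\sigma(j)}$ if not captured by time $t-(\ell-r_j)$. Copying the deadline computation of the lemma then shows that the capturing cop $C_j$ is left within distance $\ell-r_j$ of $s_{\sigma(j)}$. I would translate the lemma's statement that ``all preceding cops are pushed further from the root'' into ``each previously committed cop is pushed to within distance $\ell-r_j$ of its assigned element'', via the identical recursion $r_{j-1}-(t-\ell+r_j)=r_j+1$; the separation bound guarantees that such a cop lies at distance at least $\ell+r_j$ from every other element of $S$ and so cannot interfere elsewhere. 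The recursion bottoms out at the end of play $i$ with every committed cop within distance $\ell-r_i=2\ell-t-1$ of its element. Finally, on play $i+1$ the robber sits at the one remaining fresh element $s_b$; for each cop $C_j$ we obtain
\[
d(C_j,s_b)\ge d(s_{\sigma(j)},s_b)-d(C_j,s_{\sigma(j)})\ge 2\ell-(2\ell-t-1)=t+1,
\]
so no cop reaches the robber within $t$ steps and $c_t^\infty(T)>i$.

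The main obstacle is precisely the place where the common root is lost. In the lemma, ``distance to the root'' is a single global coordinate, and a cop changing branches must pay this coordinate twice; here each committed cop has its \emph{own} reference vertex $s_{\sigma(j)}$, so the delicate part is to track each committed cop's distance to its own element across the intervening plays and to verify, using only the pairwise bound $d(s_a,s_b)\ge 2\ell$, that the geometric schedule still forces every cop to within $2\ell-t-1$ of its element (equivalently, at distance $>t$ from the final target). Subsidiary care is needed to confirm that ``retreating to $s_{\sigma(j)}$'' is a genuinely safe escape even when $s_{\sigma(j)}$ is an internal vertex of $T$ rather than a leaf, and that the required fresh element always exists on each play.
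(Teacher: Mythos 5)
Your skeleton is the right one — the paper offers no proof at all for this corollary (it is stated as immediate after Lemma~\ref{lem:simplerLowerBoundTrees}), and the intended argument is surely the transplant you describe; your observation that a black-box reduction to the lemma's statement fails (the path example) is correct, and your bookkeeping is sound: with $r_j=(2^{i-j+1}-1)(t-\ell+1)$, a cop captured at distance $\ell-r_j$ from its element can drift at most $\sum_{j''>j}(t-\ell+r_{j''})$ in the remaining plays, ending within $2\ell-t-1-(i-j)$ of its element, whence $d(C_j,s_b)\geq t+1$ by the triangle inequality. But the point you defer as ``subsidiary care'' — that retreating to $s_{\sigma(j)}$ is a safe escape — is the genuine gap, and it does fail as stated. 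In the lemma, safety is structural: the escape region is the subtree hanging below the robber's vertex, which is cop-free at placement and enterable only through the robber's vertex, so no cop can ever be between him and his target. Your freshness condition (no cop within distance $\ell$ of $s_{\sigma(j)}$) gives nothing of the sort: a cop at distance exactly $\ell+1$ from $s_{\sigma(j)}$ on the far side of it (or on a side branch of the retreat path) reaches $s_{\sigma(j)}$ at time $\ell+1\leq t-1$ and simply waits there. The robber is then not captured by the deadline $t-(\ell-r_j)$, yet still loses the play, at a time as late as $t$. This wrecks the recursion: the play's duration inflates from $t-\ell+r_j$ to $t$, handing every previously committed cop an extra travel budget of $\ell-r_j$, which swamps the additive slack $i-j$ in your invariant (already for $i=2$ one blocked play destroys the final bound $d\geq t+1$). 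Note also a secondary slip feeding into this: since the hypothesis is $d(s_a,s_b)\geq 2\ell$ with equality allowed, a cop at a midpoint is within distance $\ell$ of two elements, so your pigeonhole ``each cop blocks at most one element'' needs the radius $\ell-1$ or a separate equality case.

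Repairing this is not routine. The natural fix is to choose the placement vertex $v$ so that the component of $T\setminus\{v\}$ containing the escape target is cop-free (then the lemma's head-start argument applies verbatim), but with $i\geq 2$ cops there are configurations in which no element of $S$ admits such a sealed retreat in any direction — on a path with $S=\{0,2\ell,4\ell\}$ and two cops at the endpoints, every element has cops in divergent directions. In that example the robber instead wins outright because the overstretched cops leave the middle element at distance $2\ell\geq t+1$ from everyone, which suggests the correct proof needs an additional dichotomy (either some element admits a sealed escape of the right depth, or the deployment that ``unseals'' every element leaves some vertex $t$-undominated) that appears in neither the lemma's proof nor your proposal. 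So: correct target strategy, correct arithmetic, correct endgame, but the credibility of the deadline threat — the one ingredient the lemma got for free from its branch structure — is missing, and it is the heart of the matter rather than a side condition. To be fair, the paper asserts the corollary without addressing this either; your write-up has at least localized exactly where the unproved content lies.
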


There is some $\epsilon_i$ with $0 \leq \epsilon_i \leq 1.5$ such that the bounding value in Corollary~\ref{cor:lowerbound} is $(1 -2^{-i})t +  1-2^{-i} = \ell_i+\epsilon_i$. In particular, for a tree $T$ that has $i$ vertices of pairwise distance at least $2 \ell$ and a radius $\ell$ satisfying $\ell_{i-1}+\epsilon_{i-1} \leq \ell$, this lemma implies that $c_t^\infty(T)>i-1$.
Similarly, for those of radius $\ell$ satisfying $\ell \leq \ell_i$, Corollary \ref{cor:treesupper} implies that $c_t^\infty(T)\leq i$.
Combining these two facts we have the following.

\begin{corollary}
For a tree $T$ that has $i$ vertices of pairwise distance at least $2 \ell$ and a radius of $\ell$ satisfying $\ell_{i-1}+\epsilon_{i-1} \leq \ell \leq \ell_i$, we have $c_t^\infty(T)=i$.
\end{corollary}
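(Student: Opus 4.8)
The plan is to sandwich $c_t^\infty(T)$ between $i$ and $i$, since the corollary is exactly the assembly of the upper and lower bounds developed in the preceding discussion. For the upper bound I would use that $T$ has radius $\ell \leq \ell_i$: a single cop placed at a center of $T$ captures any robber within $\ell \leq \ell_i$ time-steps, so $c_{\ell_i}(T) \leq 1$ and hence $T \in \mathcal{G}_i^1$. Applying Theorem~\ref{thm:infinity_from_timed} with $k=1$ (equivalently, specializing Corollary~\ref{cor:treesupper} to the trivial decomposition consisting of $T$ itself, assigned weight $i$) yields $c_t^\infty(T) \leq i\cdot 1 = i$.

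For the lower bound I would invoke Corollary~\ref{cor:lowerbound} with its index parameter lowered from $i$ to $i-1$. Under that substitution the hypothesis becomes a set of $(i-1)+1 = i$ vertices at pairwise distance at least $2\ell$, which is precisely the set of $i$ vertices we are handed, while the distance threshold becomes $\ell \geq (1-2^{-(i-1)})t + 1 - 2^{-(i-1)}$. By the defining relation $(1-2^{-j})t + 1 - 2^{-j} = \ell_j + \epsilon_j$ with $j = i-1$, this threshold is exactly $\ell_{i-1} + \epsilon_{i-1}$, which is the left-hand inequality in our hypothesis on $\ell$. Corollary~\ref{cor:lowerbound} then gives $c_t^\infty(T) > i-1$, and since $c_t^\infty(T)$ is a positive integer this is the same as $c_t^\infty(T) \geq i$. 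Combining the two bounds gives $c_t^\infty(T) = i$.

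There is no substantial obstacle here; the only point requiring care is the index-shift bookkeeping, namely verifying that the hypotheses of Corollary~\ref{cor:lowerbound} line up exactly when its parameter is read as $i-1$ rather than $i$, so that the threshold $\ell_{i-1}+\epsilon_{i-1}$ and the cardinality $i$ of the vertex set match our assumptions. I would also note the boundary case $i=1$: there the lower bound degenerates to the trivial statement $c_t^\infty(T) \geq 1$ (a single vertex vacuously satisfies the pairwise-distance condition, and $\ell_0 + \epsilon_0$ is read as a nonpositive threshold that is automatically met), while the upper bound gives $c_t^\infty(T) \leq 1$, so the equality $c_t^\infty(T)=1$ holds there as well.
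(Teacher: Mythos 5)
Your proposal matches the paper's own argument: the paper likewise obtains the upper bound $c_t^\infty(T)\leq i$ from the radius condition $\ell\leq\ell_i$ via Corollary~\ref{cor:treesupper} (equivalently $T\in\mathcal{G}_i^1$ and Theorem~\ref{thm:infinity_from_timed}), and the lower bound $c_t^\infty(T)>i-1$ by applying Corollary~\ref{cor:lowerbound} with its index read as $i-1$, using $(1-2^{-(i-1)})t+1-2^{-(i-1)}=\ell_{i-1}+\epsilon_{i-1}$. Your explicit check of the index-shift bookkeeping and of the degenerate case $i=1$ is slightly more careful than the paper's one-paragraph treatment, but the route is the same.
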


When a subtree with radius $\ell$ satisfying  $\ell_{i-1}+\epsilon_{i-1} \leq \ell \leq  \ell_i$ appears in a decomposition but does not contain $i$ vertices of pairwise distance at least $2 \ell$, it may be possible to provide a better decomposition of the tree for Corollary~\ref{cor:treesupper}.
As a result, the upper bound given in Corollary~\ref{cor:treesupper} may be close to the best possible value.

\subsection{Products}\label{sec:grids}

In this subsection, we study the game of Cops and Eternal Robbers in the strong product and Cartesian product of graphs.

\subsubsection{Strong Products}
The strong product of two graphs $G_1$ and $G_2$, denoted by $G=G_1  \boxtimes G_2$, has vertex set $V(G) = V(G_1)\times V(G_2)$ and vertices $(u,u')$ and $(v,v')$ are adjacent if:
\begin{enumerate}
\item $u = v$ and $u'$ is adjacent to $v'$; or
\item $u' = v'$ and $u$ is adjacent to $v$; or
\item $u$ is adjacent to $v$ and $u'$ is adjacent to $v'$.
\end{enumerate}

It follows from this definition that if $H_i$ is a retract of $G_i$ with retraction mapping $f_i$ for $i \in [n]$, then $H_1 \boxtimes \cdots \boxtimes H_n$ is a retract of $G_1 \boxtimes \cdots \boxtimes G_n$ with retraction mapping $f_1 \times \cdots \times f_n$, the Cartesian product of the functions.
\begin{lemma}  \label{lm:strong}
Let $i,p,t\in \mathbb{N^*}$ such that $2 \leq i \leq p$. For any set of graphs $H_i$ with $c_t^\infty(H_1) = k$ and $c_t^\infty(H_i) = 1$,
\[c_t^\infty(H_1 \boxtimes \ldots \boxtimes H_p) = k.\]
\end{lemma}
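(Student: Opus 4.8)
The plan is to prove the two inequalities $c_t^\infty(H_1 \boxtimes \cdots \boxtimes H_p) \geq k$ and $c_t^\infty(H_1 \boxtimes \cdots \boxtimes H_p) \leq k$ separately, exploiting the fact that distances in a strong product are governed by the $\ell_\infty$-metric on the factors: for vertices $x=(x_1,\ldots,x_p)$ and $y=(y_1,\ldots,y_p)$ one has $d(x,y)=\max_j d_{H_j}(x_j,y_j)$, and, equivalently, a single move in the product is exactly an independent (possibly trivial) move in each coordinate. Write $G=H_1\boxtimes\cdots\boxtimes H_p$ throughout.

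For the lower bound I would first record that $H_1$ is a retract of $G$: fixing basepoints $o_j\in V(H_j)$ for $j\geq 2$, the map $f(x_1,\ldots,x_p)=(x_1,o_2,\ldots,o_p)$ is a homomorphism fixing the copy $H_1\times\{o_2\}\times\cdots\times\{o_p\}\cong H_1$. I would then prove the general principle that $c_t^\infty$ does not increase under retraction: if $H$ is a retract of $G'$ with retraction $f$, then $c_t^\infty(H)\leq c_t^\infty(G')$. This follows by the usual shadow simulation: a robber confined to $H$ is also a robber on $G'$, the cops run their $G'$-strategy against it and project every imagined $G'$-move through $f$ (legal since $f$ is a homomorphism and graphs are reflexive), and, because $f$ fixes $H$, a capture in $G'$ projects to a capture in $H$ in the same number of steps, play after play. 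Applying this with $H=H_1$ and $G'=G$ gives $k=c_t^\infty(H_1)\leq c_t^\infty(G)$.

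For the upper bound I would exhibit a strategy for $k$ cops that decouples the coordinates: in the first coordinate the $k$ cops run an optimal eternal $k$-cop strategy on $H_1$ against the shadow $r_1$, while in each coordinate $j\geq 2$ all $k$ cops move in lockstep and run an optimal single-cop strategy on $H_j$ against $r_j$. The crucial enabling fact, which I would prove first, is that $c_t^\infty(H_j)=1$ implies that \emph{from every vertex} a single cop captures any robber on $H_j$ in at most $t$ steps: by having a robber sit still at a chosen vertex $v$ over successive plays, the robber forces the cop to occupy $v$ at the start of some play, so $v$ is a legitimate starting position of the winning eternal strategy, and hence the cop wins from $v$ in at most $t$ steps. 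This removes all bookkeeping in coordinates $j\geq 2$. Since moves are coordinate-independent, a cop can simultaneously \emph{hold} a matched coordinate (by copying the robber's move there) and pursue others. Thus the designated cop $c^*$ that first matches coordinate $1$ (at time $\tau_1\leq t$) holds it thereafter, while all cops — in particular $c^*$ — match each coordinate $j\geq 2$ by time $\tau_j\leq t$ and hold it; at time $\max_j\tau_j\leq t$ the cop $c^*$ coincides with the robber in every coordinate and the robber is captured.

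The step I expect to be the main obstacle is the \emph{eternal} part: guaranteeing that after a play the cops sit in a configuration from which the next robber can again be captured in at most $t$ steps. In coordinates $j\geq 2$ this is automatic from the from-every-vertex property. The delicate coordinate is the first one, since $c_t^\infty(H_1)=k$ does \emph{not} give a from-every-configuration guarantee (piling all cops on one vertex can be fatal), yet the holding manoeuvre used for synchronization moves $c^*$ off the trajectory prescribed by the honest $H_1$-strategy, so the coordinate-$1$ configuration reached at the end of a play need not be one from which that strategy is prepared to restart. The resolution I would pursue is to keep coordinate $1$ honest — letting the $k$-cop $H_1$-strategy always finish each of its captures in a reusable configuration — and to force the product capture to occur exactly at such an honest coordinate-$1$ capture, using the total freedom in coordinates $j\geq 2$ (catch early and hold) so that those coordinates are already matched when coordinate $1$ is matched. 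The one scenario that resists this, namely coordinate $1$ being matched strictly before the slower coordinates, is exactly where the real work lies: I would resolve it either by arguing that once the $k$ cops control the shadow in $H_1$ they can delay capture to the common deadline while remaining in a reusable configuration, or by showing directly that the specific coordinate-$1$ configurations produced by the holding manoeuvre still admit a recapture within $t$ steps. Finally, I would handle $p>2$ by induction, applying the two-factor case to $H_1$ and $H_2\boxtimes\cdots\boxtimes H_p$, the latter having eternal cop number $1$ by the $k=1$ instance of the lemma.
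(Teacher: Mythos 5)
Your route is the paper's route: both bounds are obtained from coordinatewise shadow games, the general case is reduced to $p=2$ by induction exactly as in the paper, and your key auxiliary fact --- that $c_t^\infty(H_j)=1$ self-improves to ``a single cop captures within $t$ steps \emph{from every vertex}'' --- is precisely what the paper invokes implicitly when it asserts that ``any cop on $H_2$ is able to capture the robber in time $t$''; your sit-at-$v$ argument is a genuine proof of that assertion, which the paper omits. Your lower bound, phrased as monotonicity of $c_t^\infty$ under retraction applied to the basepoint projection onto $H_1$, is the same projection idea as the paper's direct argument, in which the robber plays so that his $H_1$-shadow evades $k-1$ shadow cops; note that even your retract simulation needs a word about plays ending early (a projected cop can land on the robber before the virtual cop does), a desynchronization issue of the same flavour as your main worry.

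That main worry is where the proposal stops being a proof, and you say so yourself: after the holding manoeuvre, the cop $c^*$ that mirrors the robber in coordinate $1$ ends the play at the robber's \emph{final} coordinate-$1$ vertex $w$, not at the vertex $v$ where the eternal $k$-cop strategy on $H_1$ recorded its capture, so the coordinate-$1$ configuration at the start of the next play differs from the strategy's bookkeeping configuration by a displacement of up to $t-\tau_1$; nothing in the hypothesis $c_t^\infty(H_1)=k$ says such a drifted configuration is still winning, and neither of your two suggested repairs is automatic (an eternal strategy cannot in general ``delay'' a capture while keeping the robber controlled, since a cop adjacent to the robber cannot maintain threat without capturing, and a winning configuration with one cop displaced need not be winning --- indeed the paper spends Theorem~\ref{thm:infinity_from_timed} and a factor of $i$ extra cops on exactly this kind of regrouping problem elsewhere). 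You should know, however, that the paper's own proof does not close this step either: it ends at ``the cop has captured the robbers on both shadow games, and so has captured the robber,'' without addressing that the two shadow captures occur at different times, that holding the earlier-captured coordinate drags the shadow off the strategy's trajectory, or why the eternal $H_1$-strategy can restart afterwards. When $k=1$ your from-every-vertex lemma applies to $H_1$ as well and the difficulty vanishes --- and every downstream use of the lemma in the paper (the corollary $c_t^\infty(P_{t+1}\boxtimes\cdots\boxtimes P_{t+1})=1$ and Lemma~\ref{lem:strongGrid}) is of this $k=1$ form --- but for $k\geq 2$ you have isolated a genuine gap rather than merely missed an idea the paper supplies; completing your write-up (or the paper's) would require an actual argument that the configurations produced by holding remain good for the eternal $H_1$-strategy.
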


\begin{proof}
We prove this in the case that $p=2$, and the general case follows by induction.
We begin by showing that $c_t^\infty(H_1 \boxtimes  H_2) \leq k$.
For the game of Cops and Eternal Robbers on $H_1 \boxtimes H_2$, there are two \emph{shadow games} played on $H_1$ and $H_2$, defined by the first and second projection map, respectively.
We play with $k$ cops on $H_1 \boxtimes H_2$, and hence, on both shadow games as well.
We know that this is enough cops to have a winning strategy on both shadow games.
A cop at position $(x_1,x_2)$ on $H_1 \boxtimes H_2$ observes what move the corresponding cops on the shadow games would take using their winning strategy. If the corresponding cop of $H_1$ would move from $x_1$ to $y_1$, and  the corresponding cop of $H_2$ would move from $x_2$ to $y_2$, then the cop moves from $(x_1,x_2)$ to $(y_1,y_2)$.
Within $t$ time-steps, there must be some cop such that the corresponding cop on $H_1$ has captured the robber's shadow, as we know that $k$ cops are sufficient for the robber to be caught in $H_1$ in time $t$. This cop's corresponding shadow on $H_2$ was also able to capture the robber's shadow in $H_2$, as any cop on $H_2$ is able to capture the robber in time $t$. This means that the cop has captured the robbers on both shadow games, and so has captured the robber on $H_1 \boxtimes H_2$.

To show that $c_t^\infty(H_1 \boxtimes H_2) \geq k$, suppose
a game on $H_1 \boxtimes H_2$ is played with $k-1$ cops.
The robber plays such that his shadow on $H_1$ avoids capture, which he can do since $c_t^\infty(H_1)=k$. Since his shadow avoids capture, he avoids capture on $H_1 \boxtimes H_2$ as well.
\end{proof}

As a particular example of interest, we consider $ P_{n_1} \boxtimes \ldots \boxtimes P_{n_p}$, which can be seen as a $p$-dimensional generalization of a king's graph. Each such graph with sufficiently large $n_i$ contains a subgraph of the form $P_{t+1} \boxtimes \ldots \boxtimes P_{t+1}$.
Lemma~\ref{lm:strong} shows this subgraph has the property that it requires just one cop to capture the robber in every play.
In fact it is the largest subgraph with this property, as any subgraph $S$ of $ P_{n_1} \boxtimes \ldots \boxtimes P_{n_p}$ with $c_t^\infty(P)=1$  must be a subgraph of $ P_{t+1} \boxtimes \ldots \boxtimes P_{t+1}$.

\begin{corollary}
For all $t\in \mathbb{N^*}$, we have that \[c_t^\infty( P_{t+1} \boxtimes \ldots \boxtimes P_{t+1}) =1. \]
\end{corollary}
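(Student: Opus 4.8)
The plan is to reduce the statement to the single-factor case and then invoke Lemma~\ref{lm:strong}. The lower bound is immediate: any graph on at least one vertex satisfies $c_t^\infty(G) \geq c(G) \geq 1$, so it suffices to establish the matching upper bound $c_t^\infty(P_{t+1} \boxtimes \cdots \boxtimes P_{t+1}) \leq 1$.

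First I would compute the eternal cop number of a single factor. By Theorem~\ref{thm:paths}, for every $t \in \mathbb{N}^*$ we have $c_t^\infty(P_n) = \lceil n/(t+1) \rceil$. Taking $n = t+1$ gives $c_t^\infty(P_{t+1}) = \lceil (t+1)/(t+1) \rceil = 1$. Thus each factor $P_{t+1}$ is a graph on which a single cop captures the robber within $t$ time-steps in every play.

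Next I would apply Lemma~\ref{lm:strong} with $H_1 = \cdots = H_p = P_{t+1}$ and $k = 1$. The hypotheses require $c_t^\infty(H_1) = k$ together with $c_t^\infty(H_i) = 1$ for $2 \leq i \leq p$; with $k = 1$ both of these reduce to the equality $c_t^\infty(P_{t+1}) = 1$ established in the previous step. The lemma then yields $c_t^\infty(P_{t+1} \boxtimes \cdots \boxtimes P_{t+1}) = 1$, delivering the upper and lower bounds simultaneously.

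Since every step is a direct substitution into an already-proved theorem and lemma, there is no genuine obstacle: the only points to verify are that the edge case $k = 1$ is permitted in Lemma~\ref{lm:strong} (the lemma is stated for general $k \in \mathbb{N}^*$, so $k=1$ is allowed) and that the ceiling evaluates to exactly $1$. All of the substantive content of the corollary resides in Lemma~\ref{lm:strong}, whose shadow-strategy argument on the projection maps carries out the real work.
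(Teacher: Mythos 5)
Your proposal is correct and follows essentially the same route as the paper, which states this corollary as an immediate consequence of Lemma~\ref{lm:strong}: the paper likewise relies on $c_t^\infty(P_{t+1}) = \lceil (t+1)/(t+1) \rceil = 1$ from Theorem~\ref{thm:paths} and then applies the shadow-strategy lemma with $k=1$. Your explicit verification of the $k=1$ edge case and the trivial lower bound $c_t^\infty \geq 1$ is sound and fills in exactly the substitutions the paper leaves implicit.
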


Since each $P_{n_i}$ decomposes into $\left\lceil \frac{n_i}{t+1} \right\rceil$ copies of $P_{t+1}$ for each $i \in [n]$, we can decompose the entire $p$-dimensional generalization of a king's graph into retracts. We then have the following, which follows from Lemma~\ref{lem:disjointRetracts}.

\begin{lemma} \label{lem:strongGrid}
For all $n_1,n_2,\ldots, n_p,t\in \mathbb{N^*}$, we have that \[c_t^\infty( P_{n_1} \boxtimes \ldots \boxtimes P_{n_p}) \leq \prod_{i=1}^p \left\lceil \frac{n_i}{t+1} \right\rceil. \]
\end{lemma}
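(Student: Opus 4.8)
The plan is to combine the decomposition of each path factor into $P_{t+1}$-segments with the retract structure of strong products and the disjoint-retract bound from Lemma~\ref{lem:disjointRetracts}. The key starting observation, already noted in the excerpt, is that each path $P_{n_i}$ decomposes into $\left\lceil \frac{n_i}{t+1} \right\rceil$ copies of $P_{t+1}$ (together possibly with a shorter terminal segment, which is a subgraph of $P_{t+1}$ and hence handled by the same one-cop bound). Writing $P_{n_i} = \bigcup_{j} Q_{i,j}$ where each $Q_{i,j}$ is (a subgraph of) $P_{t+1}$, the strong product distributes over these unions in the sense that $P_{n_1} \boxtimes \cdots \boxtimes P_{n_p}$ decomposes as the disjoint union, over all choices of segment indices $(j_1,\ldots,j_p)$, of the blocks $Q_{1,j_1} \boxtimes \cdots \boxtimes Q_{p,j_p}$.

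First I would make this decomposition precise as a genuine retract decomposition. The remark immediately before the statement of the corollary (that a product of retracts is a retract of the product of the factors, with retraction map the Cartesian product of the factor maps) is exactly the tool needed: since each $P_{t+1}$-segment $Q_{i,j}$ is an isometric path and therefore a retract of $P_{n_i}$, the block $Q_{1,j_1} \boxtimes \cdots \boxtimes Q_{p,j_p}$ is a retract of $P_{n_1} \boxtimes \cdots \boxtimes P_{n_p}$. The vertex sets of distinct blocks are disjoint because the segment partitions of each factor are vertex-disjoint, so the blocks form a decomposition into retracts of the type required by Lemma~\ref{lem:disjointRetracts}. There are exactly $\prod_{i=1}^p \left\lceil \frac{n_i}{t+1} \right\rceil$ such blocks.

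Next I would bound the eternal cop number of a single block. Each block is of the form $Q_{1} \boxtimes \cdots \boxtimes Q_{p}$, where each $Q_i$ is a subgraph of $P_{t+1}$ and in particular satisfies $c_t^\infty(Q_i)=1$ (this is the content of the corollary $c_t^\infty(P_{t+1} \boxtimes \cdots \boxtimes P_{t+1})=1$ applied in one dimension, or equivalently Theorem~\ref{thm:paths} with $n \le t+1$). Applying Lemma~\ref{lm:strong} with $k=1$, every block therefore has $c_t^\infty = 1$. Summing the per-block bound of $1$ over all blocks via Lemma~\ref{lem:disjointRetracts} yields
\[
c_t^\infty(P_{n_1} \boxtimes \cdots \boxtimes P_{n_p}) \le \prod_{i=1}^p \left\lceil \frac{n_i}{t+1} \right\rceil,
\]
which is the desired inequality.

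The main obstacle I anticipate is purely bookkeeping rather than conceptual: verifying that the block decomposition is honestly a \emph{disjoint} union of retracts as required by Lemma~\ref{lem:disjointRetracts}, and in particular confirming that the edges of the strong product respect the block partition (i.e.\ that the products of the factor retractions restrict correctly to each block and that no cross-block edges spoil the retract property). Since the segments partition the vertex set of each factor path, the vertex-disjointness is immediate, and the retract claim is exactly the product-of-retracts observation quoted before the statement; so the ``hard'' part reduces to carefully invoking Lemma~\ref{lm:strong} on each block and checking that the shorter terminal segments are covered by the $c_t^\infty=1$ bound for subgraphs of $P_{t+1}$. Everything else is an arithmetic count of the number of blocks.
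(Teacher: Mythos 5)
Your proposal is correct and takes essentially the same route as the paper: the paper likewise decomposes each factor $P_{n_i}$ into $\left\lceil \frac{n_i}{t+1} \right\rceil$ segments of at most $t+1$ vertices, notes that the resulting product blocks are retracts (via the product-of-retracts observation) with $c_t^\infty = 1$ by Lemma~\ref{lm:strong} and Theorem~\ref{thm:paths}, and concludes by Lemma~\ref{lem:disjointRetracts}. Your extra bookkeeping about the shorter terminal segments and the vertex-disjointness of the blocks only makes explicit what the paper leaves implicit in its one-sentence justification.
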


It is not obvious if the bound in Lemma~\ref{lem:strongGrid} is tight or not, although we conjecture it is. However, we derive the asymptotic order in the following.

\begin{theorem}
For all $n_1,n_2,\ldots, n_p,t\in \mathbb{N^*}$, we have that
\[ c_t^\infty( P_{n_1} \boxtimes \ldots \boxtimes P_{n_p}) = \Theta\left(\prod_{i=1}^p \left(\frac{n_i}{t}\right)\right),\]
or more precisely,
\[ \prod_{i=1}^p \left(\frac{n_i}{2t+1}\right) \leq c_t^\infty( P_{n_1} \boxtimes \ldots \boxtimes P_{n_p}) \leq  \prod_{i=1}^p \left\lceil \frac{n_i}{t+1} \right\rceil. \]
\end{theorem}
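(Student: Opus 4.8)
The theorem asserts both an upper bound and a lower bound on $c_t^\infty(P_{n_1} \boxtimes \cdots \boxtimes P_{n_p})$, and together these establish the $\Theta$ estimate. The upper bound is already available: it is precisely the content of Lemma~\ref{lem:strongGrid}, which gives $c_t^\infty(P_{n_1} \boxtimes \cdots \boxtimes P_{n_p}) \leq \prod_{i=1}^p \lceil n_i/(t+1)\rceil$. Since $\lceil n_i/(t+1)\rceil = O(n_i/t)$, this immediately supplies the upper half of the $\Theta$ claim. So the real work is entirely in the lower bound $\prod_{i=1}^p (n_i/(2t+1)) \leq c_t^\infty$.

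The strategy I would use for the lower bound is a \emph{distance-$t$-domination} argument, generalizing the idea used in the cycle lower bounds earlier in the paper. First I would observe that if a robber appears at a vertex $v$ that is at distance greater than $t$ from every cop, then the cops cannot capture him in $t$ time-steps (he simply passes). Therefore, at the start of each play, the set of cop positions must distance-$t$-dominate the graph $P_{n_1} \boxtimes \cdots \boxtimes P_{n_p}$; otherwise the incoming robber wins immediately by sitting at an undominated vertex. The key geometric fact is that in the strong product of paths, the distance between $(u_1,\ldots,u_p)$ and $(v_1,\ldots,v_p)$ is the $\ell_\infty$ (Chebyshev) distance $\max_i |u_i - v_i|$, since a single step in the strong product can change every coordinate by one simultaneously. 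Hence a ball of radius $t$ around a cop is exactly a combinatorial box of side at most $2t+1$ in each dimension, containing at most $(2t+1)^p$ vertices.

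The counting then proceeds as follows. The total number of vertices is $\prod_{i=1}^p n_i$. Each cop distance-$t$-dominates at most $\prod_{i=1}^p (2t+1)$ vertices (the box may be clipped at the boundary, but this only decreases the count). To cover all $\prod_{i=1}^p n_i$ vertices, the number of cops must be at least
\[
\frac{\prod_{i=1}^p n_i}{(2t+1)^p} = \prod_{i=1}^p \frac{n_i}{2t+1},
\]
which is the desired lower bound. I would present this as: the robber's first-play strategy forces the cops to distance-$t$-dominate, and since each cop covers a box of volume at most $(2t+1)^p$, fewer than $\prod_i n_i/(2t+1)$ cops cannot dominate, so the robber finds an undominated vertex and wins.

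The main obstacle, and the step warranting the most care, is verifying the Chebyshev-distance formula for strong products of paths and confirming that each radius-$t$ ball really contains at most $(2t+1)^p$ vertices; this is the crux that converts the domination requirement into the clean product bound. A secondary subtlety is that distance-$t$-domination is genuinely \emph{necessary} at the start of every play (not merely sufficient for capture), which follows because the robber can choose his starting vertex freely among unoccupied vertices and then pass indefinitely. Once these two facts are in place, the volume count is routine, and combining it with Lemma~\ref{lem:strongGrid} completes both inequalities and hence the $\Theta$ statement.
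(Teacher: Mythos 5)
Your proposal is correct and follows essentially the same route as the paper: the upper bound is quoted directly from Lemma~\ref{lem:strongGrid}, and the lower bound is the same counting argument, namely that the cops must distance-$t$-dominate the graph at the start of each play (else a robber appears at an undominated vertex and passes) while each cop's radius-$t$ ball in the strong product is a Chebyshev box of at most $(2t+1)^p$ vertices. The only difference is cosmetic: you make explicit the $\ell_\infty$-metric fact that the paper leaves implicit in the phrase ``any cop can reach at most $(2t+1)^p$ unique vertices.''
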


\begin{proof}
We will show that $c_t^\infty( P_{n_1} \boxtimes \ldots \boxtimes P_{n_p}) \geq \prod_{i=1}^p \frac{n_i}{2t+1}$. Along with Lemma \ref{lem:strongGrid}, this will complete the proof.
The graph contains $\prod_{i=1}^p n_i$ vertices.
Every vertex must be reachable by at least one cop within $t$ time-steps.
Any cop can reach at most $(2t+1)^p$ unique vertices in the graph.
We therefore require at least $(\prod_{i=1}^p n_i) / ((2t+1)^p)$ cops.
This completes the proof.
\end{proof}

\subsubsection{Cartesian Products}

In this subsection, we will give upper and lower bounds of the same asymptotic order on the number of cops needed for the Cartesian product of two paths. The Cartesian product of two graphs $G_1$ and $G_2$, denoted by $G=G_1 \square G_2$,
has vertex set $V(G) = V(G_1) \times V(G_2)$ and vertices $(u,u')$ and $(v,v')$ are adjacent if:
\begin{enumerate}
\item $u = v$ and $u'$ is adjacent to $v'$; or
\item $u' = v'$ and $u$ is adjacent to $v$.
\end{enumerate}
We define an $m \times n$ (Cartesian) \emph{grid} to be $P_m \square P_n$, where we assume $m,n \geq 2$. We may view vertices of the grids via Cartesian coordinates $\{ (i,j): 1 \le i \le m, 1 \le j \le n \}.$ Note that it is known that $c(P_m \square P_n)=2$; see \cite{BN}.

We derive an upper bound on the eternal cop number of grids by decomposing them into small subgrids, which are retracts.

\begin{lemma}\label{lem:cartsubgrid}
For all $t\in \mathbb{N^*}$ and all integers $i\geq 1$, we have that $c_t^\infty(P_{\ell_i+1} \square P_{\ell_i+1}) \leq 2i$.
\end{lemma}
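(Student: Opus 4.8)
The plan is to place the whole burden on Theorem~\ref{thm:infinity_from_timed} by showing that the square grid belongs to $\mathcal{G}_i^2$. Writing $s = \ell_i + 1$, so that the grid in question is $P_s \square P_s$, I would prove the single-play statement
\[
c_{\ell_i}(P_{\ell_i+1} \square P_{\ell_i+1}) \leq 2,
\]
i.e.\ that two cops can capture the robber in at most $\ell_i = s-1$ time-steps on the $s \times s$ grid. By definition this says exactly that $P_{\ell_i+1}\square P_{\ell_i+1} \in \mathcal{G}_i^2$, and then Theorem~\ref{thm:infinity_from_timed} applied with $k=2$ yields $c_t^\infty(P_{\ell_i+1}\square P_{\ell_i+1}) \leq i \cdot 2 = 2i$, which is the claim. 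Note that the dimensions are chosen precisely so that this is tight: graph distance in $P_s \square P_s$ equals Manhattan distance, so the grid has diameter $2(s-1)$ but \emph{radius} $s-1 = \ell_i$ measured from its center, and a two-cop capture must be completed within exactly this radius.

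For the single-play capture strategy I would start both cops at (a) center vertex of the grid and run a coordinated shadow argument. A clean way to organize it is to pass to the rotated coordinates $u = x+y$ and $v = x-y$: every grid move changes $x$ or $y$ by one, hence changes \emph{both} $u$ and $v$ by $\pm 1$ simultaneously, while a pass changes neither. In these coordinates the two path-projections behave as in the strong-product shadow strategy of Lemma~\ref{lm:strong}, because a single (diagonal-looking) cop move can decrease both shadow-gaps $|u_C - u_R|$ and $|v_C - v_R|$ at once. One cop is assigned to drive the robber's $u$-shadow to an extreme value and the other the $v$-shadow, cornering the robber; starting at the center makes each shadow-gap at most $s-1$, matching the required time budget. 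Equivalently, and perhaps more transparently in the original coordinates, one cop maintains the robber's column while closing the row gap and the other maintains the robber's row while closing the column gap, and one tracks the monovariant $D = |y_{C_1}-y_R| + |x_{C_2}-x_R|$, which is non-increasing and strictly decreases once the robber is pushed against a wall.

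The main obstacle is proving the \emph{tight} time bound $s-1$ rather than mere finiteness. The difficulty is exactly the asymmetry between the two products: a Cartesian move alters only one coordinate (in rotated coordinates, it can never change $u$ or $v$ \emph{alone}), so when one shadow-gap reaches zero before the other, a single cop becomes temporarily stuck and cannot reduce the remaining gap without disturbing the matched one. This is precisely what necessitates the second cop, and it is where the potential/monovariant estimate must be argued carefully so that the total does not exceed $s-1$. I would also handle the parity of $s$ (the center is a single vertex only for $s$ odd) by invoking monotonicity: since $P_a \square P_b$ is a retract of $P_{a'} \square P_{b'}$ whenever $a \le a'$ and $b \le b'$, a winning two-cop strategy on a slightly larger convenient grid transfers downward, letting me reduce to the most favorable value of $s$ before running the potential argument.
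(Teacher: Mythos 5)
Your overall skeleton is exactly the paper's: show $P_{\ell_i+1} \square P_{\ell_i+1} \in \mathcal{G}_i^2$ by proving the single-play bound $c_{\ell_i}(P_{\ell_i+1} \square P_{\ell_i+1}) \leq 2$, then apply Theorem~\ref{thm:infinity_from_timed} with $k=2$ to get $2i$. The difference is how that single-play bound is established. The paper simply imports it from Mehrabian's capture-time theorem for grids~\cite{M11}, which with $m=n=\ell_i+1$ gives $\mathrm{capt}(P_{\ell_i+1}\square P_{\ell_i+1})=\ell_i$, i.e.\ exactly the tight budget. You instead attempt to re-derive this capture-time result from scratch, and this is where the genuine gap lies: your rotated-coordinates/potential sketch never actually completes the estimate. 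You yourself flag the crux --- ``the potential/monovariant estimate must be argued carefully so that the total does not exceed $s-1$'' --- and leave it unresolved. The difficulty is real, not cosmetic: in the coordinates $u=x+y$, $v=x-y$, every grid move changes \emph{both} $u$ and $v$ by $\pm 1$, so a cop who has matched the robber's $u$-coordinate can only close the $v$-gap when the robber actually moves; a robber who interleaves passes can stall that cop indefinitely, and since the time budget $\ell_i$ is hard, ``the cops eventually win'' is not enough. Your monovariant $D=|y_{C_1}-y_R|+|x_{C_2}-x_R|$ is merely non-increasing against a passing robber, so it does not yield the required strict decrease per time-step. Establishing the exact bound $\lfloor (m+n)/2 \rfloor$-type capture time against stalling is precisely the content of~\cite{M11}, a standalone result, and your proposal reproduces neither its strategy nor its timing analysis.

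There is also a concrete secondary flaw in your parity fix. To handle even $s$ (no central vertex) you propose embedding into a ``slightly larger convenient grid'' and transferring the strategy downward via retracts. Retract monotonicity does give $c_{\ell}(P_s \square P_s) \leq c_{\ell}(P_{s'} \square P_{s'})$ for $s \leq s'$, but the capture time of the larger grid is $s'-1 > s-1$, so a two-cop strategy on $P_{s+1}\square P_{s+1}$ only guarantees capture within $s$ time-steps --- one more than your budget $\ell_i = s-1$. The tight bound does not survive the upward embedding, so this reduction fails exactly in the case it was introduced to handle. The fix is either to carry out Mehrabian's strategy for both parities directly, or, as the paper does, to cite~\cite{M11} and let the known formula do the work.
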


\begin{proof}
In~\cite{M11}, it was proven that $\mathrm{capt}(P_m \square P_n)=\left \lfloor \frac{m+n}{2} \right \rfloor$. Thus, $c_{\ell_i}(P_{\ell_i+1} \square P_{\ell_i+1})=2$, since $\mathrm{capt}(P_{\ell_i+1} \square P_{\ell_i+1})=\ell_i$. We therefore have that $P_{\ell_i+1} \square P_{\ell_i+1}\in \mathcal{G}_i^2$ and by Theorem~\ref{thm:infinity_from_timed}, the proof follows.
\end{proof}

We now prove that any subgrid of a grid is indeed a retract.

\begin{lemma}\label{lem:gridretract}
For all $a,b,n,m\in \mathbb{N^*}$ such that $m,n>2$, $a\leq m$, and $b\leq n$, we have that $P_a \square P_b$ is a retract of $P_m \square P_n$.
\end{lemma}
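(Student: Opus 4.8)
The plan is to exhibit an explicit retraction of $P_m \square P_n$ onto $P_a \square P_b$ as the coordinate-wise product of two path-folding retractions, mirroring the analogous fact for strong products noted earlier in this subsection. The whole statement reduces to two facts: that a subpath is a retract of a path, and that the Cartesian product of retractions is a retraction of the Cartesian product.

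First I would recall why a subpath is a retract of a path. Label the vertices of $P_m$ by $1, \ldots, m$ and take the copy of $P_a$ on the vertices $1, \ldots, a$. Define $r_1 \colon P_m \to P_a$ by $r_1(x) = x$ if $x \le a$ and $r_1(x) = a$ if $x > a$. Then $r_1$ fixes every vertex of $P_a$, and it is a (reflexive) graph homomorphism: an edge $\{x, x+1\}$ of $P_m$ is sent to $\{x, x+1\}$ when $x+1 \le a$ and to the loop at $a$ otherwise, so adjacency—allowing equality, since all graphs here are reflexive—is preserved. Hence $P_a$ is a retract of $P_m$. Identically, labelling $P_n$ by $1, \ldots, n$ and taking $P_b$ on $1, \ldots, b$, the map $r_2 \colon P_n \to P_b$ with $r_2(y)=y$ for $y \le b$ and $r_2(y) = b$ for $y > b$ is a retraction of $P_n$ onto $P_b$.

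Next I would verify that the product map is a retraction of the grid. Define $r \colon P_m \square P_n \to P_a \square P_b$ by $r(x,y) = (r_1(x), r_2(y))$. Its image is exactly $V(P_a \square P_b) = \{1, \ldots, a\} \times \{1, \ldots, b\}$, and $r$ fixes each such vertex since $r_1, r_2$ fix their respective subpaths. It remains to check that $r$ is a homomorphism, and this is the one step requiring attention: one must treat the two adjacency cases of the Cartesian product separately. Suppose $(x,y)$ and $(x', y')$ are adjacent in $P_m \square P_n$. If $x = x'$ and $y \sim y'$ in $P_n$, then $r_1(x) = r_1(x')$ while $r_2(y)$ and $r_2(y')$ are adjacent or equal in $P_b$, so $r(x,y)$ and $r(x',y')$ agree in the first coordinate and are adjacent or equal in the second, hence adjacent (or equal) in $P_a \square P_b$. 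The case $y = y'$, $x \sim x'$ is symmetric. Thus $r$ is a retraction, and $P_a \square P_b$ is a retract of $P_m \square P_n$.

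I expect no genuine obstacle here; the argument is essentially routine. The only delicate point is the homomorphism check for the product map, where reflexivity is used so that each of the two cases of Cartesian adjacency collapses correctly onto an edge or a loop. The hypotheses $a \le m$ and $b \le n$ are precisely what make the folding maps $r_1, r_2$ well-defined retractions, while $m, n > 2$ only keeps us in the nondegenerate grid setting of this section and plays no role in the construction.
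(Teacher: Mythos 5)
Your proof is correct, but it takes a genuinely different route from the paper's. You construct the retraction in one shot as a coordinate-wise product $r(x,y)=(r_1(x),r_2(y))$ of two path-folding maps $r_i = \min(\cdot,a)$, resp.\ $\min(\cdot,b)$, whereas the paper peels off one boundary column at a time: it exhibits a retraction of $P_m \square P_n$ onto $P_{m-1}\square P_n$ that sends $(m,y)$ \emph{diagonally} to $(m-1,y+1)$ (and $(m,n)$ to $(m-1,n-1)$), then invokes symmetry and the fact that a composition of retractions is a retraction. The key difference is where reflexivity enters: your clamping map collapses the edges $\{(x,y),(x+1,y)\}$ with $x\geq a$ to loops, so it is a homomorphism only because the paper declares all graphs reflexive --- a point you correctly identify and check explicitly (and which is exactly why the naive ``product of retractions'' transfer from the strong product needs care for the Cartesian product). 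The paper's diagonal fold, by contrast, maps every edge to a genuine edge, so it would remain valid even in the irreflexive category; the price is the iteration/composition machinery and a mild nondegeneracy requirement ($n\geq 2$ for the diagonal shift to land in the grid), which your map does not need. Your observation that $m,n>2$ plays no role in your construction is also accurate. In the reflexive setting adopted by this paper, your argument is a clean, and arguably simpler, alternative proof.
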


\begin{proof}
We first prove that $H=P_{m-1} \square P_n$ is a retract of $G=P_m \square P_n$. The retraction $f:G\rightarrow H$ is defined as follows:

\begin{enumerate}
\item if $(x,y) \in V(H)$, then $f((x,y))=(x,y)$;
\item if $y<n$, then $f((m,y))=(m-1,y+1)$;
\item and $f((m,n))=(m-1,n-1)$.
\end{enumerate}

Note that, symmetrically, this also proves that $H=P_m \square P_{n-1}$ is a retract of $G=P_m \square P_n$. Observe that if $f:G_1\rightarrow G_2$ and $g:G_2\rightarrow G_3$ are retractions, then there exists a retraction $h:G_1\rightarrow G_3$. The proof then follows from applying successive retractions.
\end{proof}

We can now give bounds on the eternal cop number of grids, where both the lower and upper bounds are of the same asymptotic order. In what follows, we consider that both $m$ and $n$ are much larger than $t$.

\begin{theorem}\label{gridd}
For all $m,n,t\in \mathbb{N^*}$ such that $m,n\geq 2$, $t^2=o(m)$, and $t^2=o(n)$, we have that
\[ c_t^\infty (P_m \square P_n)=\Theta\left(\frac{mn}{t^2}\right), \]
\text{or more precisely,}
\[ \frac{mn}{2t^2+2t+1} \leq c_t^\infty (P_m \square P_n) \leq \frac{64mn}{9t^2+12t+4}.\]
\end{theorem}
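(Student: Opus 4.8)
The plan is to establish the two inequalities separately, and each will follow from a result already developed in the excerpt combined with a clean counting or decomposition argument.

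For the lower bound, I would use exactly the ``reachability'' argument that appears in the strong-product theorem. The graph $P_m \square P_n$ has $mn$ vertices, and in every play each vertex must be within distance $t$ of some cop at the start of the play (otherwise the robber sits there and passes, winning as in the lower bounds for paths and cycles). In the Cartesian grid the ball of radius $t$ around a vertex is a diamond (an $\ell_1$-ball), which contains at most $2t^2+2t+1$ vertices. Hence the cops must $t$-distance-dominate the grid, and a single cop can cover at most $2t^2+2t+1$ vertices, giving $c_t^\infty(P_m \square P_n)\geq \frac{mn}{2t^2+2t+1}$. This is the easy direction.

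For the upper bound, the strategy is to decompose the grid into small subgrids, apply Lemma~\ref{lem:cartsubgrid} to each, and then invoke Lemma~\ref{lem:gridretract} together with the disjoint-retract bound Lemma~\ref{lem:disjointRetracts}. Concretely, I would tile $P_m \square P_n$ by translated copies of $P_{\ell_i+1}\square P_{\ell_i+1}$; since any subgrid is a retract (Lemma~\ref{lem:gridretract}) and the union of these tiles is the whole grid, Lemma~\ref{lem:disjointRetracts} gives $c_t^\infty(P_m\square P_n)\leq \lceil \tfrac{m}{\ell_i+1}\rceil\lceil \tfrac{n}{\ell_i+1}\rceil\cdot 2i$. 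The number of cops per tile is $2i$, and each tile covers an $(\ell_i+1)\times(\ell_i+1)$ block, so the cop-density is roughly $\frac{2i}{(\ell_i+1)^2}$. Since $\ell_i=\lceil(1-2^{-i})t-1/2\rceil$ grows toward $t$ while the cost $2i$ grows only logarithmically in the gap, one optimizes over $i$ to minimize $\frac{2i}{(\ell_i+1)^2}$. The main obstacle is this optimization: I expect the best choice to be a small constant value of $i$ (not $i\to\infty$), and the constant $\frac{64}{9}$ in the claimed bound strongly suggests $i=2$, where $\ell_2=\lceil \tfrac{3t}{4}-\tfrac12\rceil$ and $(\ell_2+1)^2\approx \tfrac{9t^2}{16}$, so that $\frac{2\cdot 2}{(\ell_2+1)^2}\approx \frac{16}{9t^2/16}=\frac{256}{9t^2}$; I will need to track the $+1$ terms carefully to land on $\frac{64}{9t^2+12t+4}$ rather than a looser constant.

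The delicate bookkeeping, and hence the real work, lies in handling the ceilings $\lceil m/(\ell_i+1)\rceil$ and the exact value of $\ell_i+1$ so that the final constant comes out to precisely $\frac{64}{9t^2+12t+4}$. Here I would use the hypotheses $t^2=o(m)$ and $t^2=o(n)$ to absorb the ceiling round-up into the asymptotics: $\lceil m/(\ell_i+1)\rceil \leq \frac{m}{\ell_i+1}(1+o(1))$, so the boundary tiles contribute negligibly. With $i=2$ the quantity $9t^2+12t+4=(3t+2)^2$ and $\ell_2+1$ is essentially $\tfrac{3t+2}{4}$, so $2i=4$ cops per tile over a block of side about $\tfrac{3t+2}{4}$ yields density $\frac{4}{((3t+2)/4)^2}=\frac{64}{(3t+2)^2}=\frac{64}{9t^2+12t+4}$, matching the target. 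Combining the two bounds gives the stated $\Theta(mn/t^2)$, completing the proof.
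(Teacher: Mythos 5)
Your proposal matches the paper's proof essentially verbatim: the lower bound is the same distance-$t$-domination count with $\ell_1$-balls of at most $2t(t+1)+1$ vertices, and the upper bound is the same partition into subgrids of side $\ell_2+1$ (the paper likewise fixes $i=2$ as optimal, notes $(\ell_2+1)^2 \geq \frac{9t^2+12t+4}{16}$, and combines Lemmas~\ref{lem:cartsubgrid}, \ref{lem:gridretract}, and \ref{lem:disjointRetracts} to get $4\cdot\frac{16mn}{9t^2+12t+4}=\frac{64mn}{9t^2+12t+4}$). One transient arithmetic slip: your line $\frac{2\cdot 2}{(\ell_2+1)^2}\approx\frac{256}{9t^2}$ should have numerator $4$, giving $\frac{64}{9t^2}$, but your final density computation $\frac{4}{\left((3t+2)/4\right)^2}=\frac{64}{(3t+2)^2}$ is correct and consistent with the paper.
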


\begin{proof}
We first prove the lower bound. There are at most $1+4+8+\ldots+4t=2t(t+1)+1$ vertices at distance $t$ or less from any cop and $mn$ vertices in total. Therefore, at least $\frac{mn}{2t^2+2t+1}$ cops are needed in order to distance $t$-dominate $P_m \square P_n$.

For the upper bound, we partition $P_m \square P_n$ into subgrids of size at most $P_{\ell_2+1} \square P_{\ell_2+1}$ (the value of $i=2$ gives the best upper bound). Observe that $$|P_{\ell_2+1} \square P_{\ell_2+1}|=(\ell_2+1)^2\geq \frac{9t^2+12t+4}{16},$$ and so $P_m \square P_n$ can be partitioned into $\frac{16mn}{9t^2+12t+4}$ subgrids of size at most $P_{\ell_2+1} \square P_{\ell_2+1}$. By Lemma~\ref{lem:cartsubgrid}, we have that $c_t^\infty(P_{\ell_2+1} \square P_{\ell_2+1}) \leq 4$. Since the capture time of any grid is monotonically decreasing in terms of its order, we have that $c_t^\infty(P_a \square P_b) \leq 4$ for all $a,b\leq \ell_2+1$. By Lemma~\ref{lem:gridretract}, we have that $P_a \square P_b$ is a retract of $P_m \square P_n$. Therefore, by Lemma~\ref{lem:disjointRetracts}, we have that $c_t^\infty (P_m \square P_n) \leq \frac{64mn}{9t^2+12t+4}$.
\end{proof}

We note that by an omitted argument, we may give a slight improvement on the multiplicative constants in Theorem~\ref{gridd} by partitioning the grid into $k$-neighbor sets rather than subgrids.

\section{Further Directions}\label{sec:conclusion}
We showed that the game is {\bf NP}-hard when $t$ is a fixed constant and {\bf EXPTIME}-complete when $t$ is sufficiently large. The exact complexity class of the game for all values of $t$ remains open. Is computing the eternal cop number {\bf EXPTIME}-complete for all $t?$ We note that solving this problem would solve the open problem of the exact complexity class of the Eternal Domination game (the case $t=1$), which is known to be {\bf NP}-hard~\cite{BDEMY17}. Our results strengthen the intuition that the Eternal Domination game is \textbf{EXPTIME}-complete.

There is more work to be done on Cops and Eternal Robbers for trees. In Corollary~\ref{cor:treesupper}, we gave an upper bound for the eternal cop number of trees that we conjecture to be tight.  We mention in this context the {\it Spy game}~\cite{CMM^+17}, where the attacker in this game moves like an agent on the graph much like in Cops and Eternal Robbers. In the Spy game, the guards must maintain that there is at least one guard at distance at most $d\geq 0$ from the spy (attacker) at the end of the guards' turn, and the spy moves at speed $s\geq 2$. In particular, the spy can only attack or move to a vertex at distance at most $s$ from its current position. In~\cite{CMM^+17}, the game was shown to be {\bf NP}-hard, paths were resolved, and almost tight bounds were given for cycles. In~\cite{CMNP18}, a polynomial-time algorithm for trees was obtained using Linear Programming and a fractional relaxation of the game, and some asymptotic bounds were obtained for grids.

For the Spy Game on trees, a fractional variant and Linear Program gave a polynomial-time algorithm. In that case, there exists an optimal strategy in trees in which there is a unique configuration for the guards for each position of the spy~\cite{CMNP18}. Cops and Eternal Robbers is different, however, since the robber chooses his position at the beginning of each play. Thus, the question of whether the complexity of the eternal cop number on trees is polynomial remains open.

\end{document}